\definecolor[named]{urlblue}{cmyk}{1,0.58,0,0.21}
\newcommand{\gurke}{
   \draw[draw=black,thick,fill=green,fill opacity=0.3,closed]
  (-1,0) .. (0,1) .. (1,0) .. (0.5,-2) .. (1,-4) .. (0,-5) .. (-1,-4) .. (-0.5,-2);
}
\newcommand{\randB}{
   \draw[draw=red,thick,dashed,closed]
  (-1.5,0) .. (0,1.5) .. (1.5,0) .. (1,-2) .. (1.5,-4.5) .. (0,-5.5) .. (-1.5,-4.5) .. (-1,-2);
}
\newcommand{\randA}{
   \draw[draw=red,thick,dashed,closed]
  (0,1.5) .. (2.5,-0.5) .. (4,-3) .. (0,-5.5) .. (-4,-3) .. (-2.5,-0.5);
}
\definecolor{darkpastelgreen}{rgb}{0.01, 0.75, 0.24}
\definecolor{royalpurple}{rgb}{0.47, 0.32, 0.66}
\definecolor{lime}{rgb}{0.75, 1.0, 0.0}
\definecolor{salmon}{rgb}{1.0, 0.55, 0.41}
\definecolor{orchid}{rgb}{0.85, 0.44, 0.84}
\definecolor{royalblue}{rgb}{0.0, 0.14, 0.4}
\definecolor{aquamarine}{rgb}{0.5, 1.0, 0.83}
\definecolor{wildstrawberry}{rgb}{1.0, 0.26, 0.64}
\definecolor{dandelion}{rgb}{0.94, 0.88, 0.19}
\definecolor{purple}{rgb}{0.63, 0.36, 0.94}
\definecolor{darkspringgreen}{rgb}{0.09, 0.45, 0.27}
\tikzstyle{color1}=[fill=red!80]
\tikzstyle{color2}=[fill=blue!80]
\tikzstyle{color3}=[fill=darkpastelgreen]
\tikzstyle{color4}=[fill=orange]
\tikzstyle{color5}=[fill=violet!80]
\tikzstyle{color6}=[fill=cyan!80]
\tikzstyle{color7}=[fill=yellow]
\tikzstyle{color8}=[fill=pink]
\tikzstyle{color9}=[fill=royalpurple]
\tikzstyle{color10}=[fill=lime]
\tikzstyle{color11}=[fill=salmon]
\tikzstyle{color12}=[fill=orchid]
\tikzstyle{color13}=[fill=royalblue]
\tikzstyle{color14}=[fill=aquamarine]
\tikzstyle{color15}=[fill=wildstrawberry]
\tikzstyle{lightcolor1}=[fill=red!30]
\tikzstyle{lightcolor2}=[fill=blue!30]
\tikzstyle{lightcolor3}=[fill=darkpastelgreen!30]
\tikzstyle{lightcolor4}=[fill=orange!30]
\tikzstyle{lightcolor5}=[fill=violet!30]
\tikzstyle{lightcolor6}=[fill=cyan!30]
\tikzstyle{lightcolor7}=[fill=yellow!30]
\tikzstyle{lightcolor8}=[fill=pink!40]
\tikzstyle{lightcolor9}=[fill=royalpurple!40]
\tikzstyle{lightcolor10}=[fill=lime!40]
\tikzstyle{lightcolor11}=[fill=salmon!40]
\tikzstyle{lightcolor12}=[fill=orchid!40]
\tikzstyle{lightcolor13}=[fill=royalblue!40]
\tikzstyle{lightcolor14}=[fill=aquamarine!40]
\tikzstyle{lightcolor15}=[fill=wildstrawberry!40]
\tikzstyle{emptyvertex}=[draw,circle,minimum size=7pt,inner sep=0pt]
\tikzstyle{midvertex}=[draw,fill=white,circle,minimum size=5pt,inner sep=0pt]
\tikzstyle{tinyvertex}=[draw,fill=white,circle,minimum size=3pt,inner sep=0pt]
\tikzstyle{thickedge}=[draw,gray!50,line width=4pt,-]
\newtheorem{theorem}{Theorem}[section]
\newtheorem{lemma}[theorem]{Lemma}
\theoremstyle{definition}
\newtheorem{definition}[theorem]{Definition}
\theoremstyle{remark}
\newtheorem{claim}{Claim}[section]
\newenvironment{claimproof}{\begin{proof}}{\end{proof}}
\newcommand{\WL}[2]{\chi_{\sf WL}^{#1}\![#2]}
\newcommand{\WLit}[3]{\chi_{#3,#2}^{#1}}
\newcommand{\CA}{{\mathcal A}}
\newcommand{\CB}{{\mathcal B}}
\newcommand{\CE}{{\mathcal E}}
\newcommand{\CH}{{\mathcal H}}
\newcommand{\CM}{{\mathcal M}}
\newcommand{\CO}{{\mathcal O}}
\newcommand{\CP}{{\mathcal P}}
\newcommand{\CQ}{{\mathcal Q}}
\newcommand{\CX}{{\mathcal X}}
\newcommand{\CZ}{{\mathcal Z}}
\newcommand{\NN}{{\mathbb N}}
\newcommand{\Fp}{{\mathfrak p}}
\renewcommand{\hat}{\widehat}
\DeclareMathOperator{\id}{id}
\DeclareMathOperator{\Iso}{Iso}
\DeclareMathOperator{\Aut}{Aut}
\DeclareMathOperator{\Sym}{Sym}
\DeclareMathOperator{\mgamma}{\widehat{\Gamma}}
\DeclareMathOperator{\cl}{cl}
\DeclareMathOperator{\atp}{atp}
\DeclareMathOperator{\fw}{fw}
\DeclareMathOperator{\bw}{bw}
\DeclareMathOperator{\argmin}{argmin}
\newcommand{\polylog}{\operatorname{polylog}}
\newcommand{\poly}{\operatorname{poly}}
\renewcommand{\phi}{\varphi}
\renewcommand{\epsilon}{\varepsilon}
\newcommand{\case}[1]{\par\medskip\noindent\textit{Case #1: }}
\newenvironment{cs}{
  \begin{description}
    \renewcommand{\case}[1]{\item[{\itshape\mdseries Case ##1:}]}
  }{
  \end{description}
}
\newcommand{\orcid}[1]{\href{https://orcid.org/#1}{\includegraphics[height=1.8ex]{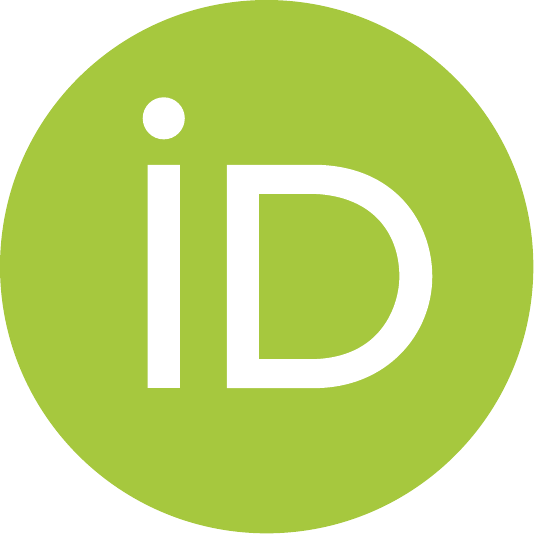}}}
\newcommand{\email}[1]{\href{mailto:#1}{\texttt{#1}}}
\title{Isomorphism Testing for Graphs Excluding Small Minors}
\author{
Martin Grohe \orcid{0000-0002-0292-9142}\\
RWTH Aachen University\\
\email{grohe@informatik.rwth-aachen.de}
\and
Daniel Neuen \orcid{0000-0002-4940-0318}\\
Simon Fraser University\\
\email{dneuen@sfu.ca}
\and
Daniel Wiebking\\
RWTH Aachen University\\
\email{wiebking@informatik.rwth-aachen.de}
}
\begin{document}

\maketitle

\begin{abstract}
 We prove that there is a graph isomorphism test running in time $n^{\operatorname{polylog}(h)}$ on $n$-vertex graphs excluding some $h$-vertex graph as a minor.
 Previously known bounds were $n^{\operatorname{poly}(h)}$ (Ponomarenko, 1988) and $n^{\operatorname{polylog}(n)}$ (Babai, STOC 2016).
 For the algorithm we combine recent advances in the group-theoretic graph isomorphism machinery with new graph-theoretic arguments.
\end{abstract}

\section{Introduction}

Determining the computational complexity of the Graph Isomorphism
Problem (GI) is one of best-known open problems in theoretical
computer science. The problem is obviously in NP, but neither known to
be NP-complete nor known to be solvable in polynomial time. In a
recent breakthrough result, Babai \cite{Babai16} presented a
quasipolynomial-time algorithm (i.e., an algorithm running in time
$n^{\polylog(n)}$) deciding isomorphism of two graphs, significantly
improving over the best previous algorithm running in time
$n^{\CO(\sqrt{n / \log n})}$ \cite{BabaiKL83}.  For his algorithm,
Babai greatly extends the group-theoretic isomorphism machinery dating
back to Luks \cite{Luks82} as well as our understanding of
combinatorial methods like the Weisfeiler-Leman algorithm (see, e.g.,
\cite{CaiFI92,WeisfeilerL68}).  Still, the question of whether the
Graph Isomorphism Problem can be solved in polynomial time remains
wide open.

Polynomial-time algorithms are known for restrictions of the Graph
Isomorphism Problem to several important graph classes
(e.g., \cite{FilottiM80,GroheM15,GroheS15,HopcroftT72,LokshtanovPPS17,Luks82,Miller80,Ponomarenko91}). In
particular, Luks~\cite{Luks82} gave an isomorphism algorithm running
in time $n^{\CO(d)}$ on input graphs of maximum degree $d$.  Building
on Luks's techniques and refinements due to Miller~\cite{Miller83a},
Ponomarenko~\cite{Ponomarenko91} designed an isomorphism test running
in time $n^{\poly(h)}$ for all graph classes that exclude a fixed
graph with $h$ vertices as a minor. Finally, it was shown that the
polynomial-time bound can be pushed to graph classes excluding a fixed
topological subgraph \cite{GroheM15}.

For the algorithms mentioned above the exponent of the polynomial always depends at least linearly on the parameter in question.
In light of Babai's quasi\-poly\-no\-mi\-al-time algorithm it seems natural to ask for which parameters these dependencies can be improved to polylogarithmic.

In \cite{GroheNS18} it was shown that Luks's original isomorphism test for bounded-degree graphs
can be combined with Babai's group-theoretic techniques.
By using a novel normalization technique, Schweitzer and the first two authors of this paper provided an isomorphism algorithm for graphs of maximum degree $d$ running in time $n^{\polylog(d)}$.
Recently, it was shown that the group-theoretic techniques used for bounded-degree graphs
can be extended to isomorphism testing of hypergraphs \cite{Neuen22b}.
This key subroutine finally led to an isomorphism test for graphs of Euler genus $g$ running in time $n^{\polylog(g)}$.
Another branch of research deals with the question how Babai's and Luks's group-theoretic techniques
can be combined with graph decomposition techniques \cite{Wiebking20} (see also \cite{GroheNSW20,SchweitzerW19}).
This series of papers finally led to an isomorphism test for graphs of tree-width at most $k$ running in time $n^{\polylog(k)}$.

In this work, we assemble the recent advances in the group-theoretic machinery developed in \cite{GroheNS18,Neuen22b,Wiebking20}
and combine it with new structural results for graphs with excluded
minors. Recall that a graph $H$ is a \emph{minor} of a graph $G$ if
$H$ is isomorphic to a graph that can be obtained from a subgraph of
$G$ by contracting edges. If $H$ is not a minor of $G$, we say that
$G$ \emph{excludes} $H$ as a minor. For example, all planar graphs
exclude the complete graph $K_5$ and the complete bipartite graph
$K_{3,3}$ as a minor, and in fact this characterizes the planar graphs
\cite{Wagner37}. Other natural classes of graphs excluding some fixed
graph as a minor are classes of bounded genus or bounded tree-width.

We present a new isomorphism test for graph classes that exclude a
fixed graph as a minor, improving the previously best algorithm for
this problem due to Ponomarenko \cite{Ponomarenko91} running in time $n^{\poly(h)}$.

\begin{theorem}
 There exists an algorithm deciding graph isomorphism in time $n^{\polylog(h)}$ on $n$-vertex graphs that exclude some $h$-vertex graph as a minor.
\end{theorem}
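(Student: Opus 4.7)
The plan is to combine the Robertson--Seymour graph minor structure theorem with the recent group-theoretic isomorphism machinery. For every $h$-vertex graph $H$, every $H$-minor-free graph $G$ admits a tree decomposition whose torsos are $k$-almost-embeddable for some $k=\poly(h)$: each torso is obtained from a graph drawn on a surface of Euler genus at most $k$ by adding at most $k$ apex vertices and attaching at most $k$ vortices of depth at most $k$, and the adhesions between adjacent bags can likewise be bounded by $\poly(h)$. The first step is to compute such a decomposition, which can be done in polynomial time using a constructive version of the structure theorem.

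Given the decomposition, I would design a recursive isomorphism test along the tree, in the spirit of the tree-decomposition framework developed in \cite{Wiebking19,GroheNSW18}. The recursion assigns to every subtree rooted at a bag $B$ a canonical isomorphism invariant of the corresponding subgraph, parametrised by the colouring of the adhesion $S$ between $B$ and its parent. After balancing the decomposition to depth $O(\log n)$, the cost per level is dominated by an isomorphism test on the torsos themselves, together with a Luks-style computation over the symmetric group on $S$, whose degree is $\poly(h)$. To keep the latter within $n^{\polylog(h)}$, one uses the group-theoretic machinery for actions on small sets from \cite{GroheNS18,Neuen20}; this is the same mechanism that turns the tree-width-$k$ result of \cite{Wiebking19} into an $n^{\polylog(k)}$ algorithm.

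The central subroutine is then an isomorphism test for $k$-almost-embeddable graphs running in time $n^{\polylog(k)}$. I would build this on top of the $n^{\polylog(g)}$ algorithm of \cite{Neuen20} for graphs of Euler genus $g$. Apex vertices can be pinned down up to automorphism by one layer of individualisation; the essential point is that this individualisation is fed into the hypergraph isomorphism engine of \cite{Neuen20} rather than being enumerated, so the cost stays $n^{\polylog(k)}$. Each vortex has bounded path-width over its boundary society and can therefore be encoded as a hypergraph with hyperedges of size $\poly(k)$, which the same hypergraph framework handles. The nontrivial content is to assemble these ingredients coherently with the canonical combinatorial embedding produced by the bounded-genus routine.

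I expect the principal obstacle to be exactly this integration. First, the individualisation used for apex vertices and for vortex boundaries must interact cleanly with the Luks-type recursion inside the bounded-genus subroutine; otherwise the cost blows up to $n^{\poly(k)}$. Second, the adhesions of the Robertson--Seymour decomposition do not respect the embedded part of a torso in any a priori way, and one needs a canonical separation of ``embedded'' from ``apex'' portions of each adhesion in order to recurse, analogous to the treatment of clique separators in \cite{Wiebking19,GroheNSW18}. Once these pieces are in place, the total running time is $n^{\polylog(k)}=n^{\polylog(\poly(h))}=n^{\polylog(h)}$, as required.
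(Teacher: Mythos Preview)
Your proposal has a fundamental gap: the Robertson--Seymour structure decomposition is not isomorphism-invariant. Different runs on isomorphic inputs may produce structurally different tree decompositions, different choices of apex sets, and different embeddings, so you cannot simply compute the decomposition on each graph and compare along it. Making the decomposition canonical is itself a hard problem and is not addressed in your outline; without it, the entire recursion collapses. The paper explicitly avoids the structure theorem for this reason and instead builds an inherently isomorphism-invariant decomposition via the closure operator $\cl_t^{G}$: one finds, using $2$-WL, an invariant set $X$ such that $(\Aut(G))_v[X]\in\mgamma_t$ for $t=\poly(h)$ after fixing a single vertex (Theorem~\ref{thm:initial-color}), grows $X$ to $D=\cl_t^{G}(X)$, and proves that every component of $G-D$ has a neighbourhood of size $<h$ (Theorem~\ref{thm:small-separator-for-t-cr-bounded-closure}). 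This yields an invariant tree decomposition of adhesion $<h$ whose bags carry $\mgamma_t$-structure, and the only deep graph theory used is the Kostochka--Thomason average-degree bound.

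There are further issues even if canonicity were granted. You assert $k=\poly(h)$ for the structure-theorem parameters; the known explicit bounds are far from polynomial, and this alone would destroy the $n^{\polylog(h)}$ target. Your treatment of apices (``one layer of individualisation \ldots fed into the hypergraph engine'') is where the real cost hides: naively this is an $n^{k}$ branching, and the hypergraph machinery of \cite{Neuen20} does not let you avoid it without already having a $\mgamma_d$-group on the apex set, which you do not. Finally, the bounded-genus algorithm of \cite{Neuen20} does not produce a ``canonical combinatorial embedding'' you could extend; it is itself based on the $t$-CR-bounded framework, so the subroutine you plan to build on does not exist in the form you describe.
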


Note that a graph $G$ excludes some $h$-vertex graph as a minor if and
only if $G$ excludes the complete graph $K_h$ on $h$ vertices as minor.
Hence, for the remainder of this work, we restrict ourselves to the case where the input graphs exclude $K_h$ as minor.

The maximum $h$ such that $K_h$ is a minor of $G$ is known as the \emph{Hadwiger number} $\operatorname{hd}(G)$ of $G$.
Thus, an equivalent formulation of our result is that we design an isomorphism test for $n$-vertex
graphs running in time $n^{\polylog(\operatorname{hd}(G))}$.

Our proof heavily builds on the recently developed group-theoretic machinery
(the dependencies on the main previous results are shown in
Figure~\ref{fig:dep}). The main technical contributions of the present
paper are of a graph-theoretic nature. However, we are not using
Robertson-Seymour-style structure theory for graphs with excluded
minors~\cite{gmseries}, as one may expect given the previous results
for graphs of bounded genus and of bounded tree-width. Instead, our
results can be viewed as a structural theory for the automorphism
groups of such graphs; we find that graphs excluding $K_h$ as a minor
have an isomorphism-invariant decomposition into pieces whose
automorphism groups are similar to those of bounded-degree graphs
(Theorem~\ref{thm:structure-aut} is the precise statement). This
structural result may be of independent interest.
The only deeper graph-theoretic result we use is Kostochka's and
Thomason's theorem stating that graphs excluding $K_h$ as a minor have an
average degree of $\mathcal O(h \sqrt{\log h})$ \cite{Kostochka84,Thomason84}.

On a high level, our algorithm follows a decomposition strategy.
Given two graphs $G_1$ and $G_2$ excluding $K_h$ as a minor, the goal is to find isomorphism-invariant subsets $D_1\subseteq V(G_1)$ and $D_2\subseteq V(G_2)$ such that
one can control the interplay between the subsets and its complement and one can
significantly restrict the graph automorphisms on the two subsets.
Note that it is crucial to define the subsets $D_1$ and $D_2$ in an isomorphism-invariant fashion as to not compare two graphs that are decomposed in structurally different ways.
To capture the restrictions on the automorphism group, we build on the well-known class of $\mgamma_d$-groups, which are groups all whose composition factors are isomorphic to a subgroup of $S_d$ (the symmetric group on $d$ points).
However, to prove the restrictions on the automorphism group, we mostly use combinatorial and graph-theoretic arguments.

In particular, the algorithm heavily uses the $2$-dimensional Weisfeiler-Leman algorithm, a standard combinatorial algorithm which computes an isomorphism-invariant coloring of pairs of vertices.
In a lengthy case-by-case analysis depending on the color patterns computed by the $2$-dimensional Weisfeiler-Leman algorithm,
we are able to find initial isomorphism-invariant subsets $X_1\subseteq V(G_1) $ and $X_2\subseteq V(G_2)$ such that $(\Aut(G_i))_{v_i}[X_i]$ (the automorphism group of $G_i$ restricted to $X_i$ after fixing some vertex $v_i \in X_i$) forms a $\mgamma_t$-group where $t \in \CO(h^3 \log h)$.

In order to get control of the interplay between the subsets and their complement, we rely on a closure operator that builds on the notion of \emph{$t$-CR-bounded} graphs which were originally introduced by Ponomarenko in \cite{Ponomarenko89}\footnote{In \cite{Ponomarenko89} $t$-CR-bounded graphs are referred to as graphs with property $\Pi(0,t)$.}, and have been recently used to obtain an $n^{\polylog(g)}$ isomorphism test for graphs of Euler genus at most $g$ \cite{Neuen22b}.
Intuitively speaking, a graph $G$ is \emph{$t$-CR-bounded} if an initially uniform vertex-coloring $\chi$ can be turned into a discrete coloring (i.e., a coloring where every vertex has its own color) by repeatedly (a) applying the standard Color Refinement algorithm, and (b) splitting all color classes of size at most $t$.
We define the closure of a set $X_i$ (with respect to parameter $t$) to be the set $D_i$ of all vertices appearing in a singleton color class after (i) individualizing all vertices from the set $X_i$, and (ii) applying the above $t$-CR procedure.
This operator increases the subsets $X_1$ and $X_2$ in an isomorphism-invariant fashion
and leads to (possibly larger) sets $D_i \coloneqq \cl_t^{G_i}(X_i) \supseteq X_i$, $i \in \{1,2\}$.
A feature of this operator, which in a basic form was already observed in \cite{Weisfeiler76}, is that a given
$\mgamma_t$-group defined on the initial set $X_i$ can be extended to a $\mgamma_t$-group defined on the superset $D_i$ (see Theorem \ref{thm:compute-isomorphisms-t-closure}).
This provides us a $\mgamma_t$-group on the closure $D_i$ (after fixing a point) which allows the use of the group-theoretic techniques from \cite{GroheNS18,Neuen22b}.

The second main feature of the closure operator is that, in a graph $G$ that excludes an $h$-vertex graph as a minor,
the closure $D \coloneqq \cl_t^{G}(X)$ of any set $X \subseteq V(G)$ can only stop to grow at a separator of small size.
More precisely, we show that for every vertex set $Z$ of a connected component of $G - D$, it holds that $|N_{G}(Z)| < h$.
This key result shows that the interplay between $D$ and its
complement in $G$ is simple and allows for the application of the
group-theoretic decomposition framework from
\cite{GroheNSW20,SchweitzerW19,Wiebking20}.

We remark that our proof strategy is quite different from that used by
Ponomarenko \cite{Ponomarenko91} in his isomorphism test for graphs with excluded minors,
because we could not improve Miller's \cite{Miller83a} ``tower-of-$\mgamma_d$-groups''
technique to meet our quasipolynomial time demands.

\paragraph{Organization of the Paper.}
After introducing some basic preliminaries in the next section, we review the recent advances on the group-theoretic isomorphism machinery from \cite{Neuen22b,Wiebking20} in Section \ref{sec:group-machinery}.
Then, the main two technical theorems are presented in Section \ref{sec:exploit-minor}.
Finally, the complete algorithm is assembled in Section \ref{sec:isomorphism}.

\section{Preliminaries}
\label{sec:prelim}

\subsection{Graphs}

A \emph{graph} is a pair $G = (V(G),E(G))$ consisting of a \emph{vertex set} $V(G)$ and an \emph{edge set} $E(G) \subseteq
\binom{V(G)}{2}\coloneqq \big\{\{u,v\} \, \big\vert \, u,v \in V(G),u\neq v\big\}$.
All graphs considered in this paper are finite, undirected and simple (i.e., they contain no loops or multiple edges).
For $v,w \in V$, we also write $vw$ as a shorthand for $\{v,w\}$.
The \emph{neighborhood} of~$v$ is denoted by~$N_G(v)$.
The \emph{degree} of $v$, denoted by $\deg_G(v)$, is the number of edges incident with $v$, i.e., $\deg_G(v)=|N_G(v)|$.
For $X \subseteq V(G)$, we define $N_G(X) \coloneqq \left(\bigcup_{v \in X}N(v)\right) \setminus X$.
If the graph $G$ is clear from context, we usually omit the index and simply write $N(v)$, $\deg(v)$ and $N(X)$.

We write $K_n$ to denote the complete graph on $n$ vertices.
A graph is \emph{regular} if every vertex has the same degree.
A bipartite graph $G=(V_1,V_2,E)$ is called \emph{$(d_1,d_2)$-biregular} if all vertices $v_i \in V_i$ have degree $d_i$ for both $i \in \{1,2\}$.
In this case $d_1 \cdot |V_1| = d_2 \cdot |V_2| = |E|$.
By a double edge counting argument, for each subset $S \subseteq V_i$, $i\in\{1,2\}$, it holds that $|S| \cdot d_i \leq |N_G(S)| \cdot d_{3-i}$.
A bipartite graph is \emph{biregular}, if there are $d_1,d_2 \in \NN$ such that $G$ is $(d_1,d_2)$-biregular.
Each biregular graph satisfies the Hall condition, i.e., for all $S \subseteq V_1$ it holds $|S| \leq |N_G(S)|$ (assuming $|V_1| \leq |V_2|$).
Thus, by Hall's Marriage Theorem, each biregular graph contains a matching of size $\min(|V_1|,|V_2|)$.

A \emph{path of length $k$} from $v$ to $w$ is a sequence of distinct vertices $v = u_0,u_1,\dots,u_k = w$ such that $u_{i-1}u_i \in E(G)$ for all $i \in [k] \coloneqq \{1,\dots,k\}$.
For two sets $A,B\subseteq V(G)$, we denote by $G[A,B]$ the graph with vertex set $A\cup B$
and edge set $\{vw\in E(G)\mid v\in A,w\in B\}$.
For a set $A \subseteq V(G)$, we denote by $G[A]\coloneqq G[A,A]$ the \emph{induced subgraph} of $G$ on the vertex set $A$.
Also, we denote by $G - A$ the subgraph induced by the complement of $A$, that is, the graph $G - A \coloneqq G[V(G) \setminus A]$.
A graph $H$ is a \emph{subgraph} of $G$, denoted by $H \subseteq G$, if $V(H) \subseteq V(G)$ and $E(H) \subseteq E(G)$. 
A set $S \subseteq V(G)$ is a \emph{separator} of $G$ if $G - S$ has more connected components than $G$.
A \emph{$k$-separator} of $G$ is a separator of $G$ of size $k$.

An \emph{isomorphism} from $G$ to a graph $H$ is a bijection $\varphi\colon V(G) \rightarrow V(H)$ that respects the edge relation, that is, for all~$v,w \in V(G)$, it holds that~$vw \in E(G)$ if and only if $\varphi(v)\varphi(w) \in E(H)$.
Two graphs $G$ and $H$ are \emph{isomorphic}, written $G \cong H$, if there is an isomorphism from~$G$ to~$H$.
We write $\varphi\colon G\cong H$ to denote that $\varphi$ is an isomorphism from $G$ to $H$.
Also, $\Iso(G,H)$ denotes the set of all isomorphisms from $G$ to $H$.
The automorphism group of $G$ is $\Aut(G) \coloneqq \Iso(G,G)$.
Observe that, if $\Iso(G,H) \neq \emptyset$, it holds that $\Iso(G,H) = \Aut(G)\varphi \coloneqq \{\gamma\varphi \mid \gamma \in \Aut(G)\}$ for every isomorphism $\varphi \in \Iso(G,H)$.

A \emph{vertex-colored graph} is a tuple $(G,\chi)$ where $G$ is a graph and $\chi\colon V(G) \rightarrow C$ is a mapping into some set $C$ of colors, called \emph{vertex-coloring}.
Similarly, an \emph{arc-colored graph} is a tuple $(G,\chi)$, where $G$ is a graph and $\chi\colon\{(u,v) \mid \{u,v\} \in E(G)\} \rightarrow C$ is a mapping into some color set $C$, called \emph{arc-coloring}.
We also consider vertex- and arc-colored graphs $(G,\chi_V,\chi_E)$ where $\chi_V$ is a vertex-coloring
and $\chi_E$ is an arc-coloring.
Also, a \emph{pair-colored graph} is a tuple $(G,\chi)$, where $G$ is a graph and $\chi\colon (V(G))^{2} \rightarrow C$ is a mapping into some color set $C$.
Typically, $C$ is chosen to be an initial segment $[n]$ of the natural numbers.
Isomorphisms between vertex-, arc- and pair-colored graphs have to respect the colors of the vertices, arcs and pairs.

\subsection{Graph Minors and Topological Subgraphs}

Let $G$ be a graph.
A graph $H$ is a \emph{minor} of $G$ if $H$ can be obtained from $G$ by deleting vertices and edges of $G$ as well as contracting edges of $G$.
More formally, let $\CB = \{B_1,\dots,B_h\}$ be a partition of $V(G)$ such that $G[B_i]$ is connected for all $i \in [h]$.
We define $G/\CB$ to be the graph with vertex set $V(G/\CB) \coloneqq \CB$ and
\[E(G/\CB) \coloneqq \{BB' \mid \exists v \in B, v' \in B' \colon vv' \in E(G)\}.\]
A graph $H$ is a minor of $G$ if there is a partition $\CB = \{B_1,\dots,B_h\}$ of connected subsets $B_i \subseteq V(G)$ such that $H$ is isomorphic to a subgraph of $G/\CB$.
A graph $G$ \emph{excludes $H$ as a minor} if $H$ is not a minor of $G$.
The following theorem states the well-known fact that graphs excluding small minors have bounded average degree.
This was observed by Mader before Kostochka and Thomason independently proved an optimal bound on the average degree.

\begin{theorem}[\cite{Mader67,Kostochka84,Thomason84}]
 \label{thm:average-degree-excluded-minor}
 There is an absolute constant $ a \geq 1$ such that for every $h \geq 1$ and every graph $G$ that excludes $K_h$ as a minor, it holds that
 \[\frac{1}{|V(G)|}\sum_{v \in V(G)} \deg_G(v) \leq a h \sqrt{\log h}.\]
\end{theorem}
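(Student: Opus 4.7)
The plan is to proceed by contraposition: I will show that there is an absolute constant $c$ such that every graph $G$ with average degree at least $c h \log h$ contains $K_h$ as a minor, whence the theorem holds with $a := c$. The argument is classical (Mader, Kostochka, Thomason), and I sketch one route below.

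The first step is a minimum-degree reduction. If $G$ has average degree at least $2d$, then iteratively deleting vertices of degree less than $d$ yields a nonempty induced subgraph $H$ with $\delta(H) \geq d$: each deletion removes fewer than $d$ edges but exactly one vertex, so the quantity $2|E|/|V|$ stays above $2d$ throughout, and in particular the process cannot exhaust the graph. Since every minor of $H$ is a minor of $G$, it suffices to produce a $K_h$-minor in any graph of minimum degree $\Omega(h \log h)$.

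The second and main step is to exhibit that minor in such a graph $H$. The cleanest quantitative route is Thomason's probabilistic contraction argument: construct a partition of a subset of $V(H)$ into $h$ connected branch sets by a suitable random process, and estimate the probability that every pair of branch sets is joined by at least one edge. A union bound over the $\binom{h}{2}$ pairs, combined with the minimum-degree hypothesis and Chernoff-style concentration, shows that when $\delta(H) = \Omega(h \log h)$ the probability of success is positive, so some realization of the partition yields a $K_h$-minor. An alternative route, closer to Mader's original inductive argument, performs a density-increment step: extract a densely-connected piece, contract it to a single high-degree vertex of a smaller auxiliary graph that still has large minimum degree, and recurse on $h-1$.

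The main obstacle is the quantitative calibration. Mader's original inductive argument yields only the exponential bound $2^{h-2}$, which is far too weak, so some genuine probabilistic or density-increment work is required to bring the bound down to $\CO(h \log h)$. The tight bound $\CO(h \sqrt{\log h})$ of Kostochka and Thomason is sharper still and requires a substantially more delicate random-partition analysis, but it is not needed for our application.
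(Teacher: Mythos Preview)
The paper does not prove this theorem: it is quoted as a black-box result from the cited references \cite{mader1967homomorphieeigenschaften,Kostochka84,thomason1984extremal}, so there is no ``paper's own proof'' to compare against. Your sketch is a faithful outline of the classical route (minimum-degree reduction followed by a Kostochka/Thomason-style argument), and the minimum-degree reduction step is stated correctly; the second step is of course where all the work lies and your description there is only a high-level pointer to the literature, but that is appropriate given that the paper itself treats the result as a citation.
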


A graph $H$ is a \emph{topological subgraph} of $G$ if $H$ can be obtained from $G$ by deleting edges, deleting vertices and dissolving degree 2
vertices (which means deleting the vertex and making its two neighbors adjacent).
More formally, we say that $H$ is a topological subgraph of $G$ if a subdivision of $H$ is a subgraph of $G$
(a subdivision of a graph $H$ is obtained by replacing each edge of $H$ by a path of length at least 1).
Note that every topological subgraph of $G$ is also a minor of $G$.

\subsection{Weisfeiler-Leman Algorithm}

The Weisfeiler-Leman algorithm, originally introduced by Weisfeiler and Leman in its $2$-di\-men\-sion\-al form \cite{WeisfeilerL68}, forms one of the most fundamental subroutines in the context of isomorphism testing.
The algorithm presented in this work builds on the $1$-dimensional Weisfeiler-Leman algorithm, also known as the \emph{Color Refinement algorithm}, as well as the $2$-dimensional Weisfeiler-Leman algorithm.

Let~$\chi_1,\chi_2\colon V^k \rightarrow C$ be colorings of the~$k$-tuples of vertices of~$G$, where~$C$ is some finite set of colors. 
We say $\chi_1$ \emph{refines} $\chi_2$, denoted $\chi_1 \preceq \chi_2$, if $\chi_1(\bar v) = \chi_1(\bar w)$ implies $\chi_2(\bar v) = \chi_2(\bar w)$ for all $\bar v,\bar w \in V^{k}$.
The two colorings $\chi_1$ and $\chi_2$ are \emph{equivalent}, denoted $\chi_1 \equiv \chi_2$,  if $\chi_1 \preceq \chi_2$ and $\chi_2 \preceq \chi_1$.

The \emph{Color Refinement algorithm} (i.e., the $1$-dimensional Weisfeiler-Leman algorithm) is a procedure that, given a graph $G$, iteratively computes an isomorphism-invariant coloring of the vertices of $G$.
In this work, we actually require an extension of the Color Refinement algorithm that apart from vertex-colors also takes arc-colors into account.
We describe the mechanisms of the algorithm in the following.
For a vertex- and arc-colored graph $(G,\chi_V,\chi_E)$ we define $\WLit{1}{0}{G} \coloneqq \chi_V$ to be the initial coloring for the algorithm.
This coloring is iteratively refined by defining $\WLit{1}{i+1}{G}(v) \coloneqq (\WLit{1}{i}{G}(v), \CM_i(v))$ where
\[\CM_i(v) \coloneqq\big\{\!\!\big\{\big(\WLit{1}{i}{G}(w),\chi_E(v,w),\chi_E(w,v)\big) \;\big\vert\; w \in N_G(v) \big\}\!\!\big\}\]
(and $\{\!\!\{ \dots \}\!\!\}$ denotes a multiset).
By definition, $\WLit{1}{i+1}{G} \preceq \WLit{1}{i}{G}$ for all $i \geq 0$.
Thus, there is a minimal~$i$ such that $\WLit{1}{i+1}{G}$ is equivalent to $\WLit{1}{i}{G}$.
For this value of~$i$ we call the coloring~$\WLit{1}{i}{G}$ the \emph{stable} coloring of~$G$ and denote it by $\WL{1}{G}$.
The Color Refinement algorithm takes as input a vertex- and arc-colored graph $(G,\chi_V,\chi_E)$ and returns (a coloring that is equivalent to) $\WL{1}{G}$.
The procedure can be implemented in time $\CO((m+n)\log n)$ (see, e.g., \cite{BerkholzBG17}).

Next, we define the \emph{$2$-dimensional Weisfeiler-Leman algorithm}.
For a vertex-colored graph $(G,\chi_V)$ let $\WLit{2}{0}{G}\colon (V(G))^{2} \rightarrow C$ be the coloring where each pair is colored with the isomorphism type of its underlying ordered subgraph.
More formally, $\WLit{2}{0}{G}(v_1,v_2) = \WLit{2}{0}{G}(v_1',v_2')$ if and only if
$\chi_V(v_i)= \chi_V(v_i')$ for both $i \in \{1,2\}$, $v_1 = v_2 \Leftrightarrow v_1'= v_2'$ and $v_1v_2 \in E(G) \Leftrightarrow v_1'v_2' \in E(G)$.
We then recursively define the coloring~$\WLit{2}{i}{G}$ obtained after $i$ rounds of the algorithm.
Let $\WLit{2}{i+1}{G}(v_1,v_2) \coloneqq (\WLit{2}{i}{G}(v_1,v_2), \CM_i(v_1,v_2))$
where
\[\CM_i(v_1,v_2) \coloneqq \big\{\!\!\big\{\big(\WLit{2}{i}{G}(v_1,w),\WLit{2}{i}{G}(w,v_2)\big) \, \big\vert \, w \in V(G) \big\}\!\!\big\}.\]
Again, there is a minimal~$i$ such that $\WLit{2}{i+1}{G}$ is equivalent to $\WLit{2}{i}{G}$ and for this $i$ the coloring $\WL{2}{G} \coloneqq \WLit{2}{i}{G}$ is the \emph{stable} coloring of~$G$.

Note that the algorithm can easily be extended to arc-colored and pair-colored graphs by modifying the definition of the initial coloring $\WLit{2}{0}{G}$ accordingly.
However, in contrast to the Color Refinement algorithm, the $2$-dimensional Weisfeiler-Leman algorithm is only applied to vertex-colored graphs throughout this paper.

The $2$-dimensional Weisfeiler-Leman algorithm takes as input a (vertex-, arc- or pair-)colored graph $G$ and returns (a coloring that is equivalent to) $\WL{2}{G}$.
This can be implemented in time $\CO(n^3\log n)$ (see \cite{ImmermanL90}).

\subsection{Group Theory}

In this subsection, we introduce the group-theoretic notions required in this work.
For a general background on group theory we refer to \cite{Rotman99}, whereas background on permutation groups can be found in \cite{DixonM96}.

\paragraph{Permutation Groups.}

A \emph{permutation group} acting on a set $\Omega$ is a subgroup $\Gamma \leq \Sym(\Omega)$ of the symmetric group.
The size of the permutation domain $\Omega$ is called the \emph{degree} of $\Gamma$.
If $\Omega = [n]$, then we also write $S_n$ instead of $\Sym(\Omega)$.
For $\gamma \in \Gamma$ and $\alpha \in \Omega$ we denote by $\alpha^{\gamma}$ the image of $\alpha$ under the permutation $\gamma$.
The set $\alpha^{\Gamma} \coloneqq \{\alpha^{\gamma} \mid \gamma \in \Gamma\}$ is the \emph{orbit} of $\alpha$.

For $\alpha \in \Omega$ the group $\Gamma_\alpha \coloneqq \{\gamma \in \Gamma \mid \alpha^{\gamma} = \alpha\} \leq \Gamma$ is the \emph{stabilizer} of $\alpha$ in $\Gamma$.
The \emph{pointwise stabilizer} of a set $A \subseteq \Omega$ is the subgroup $\Gamma_{(A)} \coloneqq \{\gamma \in \Gamma \mid\forall \alpha \in A\colon \alpha^{\gamma}= \alpha \}$.
For $A \subseteq \Omega$ and $\gamma \in \Gamma$ let $A^{\gamma} \coloneqq \{\alpha^{\gamma} \mid \alpha \in A\}$.
The set $A$ is \emph{$\Gamma$-invariant} if $A^{\gamma} = A$ for all $\gamma \in \Gamma$.

For $A \subseteq \Omega$ and a bijection $\theta\colon \Omega \rightarrow \Omega'$ we denote by $\theta[A]$ the restriction of $\theta$ to the domain $A$.
For a $\Gamma$-invariant set $A \subseteq \Omega$, we denote by $\Gamma[A] \coloneqq \{\gamma[A] \mid \gamma \in \Gamma\}$ the induced action of $\Gamma$ on $A$, i.e., the group obtained from $\Gamma$ by restricting all permutations to $A$.
More generally, for every set $\Lambda$ of bijections with domain $\Omega$, we denote by $\Lambda[A] \coloneqq \{\theta[A] \mid \theta \in \Lambda\}$.

Let $\Gamma \leq \Sym(\Omega)$ and $\Gamma' \leq \Sym(\Omega')$.
A \emph{homomorphism} is a mapping $\varphi\colon \Gamma \rightarrow \Gamma'$ such that $\varphi(\gamma)\varphi(\delta) = \varphi(\gamma\delta)$ for all $\gamma,\delta \in \Gamma$.
A bijective homomorphism is also called \emph{isomorphism}.
For $\gamma \in \Gamma$ we denote by $\gamma^{\varphi}$ the $\varphi$-image of $\gamma$.
Similarly, for $\Delta \leq \Gamma$, we denote by $\Delta^{\varphi}$ the $\varphi$-image of $\Delta$ (note that $\Delta^{\varphi}$ is a subgroup of $\Gamma'$).

\paragraph{Algorithms for Permutation Groups.}

We review some basic facts about algorithms for permutation groups.
For detailed information we refer to \cite{Seress03}.

In order to perform computational tasks for permutation groups efficiently the groups are represented by generating sets of small size.
Indeed, most algorithms are based on so-called strong generating sets,
which can be chosen of size quadratic in the size of the permutation domain of the group and can be computed in polynomial time given an arbitrary generating set (see, e.g., \cite{Seress03}).

\begin{theorem}[cf.\ \cite{Seress03}] 
 \label{thm:permutation-group-library}
 Let $\Gamma \leq \Sym(\Omega)$ and let $S$ be a generating set for $\Gamma$.
 Then the following tasks can be performed in time polynomial in $n$ and $|S|$:
 \begin{enumerate}
  \item compute the order of $\Gamma$,
  \item given $\gamma \in \Sym(\Omega)$, test whether $\gamma \in \Gamma$,
  \item compute the orbits of $\Gamma$, and
  \item given $A \subseteq \Omega$, compute a generating set for $\Gamma_{(A)}$.
 \end{enumerate}
\end{theorem}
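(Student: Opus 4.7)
The plan is to reduce all four tasks to the Schreier-Sims construction of a strong generating set and a stabilizer chain. Task (3) is conceptually the simplest and serves as a primitive for the other three: the orbits of $\Gamma$ are computed by a straightforward breadth-first search in the directed graph on $\Omega$ whose arcs are $\{(\alpha,\alpha^{\sigma}) \mid \alpha \in \Omega,\, \sigma \in S\}$, running in time $O(|S|\cdot n)$ and using only the given generating set $S$.

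For the remaining tasks I would fix a base $B = (\alpha_1,\dots,\alpha_m) \subseteq \Omega$ with $\Gamma_{(B)} = \{\id\}$ and build the stabilizer chain $\Gamma = \Gamma^{(0)} \geq \Gamma^{(1)} \geq \dots \geq \Gamma^{(m)} = \{\id\}$ defined by $\Gamma^{(i)} \coloneqq \Gamma_{(\{\alpha_1,\dots,\alpha_i\})}$. The Schreier-Sims algorithm constructs, level by level, a generating set $S^{(i)}$ for $\Gamma^{(i)}$ together with a transversal $T_i$ for the cosets $\Gamma^{(i-1)}/\Gamma^{(i)}$, which by the orbit-stabilizer theorem is indexed by the orbit $\alpha_i^{\Gamma^{(i-1)}}$ (computable via the BFS of the previous paragraph applied to $S^{(i-1)}$). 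The key combinatorial input is Schreier's lemma, which produces a generating set for $\Gamma^{(i)}$ of size $|S^{(i-1)}|\cdot |T_i|$ built out of Schreier generators. Iterating this naively blows the generating sets up exponentially, and this is the main obstacle of the proof: the standard remedy is Sims's sifting procedure, which inserts each candidate Schreier generator into the partially built chain and discards those that already reduce to the identity, keeping $|S^{(i)}|$ of size polynomial in $n$ throughout. The entire chain, together with all transversals $T_i$ and orbits $\alpha_i^{\Gamma^{(i-1)}}$, is then computed in time $\poly(n,|S|)$.

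Given the chain, the remaining tasks are immediate. For (1), Lagrange's theorem yields $|\Gamma| = \prod_{i=1}^{m} |\alpha_i^{\Gamma^{(i-1)}}|$, and all orbit sizes have already been recorded. For (2), a permutation $\gamma \in \Sym(\Omega)$ is tested for membership by sifting: at step $i$, locate $t_i \in T_i$ with $\alpha_i^{t_i} = \alpha_i^{\gamma}$ (rejecting if no such $t_i$ exists) and replace $\gamma$ by $\gamma t_i^{-1} \in \Gamma^{(i)}$; accept iff the final residue is the identity. For (4), I would simply order the base so that the elements of $A$ appear at its beginning; then $\Gamma_{(A)} = \Gamma^{(|A|)}$ is already a node of the chain and comes equipped with the generating set $S^{(|A|)}$. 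The hard part is therefore entirely concentrated in the second paragraph, namely making Schreier's lemma algorithmically usable by controlling the sizes of the intermediate generating sets via sifting.
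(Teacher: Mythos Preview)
The paper does not give its own proof of this theorem; it is stated with a citation to Seress's monograph and used as a black box. Your sketch is correct and is precisely the standard Schreier--Sims argument one finds in that reference: orbits via BFS, a stabilizer chain built from Schreier generators with sifting to keep the intermediate generating sets polynomial, order via the product of fundamental orbit lengths, membership via sifting, and the pointwise stabilizer read off from the chain after placing $A$ at the front of the base. Nothing is missing.
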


\paragraph{Groups with Restricted Composition Factors.}

In this work, we shall be interested in a particular subclass of permutation groups, namely groups with restricted composition factors.
Let $\Gamma$ be a group.
A \emph{subnormal series} is a sequence of subgroups $\Gamma = \Gamma_0 \trianglerighteq \Gamma_1 \trianglerighteq \dots \trianglerighteq \Gamma_k = \{\id\}$.
The length of the series is $k$ and the groups $\Gamma_{i-1} / \Gamma_{i}$ are the factor groups of the series, $i \in [k]$.
A \emph{composition series} is a strictly decreasing subnormal series of maximal length. 
For every finite group $\Gamma$ all composition series have the same family (considered as a multiset) of factor groups (cf.\ \cite{Rotman99}).
A \emph{composition factor} of a finite group $\Gamma$ is a factor group of a composition series of $\Gamma$.

\begin{definition}
 For $d \geq 2$ let $\mgamma_d$ denote the class of all groups $\Gamma$ for which every composition factor of $\Gamma$ is isomorphic to a subgroup of $S_d$.
\end{definition}

We want to stress the fact that there are two similar classes of
groups that have been used in the literature both typically denoted by~$\Gamma_d$.
One of these is the class introduced by Luks~\cite{Luks82}
that we denote by $\mgamma_d$, while the other one used in~\cite{BabaiCP82} in particular allows composition factors that are simple groups of Lie type of bounded dimension.

\begin{lemma}[Luks \cite{Luks82}]
 \label{la:gamma-d-closure}
 Let $\Gamma \in \mgamma_d$. Then
 \begin{enumerate}
  \item $\Delta \in \mgamma_d$ for every subgroup $\Delta \leq \Gamma$, and
  \item $\Gamma^{\varphi} \in \mgamma_d$ for every homomorphism $\varphi\colon \Gamma \rightarrow \Delta$.
 \end{enumerate}
\end{lemma}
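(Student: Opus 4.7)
The plan is to derive both parts from one Jordan--H\"older fact: for any normal subgroup $N \trianglelefteq G$, the multiset of composition factors of $G$ equals the disjoint union of the composition factors of $N$ and of $G/N$, as one sees by refining the subnormal series $G \trianglerighteq N \trianglerighteq \{\id\}$ to a composition series. In particular, $\mgamma_d$ is closed under extensions, and composition factors of a normal subgroup (or of a quotient) of a group in $\mgamma_d$ again embed into $S_d$.

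Part~2 then follows almost immediately: writing $\Gamma^{\varphi} \cong \Gamma/\ker\varphi$ and applying the above observation with $N = \ker\varphi$, every composition factor of $\Gamma^{\varphi}$ is also a composition factor of $\Gamma$, and hence a subgroup of $S_d$.

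For part~1 I would fix a composition series $\Gamma = \Gamma_0 \trianglerighteq \Gamma_1 \trianglerighteq \dots \trianglerighteq \Gamma_k = \{\id\}$ and set $\Delta_i \coloneqq \Delta \cap \Gamma_i$. The second isomorphism theorem then yields
\[ \Delta_{i-1}/\Delta_i \;\cong\; \Delta_{i-1}\Gamma_i/\Gamma_i \;\leq\; \Gamma_{i-1}/\Gamma_i, \]
so each factor of the subnormal series $\Delta = \Delta_0 \trianglerighteq \dots \trianglerighteq \Delta_k = \{\id\}$ is a subgroup of the simple group $\Gamma_{i-1}/\Gamma_i$, which in turn embeds into $S_d$. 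Refining this subnormal series to a composition series of $\Delta$ and appealing to Jordan--H\"older once more, every composition factor of $\Delta$ appears as a composition factor of some $\Delta_{i-1}/\Delta_i$, i.e.\ of some subgroup of a simple group in $\mgamma_d$.

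The main obstacle, which closes the recursion, is to show that composition factors of an arbitrary subgroup of a group in $\mgamma_d$ again embed into $S_d$. I would settle this by induction on $|\Gamma|$: if $\Gamma$ is not simple, pick a proper nontrivial $N \trianglelefteq \Gamma$ and combine the inductive hypothesis for $\Delta \cap N \leq N$ and $\Delta N/N \leq \Gamma/N$ (both strictly smaller, and both in $\mgamma_d$ by the Jordan--H\"older fact) with the extension property to conclude $\Delta \in \mgamma_d$. If $\Gamma$ is simple, then $\Gamma \leq S_d$ abstractly and thus $\Delta \leq S_d$, and one runs an inner induction on $|\Delta|$: reducing to a maximal normal subgroup $M$ of $\Delta$ and using the faithful action of $\Delta$ on $[d]$ to embed the simple quotient $\Delta/M$ into $S_d$ then completes the argument.
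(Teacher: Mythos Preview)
The paper does not give its own proof of this lemma; it is simply cited from Luks~\cite{Luks82} as a known result. So there is no in-paper argument to compare against, and your proposal should be read as a self-contained proof.

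Your argument is correct and is essentially the standard one. Two comments on presentation. First, the third and fourth paragraphs overlap: intersecting a composition series of $\Gamma$ with $\Delta$ already reduces Part~1 to the single statement ``every subgroup of $S_d$ lies in $\mgamma_d$,'' since each $\Delta_{i-1}/\Delta_i$ embeds into a simple subgroup of $S_d$ and hence into $S_d$ itself. The outer induction on $|\Gamma|$ rediscovers the same reduction, so you may drop one of the two and go straight to the inner induction on $|\Delta|$ for $\Delta \le S_d$.

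Second, the one step you compress to a phrase---``using the faithful action of $\Delta$ on $[d]$ to embed the simple quotient $\Delta/M$ into $S_d$''---is the only place that needs an actual argument; a faithful action of $\Delta$ does not automatically descend to $\Delta/M$. One clean way to fill this in: pick a nontrivial $\Delta$-orbit $\Omega \subseteq [d]$. If $M$ is intransitive on $\Omega$, then the $M$-orbits in $\Omega$ form a nontrivial block system on which $\Delta$ acts with $M$ in the kernel; since $\Delta/M$ is simple and the action is nontrivial, this embeds $\Delta/M$ into $S_{|\Omega|} \le S_d$. If $M$ is transitive on $\Omega$, then for any $\alpha \in \Omega$ one has $M\Delta_\alpha = \Delta$, whence $\Delta/M \cong \Delta_\alpha/(M \cap \Delta_\alpha)$ is a quotient of the strictly smaller subgroup $\Delta_\alpha \le S_d$; the inner induction hypothesis together with your already-proved Part~2 gives $\Delta/M \in \mgamma_d$, and simplicity then forces $\Delta/M \hookrightarrow S_d$. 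With this detail supplied, the proof is complete.
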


\section{Group-Theoretic Techniques for Isomorphism Testing}
\label{sec:group-machinery}

Next, we present several group-theoretic tools in the context of
isomorphism testing which are exploited by our algorithm testing
isomorphism for graph classes that exclude a fixed minor. The
dependencies between the main results leading to this paper are shown
in Figure~\ref{fig:dep}.

\begin{figure}
 \centering
 \scalebox{0.83}{
 \begin{tikzpicture}[
  y = -2cm,
  every node/.style={draw, thick, text width = 5cm, text centered, rounded corners},
  every edge/.style={draw,thick,->}]
  
  \node (luks) at (3,0) {GI in time $n^{O(d)}$ \cite{Luks82}};
  \node (babai) at (3,1) {GI in quasipolynomial time \cite{Babai16}};
  \node (gns) at (0,2) {GI in time $n^{\polylog(d)}$ \cite{GroheNS18}};
  \node (lpps) at (9,1) {GI parameterized by tree-width in FPT \cite{LokshtanovPPS17}};
  \node (gnsw) at (9,2) {Group-theoretic GI test for bounded tree-width \cite{GroheNSW20}};
  \node (neuen) at (0,3) {Hypergraph Isomorphism for $\hat\Gamma_d$-groups \cite{Neuen22b}};
  \node (wiebking) at (6,3) {Decompositions with labeling cosets~\cite{SchweitzerW19,Wiebking20}};
  \node (this) at (3,4) {This paper};

  \draw (luks) edge (babai) (babai) edge (gns) edge (wiebking) (lpps)
   edge (gnsw) (gnsw) edge (wiebking) (gns) edge (neuen) (neuen) edge
   (this) (wiebking) edge (this);
 \end{tikzpicture}
 }
 \caption{Dependencies between the main results leading to this paper.}
 \label{fig:dep}
\end{figure}

\subsection{Hypergraph Isomorphism}

Two hypergraphs $\CH_1 = (V_1,\CE_1)$ and $\CH_2 = (V_2,\CE_2)$ are \emph{isomorphic} if there is a bijection $\varphi\colon V_1 \rightarrow V_2$ such that $E \in \CE_1$ if and only if $E^{\varphi} \in \CE_2$ for all $E \in 2^{V_1}$
(where $E^\phi\coloneqq\{\phi(v)\mid v\in E\}$ and $2^{V_1}$ denotes the power set of $V_1$).
We write $\varphi\colon \CH_1 \cong \CH_2$ to denote that $\varphi$ is an isomorphism from $\CH_1$ to $\CH_2$.
Consistent with previous notation, we denote by $\Iso(\CH_1,\CH_2)$ the set of isomorphisms from $\CH_1$ to $\CH_2$.
More generally, for $\Gamma \leq \Sym(V_1)$ and a bijection $\theta\colon V_1 \rightarrow V_2$, we define
\[\Iso_{\Gamma\theta}(\CH_1,\CH_2) \coloneqq \{\varphi \in \Gamma\theta \mid \varphi\colon \CH_1 \cong \CH_2\}.\]
The set $\Iso_{\Gamma\theta}(\CH_1,\CH_2)$ is either empty, or it is a coset of $\Aut_\Gamma(\CH_1) \coloneqq \Iso_\Gamma(\CH_1,\CH_1)$, i.e., $\Iso_{\Gamma\theta}(\CH_1,\CH_2) = \Aut_\Gamma(\CH_1)\varphi$ where $\varphi \in \Iso_{\Gamma\theta}(\CH_1,\CH_2)$ is an arbitrary isomorphism.
As a result, the set $\Iso_{\Gamma\theta}(\CH_1,\CH_2)$ can be represented efficiently by a generating set for $\Aut_\Gamma(\CH_1)$ and a single isomorphism $\varphi \in \Iso_{\Gamma\theta}(\CH_1,\CH_2)$.
In the remainder of this work, all sets of isomorphisms are represented in this way.

\begin{theorem}[{\cite[Theorem 1.1]{Neuen22b}}]
 \label{thm:hypergraph-isomorphism-gamma-d}
 Let $\CH_1 = (V_1,\CE_1)$ and $\CH_2 = (V_2,\CE_2)$ be two hypergraphs and let $\Gamma \leq \Sym(V_1)$ be a $\mgamma_d$-group and $\theta\colon V_1 \rightarrow V_2$ a bijection.
 Then $\Iso_{\Gamma\theta}(\CH_1,\CH_2)$ can be computed in time $(n+m)^{\CO((\log d)^{c})}$ for some absolute constant $c$ where $n \coloneqq |V_1|$ and $m \coloneqq |\CE_1|$.
\end{theorem}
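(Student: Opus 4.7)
The plan is to adapt Luks's recursive group-theoretic framework for isomorphism, boosted by Babai's local certificates machinery, so that the dependence on $d$ becomes polylogarithmic rather than linear. The recursion is organized according to the structure of $\Gamma \leq \Sym(V_1)$, with the goal of always reducing to a strictly smaller or structurally simpler $\mgamma_d$-instance. Throughout, I represent $\Iso_{\Gamma\theta}(\CH_1,\CH_2)$ as (a generating set for) the stabilizer $\Aut_{\Gamma}(\CH_1)$ together with a single witness isomorphism, so that all intermediate coset intersections can be performed by the standard permutation-group library of Theorem~\ref{thm:permutation-group-library}.

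First I would split $\Iso_{\Gamma\theta}(\CH_1,\CH_2)$ along the orbit structure. If $\Gamma$ is intransitive with orbits $O_1,\dots,O_k$, I partition the hyperedges of $\CH_1$ and $\CH_2$ by their intersection pattern with $\theta(O_j)$, solve the resulting subproblems one orbit at a time, and intersect the returned cosets. If $\Gamma$ is transitive but imprimitive with a minimal block system $\CB$, I apply Luks's two-level recursion: first compute isomorphisms at the block level using $\Gamma[\CB] \in \mgamma_d$, and for each block-level isomorphism lift to the full domain via the kernel of the block action, which is a direct product of smaller $\mgamma_d$-groups. Lemma~\ref{la:gamma-d-closure} ensures that all subgroups and quotients that arise in the recursion remain in $\mgamma_d$, so the class is preserved along the way.

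The nontrivial case is primitive $\Gamma$. By the Cameron--Liebeck classification of primitive $\mgamma_d$-groups, either $|\Gamma|$ is bounded by a polynomial in $d$, in which case one enumerates $\Gamma$ and tests each element against $\CH_2$ directly, or $\Gamma$ is a large alternating or symmetric group $A_m$ or $S_m$ acting on $k$-subsets of $[m]$ via a Johnson action, with $m$ only logarithmic in $d$. In the Johnson case I invoke Babai's local certificates paradigm: for small test sets I compute local automorphism groups of the restricted hypergraph, and by the aggregation lemma either the certificates glue to a "giant" global alternating symmetry that can be peeled off, or their disagreement pattern yields a canonical coloring of $V_1$ that strictly refines the orbit partition of $\Gamma$. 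Each such refinement reduces the problem to a polynomial number of instances on a strictly smaller or less-symmetric domain; iterating yields recursion depth polylogarithmic in $d$, which is the source of the $(\log d)^{c}$ exponent.

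The main obstacle, and the key combinatorial contribution of~\cite{Neuen20}, is the transport of Babai's local-certificate framework from strings to hypergraphs. Hyperedges are setwise rather than pointwise data, so both the notion of the "affected region" of a local action and the aggregation argument must be recast in hypergraph terms without blowing up the instance. One needs to show that the number of certificates remains polynomial in $n+m$, that the aggregated information can be re-encoded as a $\mgamma_d$-hypergraph isomorphism problem of comparable size, and that the canonical refinement step genuinely makes progress on the Johnson action. Once this adaptation is in place, combining it with the orbit/block recursion above and the standard representation of isomorphism cosets yields the claimed running time $(n+m)^{\CO((\log d)^c)}$.
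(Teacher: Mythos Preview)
The paper does not give a proof of Theorem~\ref{thm:hypergraph-isomorphism-gamma-d}; it is quoted verbatim from \cite[Corollary~4.32]{Neuen20} and used as a black box. There is therefore nothing in the present paper to compare your proposal against.

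For what it is worth, your outline is broadly faithful to the architecture of the argument in \cite{Neuen20} (which builds on \cite{GroheNS18} and \cite{Babai16}): Luks-style orbit/block recursion, the Cameron--Mar\'oti structure theorem in the primitive case, and Babai's local certificates in the Johnson case, with the main new ingredient being the adaptation of the affected/unaffected machinery from strings to hypergraphs. One factual slip: you claim that in the Johnson case ``$m$ is only logarithmic in $d$''. That is false; the $\mgamma_d$-hypothesis only gives $m \leq d$ (because $A_m$ must embed in $S_d$), and $m$ can indeed be as large as $d$. The $(\log d)^{c}$ exponent comes not from a small $m$ but from the fact that each round of aggregation reduces $m$ multiplicatively, bounding the recursion depth by $O(\log d)$. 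Beyond this, your final paragraph correctly identifies where the real difficulty lies but does not actually resolve it: the concrete reduction that keeps instance size under control while making the local-certificate argument go through for set-valued rather than point-valued data is the substance of \cite{Neuen20}, and your sketch only names the obstacle rather than overcoming it. As an outline it points in the right direction; as a proof it is incomplete.
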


\subsection{Coset-Labeled Hypergraphs}

Actually, for the applications in this paper, the Hypergraph Isomorphism Problem itself turns out to be insufficient.
Instead, we require a generalization of the problem that is, for example, motivated by graph decomposition approaches to graph isomorphism testing (see, e.g., \cite{GroheNSW20, Wiebking20}).
Let $G_1$ and $G_2$ be two graphs and suppose that an algorithm has already computed sets $D_1 \subseteq V(G_1)$ and $D_2 \subseteq V(G_2)$
in an isomorphism-invariant way, i.e., each isomorphism from $G_1$ to $G_2$ also maps $D_1$ to $D_2$.
Moreover, assume that $G_1 - D_1$ is not connected and
let $Z_1^{i},\dots,Z_\ell^{i}$ be the connected components of $G_i - D_i$ (without loss of generality $G_1 - D_1$ and $G_2 - D_2$ have the same number of connected components, otherwise the graphs are non-isomorphic).
Also, let $S_j^{i} \coloneqq  N_{G_i}(Z_{j}^{i})$ for all $j \in [\ell]$ and $i \in \{1,2\}$.
A natural strategy for an algorithm is to recursively compute representations for $\Iso(G_1[Z_{j_1}^{1} \cup S_{j_1}^{1}], G_2[Z_{j_2}^{2} \cup S_{j_2}^{2}])$ for all $j_1,j_2 \in [\ell]$.
Then, in the second step, the algorithm needs to compute all isomorphisms $\varphi\colon G_1[D_1] \cong G_2[D_2]$ such that there is a bijection $\sigma\colon [\ell] \rightarrow [\ell]$ satisfying
\begin{enumerate}
 \item $(S_{j}^{1})^{\varphi} = S_{\sigma(j)}^{2}$, and
 \item the restriction $\varphi[S_{j}^{1}]$ extends to an isomorphism from $G_1[Z_{j}^{1} \cup S_{j}^{1}]$ to  $G_2[Z_{\sigma(j)}^{2} \cup S_{\sigma(j)}^{2}]$ (in the natural way, i.e., the isomorphism restricted to $S_{j}^1$ equals $\varphi[S_{j}^{1}]$)
\end{enumerate}
for all $j \in [\ell]$.

Let us first discuss a simplified case where $S_{j_1}^{1} \neq S_{j_2}^{1}$ for all distinct $j_1,j_2 \in [\ell]$.
In this situation the first property naturally translates to an instance of the Hypergraph Isomorphism Problem (in particular, there is at most one bijection $\sigma$ for any given bijection $\varphi$).
However, for the second property, we also need to be able to put restrictions on how two hyperedges can be mapped to each other.
Towards this end, we consider hypergraphs with coset-labeled hyperedges where each hyperedge is additionally labeled by a coset.

A \emph{labeling} of a set $V$ is a bijection $\rho\colon V\to\{1,\ldots,|V|\}$.
A \emph{labeling coset} of a set $V$ is a set $\Lambda$ consisting of labelings such that
$\Lambda=\Delta\rho\coloneqq\{\delta\rho\mid \delta\in\Delta\}$ for some group $\Delta\leq\Sym(V)$ and some labeling $\rho\colon V\to\{1,\ldots,|V|\}$.
Observe that each labeling coset $\Delta\rho$ can also be written as
$\rho\Theta\coloneqq\{\rho\theta\mid\theta\in\Theta\}$ where
$\Theta\coloneqq\rho^{-1}\Delta\rho \leq S_{|V|}$.
  
\begin{definition}[Coset-Labeled Hypergraph]
 A \emph{coset-labeled hypergraph} is a tuple $\CH = (V,\CE,\mathfrak{p})$
 where $V$ is a finite set of vertices, $\mathcal{E} \subseteq 2^{V}$ is a set of hyperedges,
 and $\Fp$ is a function that associates with each $E \in \mathcal{E}$ a pair $\Fp(E) =(\rho\Theta,c)$
 consisting of a labeling coset of $E$ and a natural number $c \in \NN$.
 
 Two coset-labeled hypergraphs $\CH_1 = (V_1,\mathcal{E}_1,\Fp_1)$ and $\CH_2 = (V_2,\mathcal{E}_2,\Fp_2)$ are \emph{isomorphic} if there is a bijection $\phi\colon V_1 \rightarrow V_2$ such that
 \begin{enumerate}
  \item $E \in \CE_1$ if and only if $E^{\phi} \in \CE_2$ for all $E \in 2^{V_1}$, and
  \item for all $E \in\CE_1$ with $\Fp_1(E) = (\rho_1\Theta_1,c_1)$
    and  $\Fp_2(E^{\phi}) = (\rho_2\Theta_2,c_2)$ we have $c_1 = c_2$ and
   \begin{equation}
    \label{eq:coset-labeled-hypergraph}
    \phi[E]^{-1}\rho_1\Theta_1=\rho_2\Theta_2.
   \end{equation}
 \end{enumerate}
 In this case, $\phi$ is an \emph{isomorphism} from $\CH_1$ to $\CH_2$, denoted by $\phi\colon \CH_1 \cong \CH_2$.
 Observe that \eqref{eq:coset-labeled-hypergraph} is equivalent to $c_1 = c_2$, $\Theta_1=\Theta_2$ and $\phi[E] \in \rho_1\Theta_1\rho_2^{-1}$.
 For $\Gamma \leq \Sym(V_1)$ and a bijection $\theta\colon V_1 \rightarrow V_2$ let
 \begin{equation*}
  \Iso_{\Gamma\theta}(\CH_1,\CH_2) \coloneqq \{\phi \in \Gamma\theta \mid \phi\colon\CH_1 \cong \CH_2\}.
 \end{equation*}
\end{definition}

Note that, for two coset-labeled hypergraphs $\CH_1$ and $\CH_2$, the set of isomorphisms $\Iso(\CH_1,\CH_2)$ forms a coset of $\Aut(\CH_1)$ (or $\Iso(\CH_1,\CH_2) = \emptyset$) and therefore, it again admits a compact representation.
Indeed, this is a crucial feature of the above definition that again allows the application of group-theoretic techniques.

The next theorem is an immediate consequence of Theorem \ref{thm:hypergraph-isomorphism-gamma-d} and \cite[Theorem 6.6.7]{Neuen19}\footnote{In the notation of \cite[Theorem 6.6.7]{Neuen19}, there is a prototype $\Theta$ for every pair $(\Theta,c)$.
 We have $\Theta \in \mgamma_d$ since $|E| \leq d$ for all $E \in \CE_1 \cup \CE_2$.
 Finally, we use Theorem \ref{thm:hypergraph-isomorphism-gamma-d} to compute the induced coset of $\Iso_{\Gamma\theta}((V_1,\CE_1),(V_2,\CE_2))$ on the set of hyperedges $\CE_1$.}.

\begin{theorem}
 \label{thm:coset-labeled-hypergraphs-gamma-d}
 Let $\CH_1 = (V_1,\CE_1,\Fp_1)$ and $\CH_2 = (V_2,\CE_2,\Fp_2)$ be two coset-labeled hypergraphs
 such that for all $E\in \CE_1\cup\CE_2$ it holds $|E|\leq d$.
 Also let $\Gamma \leq \Sym(V_1)$ be a $\mgamma_d$-group and $\theta\colon V_1 \rightarrow V_2$ a bijection.
 
 Then $\Iso_{\Gamma\theta}(\CH_1,\CH_2)$ can be computed in time $(n+m)^{\CO((\log d)^{c})}$ for some absolute constant $c$ where $n \coloneqq |V_1|$ and $m \coloneqq |\mathcal{E}_1|$.
\end{theorem}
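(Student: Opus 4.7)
The plan is to reduce the coset-labeled hypergraph isomorphism problem to the ordinary Hypergraph Isomorphism Problem (Theorem \ref{thm:hypergraph-isomorphism-gamma-d}) by encoding the natural-number labels $c$ and the coset labels $(\rho\Theta)$ as additional hypergraph structure that any isomorphism must respect. This is exactly the content of \cite[Theorem 6.6.7]{Neuen19}; I sketch how the two results combine.

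The encoding has two parts. For the color labels, attach to each hyperedge $E$ with $\Fp_i(E) = (\rho\Theta, c)$ a dedicated ``color vertex'' $q_{E}$, together with a singleton hyperedge $\{q_E\}$ put into a color class depending only on $c$. Any isomorphism $\phi$ of the enriched hypergraphs is then forced to match $E$ only with hyperedges of the same color. For the coset labels, introduce, for each hyperedge $E$, auxiliary ``slot vertices'' $p_{E,1},\dots,p_{E,|E|}$ and the hyperedges $\{v, p_{E,\rho(v)}\}$ for $v \in E$. An isomorphism of the enriched structure must map $E$ to some $E'$ via a bijection $\sigma \colon E \to E'$ together with a bijection $\pi \colon \{p_{E,j}\} \to \{p_{E',j}\}$ compatible with the slot edges. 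Quotienting by the freedom in $\pi$ and requiring that the slot vertices of $E$ and $E'$ be identified by their indices (which is enforced by making each slot vertex an individual color) constrains $\sigma$ precisely to $\rho_1\Theta_1\rho_2^{-1}$, provided $\Theta_1 = \Theta_2$; otherwise no such $\sigma$ exists, matching condition~\eqref{eq:coset:1}.

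Next, the group $\Gamma \leq \Sym(V_1)$ lifts to a group $\tilde\Gamma$ on the enlarged vertex set by letting each $\gamma \in \Gamma$ act on the slot vertices via the labelings $\rho$: namely, $\gamma$ sends $p_{E,j}$ to $p_{E^\gamma, j'}$ where $j' = (\rho_1)(E^\gamma)(\gamma(\rho_1(E)^{-1}(j)))$ and similarly for color vertices. The kernel of the induced action on the original domain is trivial on $V_1$, so $\tilde\Gamma \cong \Gamma \in \mgamma_d$. Since the enriched hypergraphs have size polynomial in $n + m$ (each $|E| \leq d \leq n$ contributes at most $d + 1$ auxiliary vertices) and since all new hyperedges still have size at most $d$, Theorem \ref{thm:hypergraph-isomorphism-gamma-d} applied to $\tilde\CH_1, \tilde\CH_2$ and $\tilde\Gamma\tilde\theta$ yields a representation of $\Iso_{\tilde\Gamma\tilde\theta}(\tilde\CH_1,\tilde\CH_2)$ in time $(n+m)^{\CO((\log d)^{c})}$. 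Restricting each isomorphism to $V_1$ gives exactly $\Iso_{\Gamma\theta}(\CH_1,\CH_2)$; Lemma \ref{la:gamma-d-closure} guarantees the restriction is again a $\mgamma_d$-coset, so it admits the usual compact description.

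The main technical obstacle is making the lift of $\Gamma$ well-defined and ensuring it stays in $\mgamma_d$: the slot vertices must be permuted in a way that is both consistent across all hyperedges and unaffected by the choice of representative labeling $\rho$ for each coset $\rho\Theta$. This is precisely handled in \cite[Theorem 6.6.7]{Neuen19}, where the construction is formulated using labeling cosets directly, so that the lift arises canonically and Lemma \ref{la:gamma-d-closure}(2) applies to keep $\tilde\Gamma$ inside $\mgamma_d$. Once that reduction is in place, the runtime bound follows immediately from Theorem \ref{thm:hypergraph-isomorphism-gamma-d}.
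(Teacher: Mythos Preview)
Your high-level plan---reduce to Theorem~\ref{thm:hypergraph-isomorphism-gamma-d} via \cite[Theorem 6.6.7]{Neuen19}---is exactly what the paper does; in fact the paper offers nothing beyond the single sentence ``The next theorem is an immediate consequence of \cite[Theorem 6.6.7]{Neuen19} and Theorem~\ref{thm:hypergraph-isomorphism-gamma-d}.'' So at the citation level you are fully aligned.

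The concrete gadget you sketch, however, does not encode the coset constraint. With edges $\{v,p_{E,\rho(v)}\}$ for one \emph{fixed} representative $\rho$ of $\rho\Theta$, any hypergraph isomorphism $\tilde\phi$ is forced to satisfy $\tilde\phi(p_{E,j})=p_{E^\phi,\,\rho_2(\phi(\rho_1^{-1}(j)))}$; this pins the action on slot vertices uniquely given $\phi$, so membership in $\tilde\Gamma\tilde\theta$ with $\tilde\Gamma\cong\Gamma$ adds no further restriction on $\phi[E]$. If instead you give each $p_{E,j}$ the individual color $j$, you over-constrain to the single bijection $\phi[E]=\rho_1\rho_2^{-1}$ rather than the coset $\rho_1\Theta\rho_2^{-1}$. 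In either reading, the group $\Theta$ never enters. The fix is that $\tilde\Gamma$ must \emph{not} be isomorphic to $\Gamma$: it should be the extension of $\Gamma$ in which the kernel acts on the slot indices of each $E$ by $\Theta_E$. Then $\tilde\phi\in\tilde\Gamma\tilde\theta$ forces the induced index permutation $\rho_1^{-1}\phi[E]\rho_2$ to lie in $\Theta$, which is precisely condition~\eqref{eq:coset:1}, and $\tilde\Gamma\in\mgamma_d$ because the kernel is a product of subgroups of $S_d$ while the quotient is $\Gamma\in\mgamma_d$. You correctly note that the details are handled in the cited reference, but the two specific claims in your sketch (individual colors on slot vertices, and $\tilde\Gamma\cong\Gamma$) point in the wrong direction.
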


\subsection{Multiple-Labeling-Cosets}

The theorem above covers the problem discussed in the beginning of the previous subsection assuming that all separators of the first graph are distinct, i.e., $S_{j_1}^{1} \neq S_{j_2}^{1}$ for all distinct $j_1,j_2 \in [\ell]$.
In this subsection, we consider the case in which $S^1_{j_1}=S^1_{j_2}$ for all $j_1,j_2\in[\ell]$.
In order to handle the case of identical separators, we build on a framework considered in \cite{SchweitzerW19,Wiebking20}.
(The mixed case in which some, but not all, separators coincide can be handled by a mixture of both techniques.)

\begin{definition}[Multiple-Labeling-Coset]
 A \emph{multiple-labeling-coset} is a tuple $\CX=(V,L,\Fp)$ where $L=\{\rho_1\Theta_1,\ldots,\rho_t\Theta_t\}$
 is a set of labeling cosets $\rho_i\Theta_i$, $i \in [t]$, of the set $V$
 and $\Fp\colon L\to\NN$ is a function that assigns each labeling coset $\rho\Theta \in L$ a natural number $\Fp(\rho\Theta)=c$.
 
 Two multiple-labeling-cosets $\CX_1=(V_1,L_1,\Fp_1)$ and $\CX_2=(V_2,L_2,\Fp_2)$
 are \emph{isomorphic} if there is a bijection $\phi\colon V_1 \to V_2$ such that
 \begin{equation}
  \label{eq:multiple-labeling-cosets}
   \big(\;\rho\Theta \in L_1 \;\;\wedge\;\; \Fp_1(\rho\Theta) = c\;\big) \;\;\;\;\iff\;\;\;\; \big(\;\phi^{-1}\rho\Theta \in L_2 \;\;\wedge\;\; \Fp_2(\phi^{-1}\rho\Theta) = c\;\big)
 \end{equation}
 for all labeling cosets $\rho\Theta$ of $V$ and all $c \in \NN$.
 In this case, $\phi$ is an \emph{isomorphism} from $\CX_1$ to $\CX_2$, denoted by $\phi\colon\CX_1\cong\CX_2$.
 Observe that \eqref{eq:multiple-labeling-cosets} is equivalent to $|L_1| = |L_2|$ and for each $\rho_1\Theta_1\in L_1$ there is a $\rho_2\Theta_2\in L_2$
 such that $\Fp_1(\rho_1\Theta_1)=\Fp_2(\rho_2\Theta_2)$ and $\Theta_1=\Theta_2$ and $\phi\in\rho_1\Theta_1\rho_2^{-1}$.
 Let
 \begin{equation*}
  \Iso(\CX_1,\CX_2)\coloneqq\{\phi\colon V_1\to V_2\mid\phi\colon\CX_1\cong\CX_2\}
 \end{equation*}
\end{definition}

Again, the set of isomorphisms $\Iso(\CX_1,\CX_2)$ forms a coset of $\Aut(\CX_1) \coloneqq \Iso(\CX_1,\CX_1)$ (or $\Iso(\CX_1,\CX_2) = \emptyset$) and therefore, it again admits a compact representation.
The next theorem is obtained by a canonization approach building on the canonization framework from \cite{SchweitzerW19}.
Intuitively, a canonical form for a class of objects maps each object in that class to a representative of its isomorphism class.
For background on canonical forms and labelings we refer to \cite{BabaiL83}.

\begin{theorem}[{\cite[Theorem 22 and Corollary 35]{Wiebking20}}]
 Let $\CX=(V,L,\Fp)$ be a multiple-labeling-coset.
 Canonical labelings for $\CX$ can be computed in time $(n+m)^{\CO((\log n)^{c})}$ for some absolute constant $c$ where $n \coloneqq |V|$ and $m \coloneqq |L|$.
\end{theorem}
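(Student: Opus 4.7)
The plan is to build on the quasipolynomial-time canonization framework of \cite{SchweitzerW19,Wiebking19}, whose fundamental primitive is a canonical labeling procedure for strings under an arbitrary permutation group (essentially Babai's quasipolynomial algorithm, recast as canonization). After fixing an arbitrary auxiliary labeling to identify $V$ with $[n]$, each labeling coset $\rho_i\Theta_i$ becomes a coset of a subgroup of $S_n$, and the entire multiple-labeling-coset $\CX$ becomes an $\Fp$-labeled collection of $m$ such cosets in $S_n$. In this representation an isomorphism $\phi\colon\CX_1\cong\CX_2$ is just a permutation of $V$ that transports the $\Fp_1$-labeled collection to the $\Fp_2$-labeled one in the sense of~\eqref{eq:multcoset}.

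The core idea is to compute a canonical labeling by iteratively processing the cosets of $L$ while maintaining a \emph{residual} labeling coset $\Psi$ of $V$ that represents the automorphisms of $\CX$ still available after the cosets handled so far have been canonized. At each step one selects the next (class of) coset(s) in an isomorphism-invariant order determined by $\Fp$ and by a canonical encoding of each coset, and then refines $\Psi$ by intersecting it with the canonical aligner sending the chosen coset to its canonical representative. Each such refinement is a canonical-labeling task for a single coset, solvable in time $n^{\polylog n}$ using Babai's engine; after all $m$ cosets in $L$ have been processed, any element of the final $\Psi$ is a canonical labeling of $\CX$.

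The main obstacle I expect is handling the set structure of $L$ in an isomorphism-invariant way. Within a class of cosets sharing the same $\Fp$-value and the same canonical encoding there is no a~priori canonical ordering, so the order chosen at each step must be compatible with the current $\Psi$ to ensure that no freedom is discarded prematurely; moreover, since left-multiplying by an element of $V$ (viewed as a permutation) acts simultaneously on every coset in $L$, the inner canonizations cannot be performed independently. Overcoming this is precisely what the compositional canonization framework of \cite{Wiebking19} is engineered for: it treats labeling cosets as first-class objects that can be intersected, transported, and refined, so that the outer set-canonization and the inner Babai calls share a common representation of the remaining symmetry. The stated bound $(n+m)^{\CO((\log n)^c)}$ then arises from combining $\CO(m)$ outer iterations with the quasipolynomial cost $n^{\polylog n}$ of each inner canonization call.
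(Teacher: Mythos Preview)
The paper does not prove this theorem at all: it is quoted verbatim as a black-box result from \cite{Wiebking19} (Theorem~22 and Corollary~35) and is only \emph{used}, in the very next theorem, to derive an isomorphism test for multiple-labeling-cosets from canonical labelings. So there is no ``paper's own proof'' to compare your proposal against.

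That said, your sketch is a reasonable high-level summary of the approach actually taken in \cite{SchweitzerW19,Wiebking19}: one works within a compositional canonization framework whose primitive is quasipolynomial canonical labeling (Babai's algorithm recast as canonization), treats labeling cosets as first-class objects, and processes the elements of $L$ while maintaining a residual labeling coset. Your identification of the main difficulty --- that the cosets in $L$ form an unordered set on which left-multiplication by $\phi$ acts simultaneously, so the inner canonizations cannot be done independently --- is exactly right, and this is indeed what the machinery of \cite{Wiebking19} is designed to handle. As a proof \emph{proposal} this is fine; turning it into an actual proof would amount to reproducing a substantial portion of \cite{Wiebking19}, which is presumably why the present paper simply cites the result.
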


\begin{theorem}
 \label{thm:multiple-labeling-cosets-isomorphism}
 Let $\CX_1=(V_1,L_1,\Fp_1)$ and $\CX_2=(V_2,L_2,\Fp_2)$ be two multiple-labeling cosets.

 Then $\Iso(\CX_1,\CX_2)$ can be computed in time $(n+m)^{\CO((\log n)^{c})}$ for some absolute constant $c$ where $n \coloneqq |V_1|$ and $m \coloneqq |L_1|$.
\end{theorem}

\begin{proof}
 We compute canonical labelings $\Lambda_1,\Lambda_2$ for $\CX_1,\CX_2$, respectively.
 We compare the canonical forms $\CX_1^{\lambda_1}$ and $\CX_2^{\lambda_2}$ for labelings $\lambda_i\in\Lambda_i$, $i \in \{1,2\}$
 (this can be done in polynomial time as shown in \cite{Wiebking20}).
 We can assume that the canonical forms are equal, otherwise we reject isomorphism.
 Then, we return $\Iso(\CX_1,\CX_2) = \Lambda_1\lambda_2^{-1}$.
\end{proof}

\subsection{Allowing Color Refinement to Split Small Color Classes}

In order to be able to apply the decomposition framework outlined above, an algorithm first needs to compute an isomorphism-invariant subset $D \subseteq V(G)$ such that $N_G(Z)$ is sufficiently small for every connected component $Z$ of the graph $G - D$.
Moreover, the application of Theorem \ref{thm:coset-labeled-hypergraphs-gamma-d} additionally requires a $\mgamma_d$-group that restricts the set of possible automorphisms for the set $D$.
Both problems are tackled building on the notion of \emph{$t$-CR-bounded graphs}.
This class of graphs was originally introduced by Ponomarenko in the late 1980's \cite{Ponomarenko89} and has been exploited more recently for isomorphism testing of graphs of bounded genus \cite{Neuen22b} which form an important subfamily of graph classes excluding a fixed graph as a minor.

Intuitively speaking, a vertex-colored graph $(G,\chi)$ is $t$-CR-bounded, $t \in \NN$, if it is possible to obtain
a discrete vertex-coloring (a vertex-coloring is \emph{discrete} if each vertex has a distinct color) for the graph by iteratively applying the following two operations:
\begin{itemize}
 \item applying the Color Refinement algorithm, and
 \item picking a color class $[v]_\chi \coloneqq \{w \in V(G) \mid \chi(v) = \chi(w)\}$ for some vertex $v \in V(G)$ where $|[v]_\chi|\leq t$ and individualizing each vertex in that class (every vertex in that color class is assigned a distinct color).
\end{itemize}

In this work, we exploit the ideas behind $t$-CR-bounded graphs to define a closure operator.
Given an initial set $X \subseteq V(G)$, all vertices from $X$ are first individualized before applying the operators of the $t$-CR-bounded definition.
The closure of the set $X$ (with respect to the parameter $t$) then contains all singleton vertices of the resulting coloring.
The next definition formalizes all these concepts.
Since we usually deal with vertex- and arc-colored graphs, the definition is formulated in this general setting.

\begin{definition}
 \label{def:t-cr-bounded}
 Let $(G,\chi_V,\chi_E)$ be a vertex- and arc-colored graph and $X \subseteq V(G)$.
 Let $(\chi_i)_{i \geq 0}$ be the sequence of vertex-colorings where
 \[\chi_0(v) \coloneqq \begin{cases}
                (v,1)         & \text{if } v \in X\\
                (\chi_V(v),0) & \text{otherwise}
               \end{cases},\]
 $\chi_{2i+1} \coloneqq \WL{1}{G,\chi_{2i},\chi_E}$ and
 \[\chi_{2i+2}(v) \coloneqq \begin{cases}
                     (v,1)              & \text{if } |[v]_{\chi_{2i+1}}| \leq t\\
                     (\chi_{2i+1}(v),0) & \text{otherwise}
                    \end{cases}\]
 for all $i \geq 0$.
 Since $\chi_{i+1} \preceq \chi_{i}$ for all $i \geq 0$ there is some minimal $i^{*}$ such that $\chi_{i^{*}} \equiv \chi_{i^{*}+1}$.
 We define
 \[\cl_t^{(G,\chi_V,\chi_E)}(X) \coloneqq \left\{v \in V(G) \mid \left|[v]_{\chi_{i^{*}}}\right| = 1\right\}.\]
 For $v_1,\dots,v_k \in V(G)$ we also denote
 \[\cl_t^{(G,\chi_V,\chi_E)}(v_1,\dots,v_k) \coloneqq \cl_t^{(G,\chi_V,\chi_E)}(\{v_1,\dots,v_k\}).\]
 
 Moreover, the pair $(G,X)$ is \emph{$t$-CR-bounded} if $\cl_t^{(G,\chi_V,\chi_E)}(X) = V(G)$.
 Finally, the graph $G$ is \emph{$t$-CR-bounded} if $(G,\emptyset)$ is $t$-CR-bounded.
\end{definition}

For ease of notation, we usually omit the vertex- and arc-colorings and simply write $\cl_t^{G}$ instead of $\cl_t^{(G,\chi_V,\chi_E)}$.

For applications in graph classes with an excluded minor it turns out to be useful to combine the concept of $\cl_t^{G}$ with the $2$-dimensional Weisfeiler-Leman algorithm.
More precisely, in order to increase the scope of the set $\cl_t^{G}$, information computed by the $2$-dimensional Weisfeiler-Leman algorithm are taken into account.
Since the $2$-dimensional Weisfeiler-Leman algorithm computes a pair-coloring, we extend the definition of $\cl_t^{G}$ to pair-colored graphs.
For a pair-colored graph $(G,\chi)$ we define $\cl_t^{(G,\chi)} \coloneqq \cl_t^{(K_n,\widetilde\chi)}$ where
$K_n$ is the complete graph on the same vertex set $V(G)$ and $\widetilde\chi(v,w) = (\atp(v,w),\chi(v,w))$ where $\atp(v,w) = 0$ if $v=w$, $\atp(v,w) = 1$ if $vw \in E(G)$, and $\atp(v,w) = 2$ otherwise.
This allows us to take all pair-colors into account for the Color Refinement algorithm, but also still respect the edges of the input graph $G$.

It can be shown that for each $t$-CR-bounded graph $G$ it holds that $\Aut(G) \in \mgamma_t$.
Moreover, there is an algorithm that, given a graph $G$, computes a $\mgamma_t$-group $\Gamma \leq \Sym(V(G))$ such that $\Aut(G) \leq \Gamma$ in time $n^{\polylog(t)}$ where $n$ is the number of vertices of $G$.
It is important for our techniques that this statement generalizes to $t$-CR-bounded pairs $(G,X)$
for which we already have a good knowledge of the structure of $X$ in form of a $\mgamma_t$-group $\Gamma \leq \Sym(X)$
as stated in the following theorem.

\begin{theorem}[{\cite[Lemma 5.2]{Neuen22b}}]
 \label{thm:compute-isomorphisms-t-closure}
 Let $G_1,G_2$ be two graphs and let $X_1 \subseteq V(G_1)$ and $X_2 \subseteq V(G_2)$.
 Also, let $\Gamma \leq \Sym(X_1)$ be a $\mgamma_t$-group and $\theta\colon X_1 \rightarrow X_2$ a bijection.
 Moreover, let $D_i \coloneqq \cl_t^{G_i}(X_i)$ for $i \in \{1,2\}$  and define
 \[\Gamma'\theta'\coloneqq\{\varphi \in \Iso((G_1,X_1),(G_2,X_2)) \mid \varphi[X_1] \in \Gamma\theta\}[D_1].\]
 
 Then $\Gamma' \in \mgamma_t$.
 Moreover, there is an algorithm computing a $\mgamma_t$-group $\Delta \leq \Sym(D_1)$ and a bijection $\delta\colon D_1 \rightarrow D_2$ such that
 \[\Gamma'\theta'\subseteq \Delta\delta\]
 in time $n^{\CO((\log t)^{c})}$ for some absolute constant $c$ where $n\coloneqq |V(G_1)|$.
\end{theorem}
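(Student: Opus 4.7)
The plan is to simulate the closure construction from the definition of $\cl_t^G$ in parallel on $G_1$ and $G_2$, maintaining as an invariant a $\mgamma_t$-coset that overapproximates the set of isomorphism restrictions to the currently individualized vertex set. Concretely, I would keep track of a growing set $Y_j \subseteq V(G_1)$ of vertices that are singletons after the first $j$ stages of closure (starting with $Y_0 = X_1$), a matched set $Y_j' \subseteq V(G_2)$, a bijection $\delta_j\colon Y_j \to Y_j'$, and a $\mgamma_t$-group $\Delta_j \leq \Sym(Y_j)$ satisfying
\[
  \{\varphi \in \Iso((G_1,X_1),(G_2,X_2)) \mid \varphi[X_1] \in \Gamma\theta\}[Y_j] \subseteq \Delta_j\delta_j,
\]
initialized with $(\Delta_0,\delta_0) = (\Gamma,\theta)$. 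When the process reaches its fixed point, $Y_{j^{*}}$ coincides with $D_1$ and the output pair $(\Delta,\delta) = (\Delta_{j^{*}},\delta_{j^{*}})$ has the required properties.

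Each stage consists of two substeps that mirror the definition of $\cl_t^G$. In the Color Refinement substep, I would apply $1$-dimensional WL refinement in parallel on both graphs with the current coloring that marks each individualized vertex by its name (using $\delta_j$ to match names across the two graphs); any color class that shrinks to a singleton in $G_1$ is canonically paired with the matched singleton in $G_2$ and appended to $(Y_j,\delta_j)$, with $\Delta_j$ extended by fixing these new points. In the individualization substep, whenever a color class $C \subseteq V(G_1) \setminus Y_j$ of size at most $t$ matched with a class $C' \subseteq V(G_2) \setminus Y_j'$ is encountered, I would adjoin $C$ to $Y_j$, extend $\delta_j$ by a consistent bijection from $C$ to $C'$, and extend $\Delta_j$ to the direct product $\Delta_j \times \Sym(C)$ acting on $Y_j \cup C$. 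Since $|C| \leq t$, the group $\Sym(C)$ has all its composition factors embedded in $S_t$, so $\Sym(C) \in \mgamma_t$; a direct product of two $\mgamma_t$-groups is again $\mgamma_t$; and together with Lemma \ref{la:gamma-d-closure}, this shows that the $\mgamma_t$ property propagates through the entire loop.

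The main obstacle will be to produce, in each individualization substep, a bijection $\delta_{j+1}$ that actually extends to some isomorphism in $\Iso((G_1,X_1),(G_2,X_2))$ with $\varphi[X_1] \in \Gamma\theta$ whenever such an isomorphism exists, since this is needed to guarantee $\Gamma'\theta' \subseteq \Delta\delta$ in the final output. A naive enumeration over the $|C|!$ candidate bijections from $C$ to $C'$ does not meet the time budget. Instead, I would reduce the search to a hypergraph isomorphism instance on the domain $Y_j \cup C$ with acting $\mgamma_t$-group $\Delta_j \times \Sym(C)$ and hyperedges encoding the edges between $C$ and $Y_j$ (and within $C$) induced by the current coloring, and invoke Theorem \ref{thm:hypergraph-isomorphism-gamma-d} to obtain, in time $n^{\CO((\log t)^{c})}$, either a valid $\delta_{j+1}$ (together with a possibly tighter $\Delta_{j+1}$) or a certificate of non-extendability. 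Since the closure stabilizes after polynomially many stages, summing over stages yields the overall running time $n^{\CO((\log t)^{c})}$ as claimed.
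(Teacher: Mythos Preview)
Your extension step is where the argument breaks. When you absorb a small color class $C$ and set $\Delta_{j+1}=\Delta_j\times\Sym(C)$, you are implicitly assuming that every isomorphism $\varphi$ in the target coset satisfies $\varphi(C)=C'$, where $C'$ is the class matched to $C$ via $\delta_j$. But which class in $G_2$ receives the same refined color as $C$ depends on the naming bijection $Y_j\to Y_j'$ used to synchronize the two refinements, and that bijection ranges over all of $\Delta_j\delta_j$, not just $\delta_j$. Concretely, take $G_1=G_2$ the $6$-cycle on $\{1,\dots,6\}$ with $Y_j=Y_j'=\{1,4\}$, $\Delta_j=\Sym(\{1,4\})$, $\delta_j=\id$. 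After refinement the small classes are $\{2,6\}$ and $\{3,5\}$; under $\delta_j$ you match $C=\{2,6\}$ with $C'=\{2,6\}$. The rotation $\varphi=(1\;4)(2\;5)(3\;6)$ has $\varphi[Y_j]=(1\;4)\in\Delta_j\delta_j$, yet $\varphi(C)=\{3,5\}\neq C'$, so $\varphi[Y_j\cup C]$ is not even a map into $Y_j'\cup C'$ and cannot lie in $(\Delta_j\times\Sym(C))\delta_{j+1}$ for any choice of $\delta_{j+1}$. Your hypergraph-isomorphism call does not rescue this: its domain is already fixed as $Y_j\cup C$ versus $Y_j'\cup C'$, so it simply fails to see $\varphi$. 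The same failure occurs in your Color-Refinement substep (take the path on $\{1,\dots,5\}$ with $Y_j=\{1,5\}$, $\Delta_j=\Sym(\{1,5\})$: the flip $(1\;5)(2\;4)$ is a relevant automorphism but does not fix the new singletons $2$ and $4$).

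What is missing is that $\Delta_j$ itself acts on the family of refined color classes: renaming the individualized vertices by $\gamma\in\Delta_j$ permutes the color labels and hence the classes. A correct extension processes all small classes $C_1,\dots,C_m$ together and takes $\Delta_{j+1}$ inside the semidirect product $\big(\prod_i\Sym(C_i)\big)\rtimes\Delta_j$, with $\Delta_j$ acting by this class permutation; this is still a $\mgamma_t$-group because $\mgamma_t$ is closed under group extensions. Equivalently, for the structural claim $\Gamma'\in\mgamma_t$ one notes that the kernel of the restriction $\Gamma'[Y_{j+1}]\to\Gamma'[Y_j]$ fixes $Y_j$ pointwise, hence stabilizes each $C_i$ setwise, and so embeds into $\prod_i\Sym(C_i)\in\mgamma_t$; induction finishes. (The paper itself does not spell out a proof here, deferring to \cite[Theorem~5.2]{Neuen20}, but a direct product as you propose is not sufficient.)
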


\section{Exploiting the Structure of Graphs Excluding a Minor}
\label{sec:exploit-minor}

In the following, we first give a more detailed description of the high-level strategy for building a faster isomorphism test for graph classes that exclude a fixed minor.
In particular, we state the two main technical theorems which build the groundwork for the isomorphism test.

\subsection{The Strategy}

The basic idea for our isomorphism test is to follow the decomposition framework outlined in the previous section.
Let $G_1$ and $G_2$ be two connected graphs that exclude $K_h$ as a minor (note that it is always possible to restrict to connected graphs by considering the connected components of the input graphs separately).
To apply the decomposition framework outlined in the previous section, we need to compute subsets $D_i \subseteq V(G_i)$, $i \in \{1,2\}$, such that
\begin{enumerate}[label=(\Alph*)]
 \item\label{i:inv} the subsets $D_1,D_2$ are isomorphism-invariant, i.e., each isomorphism from $G_1$ to $G_2$ maps $D_1$ to $D_2$,
 \item\label{i:small} for each connected component $Z_i$ of $G_i-D_i$ it holds $|N_{G_i}(Z_i)| < h$ and,
 \item\label{i:d} one can efficiently compute a $\mgamma_d$-group $\Delta \leq \Sym(D_1)$ and a bijection $\delta\colon D_1 \rightarrow D_2$ such that $\Iso(G_1,G_2)[D_1] \subseteq \Delta\delta$.
\end{enumerate}
In such a setting, the decomposition framework can be applied as follows.
For every pair of connected components $Z^1_{j_1}$ and $Z^2_{j_2}$ of $G_1 - D_1$ and $G_2 - D_2$, respectively, the algorithm recursively computes the set of isomorphisms from $G_1[Z^1_{j_1} \cup S^1_{j_1}]$ to $G_2[Z^2_{j_2} \cup S^2_{j_2}]$ where $S^i_{j_i} \coloneqq N_{G_i}(Z^i_{j_i})$, $i\in\{1,2\}$.
Then, the set of isomorphisms from $G_1$ to $G_2$ can be computed by combining Theorem \ref{thm:multiple-labeling-cosets-isomorphism} and \ref{thm:coset-labeled-hypergraphs-gamma-d}.
Recall that Theorem \ref{thm:multiple-labeling-cosets-isomorphism} handles the case in which $S^1_{j_1}=S^1_{j_2}$ for all connected components $Z^1_{j_1},Z^1_{j_2}$ of $G_1 - D_1$.
To achieve the desired running time for this case, we exploit Property \ref{i:small}.
For Theorem \ref{thm:coset-labeled-hypergraphs-gamma-d}, which handles the case of distinct separators $S^1_{j_1} \neq S^1_{j_2}$, we require sufficient structural information of the sets $D_1$ and $D_2$.
More precisely, we require Property \ref{i:d} to ensure the desired time bound.

\medskip

Now, we turn to the question how to find the sets $D_1$ and $D_2$ satisfying Properties \ref{i:inv}, \ref{i:small} and \ref{i:d}.
The central idea is to build on the closure operator $\cl_t^{G_i}$ (where $t$ is polynomially bounded in $h$).
We construct the sets by computing the closure $D_i \coloneqq \cl_t^{G_i}(X_i)$ for some suitable initial set $X_i$.
The first key insight is that this process of growing the sets $X_i$ can only be stopped by separators of small size which ensures Property \ref{i:small}.

\begin{theorem}
 \label{thm:small-separator-for-t-cr-bounded-closure}
 Let $G$ be a graph that excludes $K_h$ as a topological subgraph and let $X \subseteq V(G)$.
 Let $t \geq 3h^3$ and define $D \coloneqq \cl_t^{G}(X)$.
 Let $Z$ be the vertex set of a connected component of $G - D$.
 Then $|N_G(Z)| < h$.
\end{theorem}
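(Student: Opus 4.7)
The plan is to argue the contrapositive: assuming $|N_G(Z)|\geq h$, I construct a $K_h$-subdivision in $G$, which contradicts the hypothesis. I pick any $h$ distinct vertices $s_1,\ldots,s_h\in N_G(Z)$ to serve as the branch vertices of the subdivision; for each $i$, I fix a neighbour $z_i\in Z$ of $s_i$ and let $C_i$ denote the colour class of $z_i$ in the stable colouring produced at the end of the $\cl_t^G$-computation. The remaining task is to exhibit $\binom{h}{2}$ internally vertex-disjoint paths between the pairs $(s_i,s_j)$ in $G$.

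I would use three structural consequences of this colouring being stable under Color Refinement. Because $Z$ is not absorbed into $D$, every non-singleton colour class meeting $V(G)\setminus D$---in particular every $C_i$---has more than $t\geq 3h^3$ vertices (otherwise it would have been individualised). Equitability yields two further facts: (i) for each singleton $s\in D$ and each non-singleton class $C$, $s$ is either adjacent to every vertex of $C$ or to none, and hence $s_i$ is adjacent to every vertex of $C_i$; (ii) for any two non-singleton classes $C,C'$, the bipartite graph $G[C,C']$ is either empty or biregular, and in the biregular case it contains a matching of size $\min(|C|,|C'|)>t$ by the Hall-type statement recorded in Section~\ref{sec:prelim}. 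Since $Z$ is connected, any path in $G[Z]$ descends to a walk in the biregular ``colour-class quotient'' (whose vertices are the non-singleton classes and whose edges record the non-empty biregular pairs), so the classes meeting $Z$ form a connected piece of this quotient.

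The easy case is when some single colour class $C^{\ast}$ meeting $Z$ is adjacent to every $s_i$: I pick $\binom{h}{2}$ distinct vertices $v_{ij}\in C^{\ast}$ and use the length-$2$ paths $s_i{-}v_{ij}{-}s_j$. Since $|C^{\ast}|>t\geq 3h^{3}>\binom{h}{2}$ these paths are internally disjoint and produce the desired $K_h$-subdivision. The main obstacle is the general case, in which the $C_i$'s are spread across several different colour classes. My plan there is to process the $\binom{h}{2}$ pairs sequentially and greedily lift short walks in the quotient to concrete paths: for each pair $(i,j)$ I choose a walk $C_i=D_0,\ldots,D_k=C_j$ in the quotient and pick $v_\ell\in D_\ell$ with $v_\ell v_{\ell+1}\in E(G)$ while avoiding the vertices already used by earlier paths. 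Biregularity together with Hall's matching condition guarantees a valid choice whenever fewer than $t$ vertices of any single class have been blocked so far, and this is maintained because the total number of internal vertices used across all paths is bounded by $\binom{h}{2}\cdot k$.

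The most delicate step, which is the source of the cubic threshold $t\geq 3h^3$, is arguing that walks of length $O(h)$ suffice in the quotient. I would handle this either by identifying a single ``hub'' colour class reachable from each $s_i$ in a bounded number of quotient steps (so that all $\binom{h}{2}$ paths can be routed via the hub with length $O(1)$), or, in the degenerate situation where no such hub exists, by exploiting the resulting long biregular chain of colour classes to construct $K_h$ directly through a parallel-path argument in which the $h$ branch vertices and $\binom{h}{2}$ connecting paths are extracted from the chain itself, again using that each class has size $>t\geq 3h^3\gg \binom{h}{2}$ to make the paths internally disjoint. In either sub-case the desired $K_h$-subdivision is obtained, contradicting the hypothesis.
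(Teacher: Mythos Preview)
Your setup matches the paper's: assume $|N_G(Z)|\ge h$, take branch vertices $s_1,\dots,s_h\in N_G(Z)$ with neighbours $z_i\in Z$, and use that every class meeting $Z$ has size $>t$, that $s_i$ is complete to $C_i$, and that the quotient on these classes is connected with non-empty biregular edges. The construction of the $\binom{h}{2}$ paths, however, has a genuine gap. First a minor point: the greedy step ``pick $v_{\ell+1}$ adjacent to $v_\ell$ and unblocked'' can fail even with a single blocked vertex per class, since in a biregular pair a fixed vertex may have degree~$1$ into the next class; Hall's theorem gives matchings, not extension from a prescribed vertex. This can be repaired by a reachability argument (the fraction of $D_{\ell+1}$ reachable from $R_\ell\subseteq D_\ell$ is at least $|R_\ell|/|D_\ell|$ minus the blocked fraction), and the paper carries out exactly this inside Lemma~\ref{la:find-disjoint-paths}. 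But that argument loses a $B/m$ fraction per step, so it only yields a path when the walk length $k$ satisfies $m>(k{+}1)B$.

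The real problem is therefore your claim that walks of length $O(h)$ suffice: this is false in general, because the colour-class quotient can have arbitrarily large diameter. Your hub/long-chain dichotomy is neither exhaustive (the quotient could be a tree with $h$ long arms, admitting no short-radius hub and no single chain) nor made precise; and ``extracting the branch vertices from the chain itself'' fails outright when every biregular piece is a perfect matching, since the induced subgraph on the chain then has maximum degree~$2$. The paper sidesteps the length issue entirely. It takes a Steiner tree $T$ for $\{\chi(z_1),\dots,\chi(z_h)\}$ in the quotient---which may be arbitrarily long---and observes that $T$ has at most $h$ leaves, hence $\ell\coloneqq 2|V_{\le1}(T)|+|V_{\ge3}(T)|\le 3h$ irrespective of $|V(T)|$. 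Lemma~\ref{la:find-disjoint-trees} then shows, by induction on the tree shape together with the alternating-path lemma above, that $G$ contains $\lfloor m/\ell\rfloor\ge h^2>\binom{h}{2}$ pairwise vertex-disjoint subtrees each agreeing with $T$. Assigning one entire tree to each pair $(i,j)$ gives internally disjoint $s_i$--$s_j$ paths and hence the $K_h$-subdivision; the threshold $t\ge 3h^3$ is exactly what makes $\lfloor m/\ell\rfloor\ge h^2$ once $\ell\le 3h$, with no walk-length bound anywhere in the argument.
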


Observe that the theorem addresses graphs that only exclude $K_h$ as a topological subgraph which is a weaker requirement than excluding $K_h$ as a minor.
A proof of this theorem, which forms the first main technical contribution of this paper, is provided in Subsection \ref{sec:small}.
As a central tool, it is argued that graphs, for which all color classes under the Color Refinement algorithm are large, contain large numbers of vertex-disjoint trees with predefined color patterns.
The vertex-disjoint trees then allow for the construction of a topological minor on the vertex set $N_G(Z)$.

\medskip

In order to ensure Property \ref{i:d}, we need sufficient structural information for the sets $D_i$, $i\in\{1,2\}$.
Using Theorem \ref{thm:compute-isomorphisms-t-closure}, we are able to extend structural information in form of a $\mgamma_d$-group from the sets $X_i$ to the supersets $D_i\supseteq X_i$, $i \in \{1,2\}$.

Hence, the main task that remains to be solved is the computation of the initial isomorphism-invariant sets $X_1$ and $X_2$ as well as suitable restrictions on the set $\Iso(G_1,G_2)[X_1] = \{\varphi[X_1] \mid \varphi \in \Iso(G_1,G_2)\}$.
Ideally, one would like to compute a $\mgamma_d$-group $\Gamma \leq \Sym(X_1)$ and a bijection $\theta\colon X_1 \rightarrow X_2$ such that $\Iso(G_1,G_2)[X_1] \subseteq \Gamma\theta$.
But this is not always possible.
For example, for a cycle $C_p$ of length $p$ where $p$ is a prime number, it is only possible to choose $X = V(C_p)$ (because $C_p$ is vertex-transitive) and $\Aut(C_p) \notin \mgamma_d$ for all $p > d$.

However, we are able to prove that there are isomorphism-invariant sets $X_1$ and $X_2$ such that, after individualizing a single vertex $v_1 \in X_1$ and $v_2 \in X_2$ in each input graph, the set $\Iso((G_1,v_1),(G_2,v_2))[X_1] = \{\varphi[X_1] \mid \varphi \in \Iso(G_1,G_2), v_1^{\varphi} = v_2\}$ has the desired structure.
This is achieved by the next theorem which forms the second main technical contribution of this paper
and again relies on the closure operator $\cl_t^{G}$.

Recall the definition of the constant $a$ from Theorem \ref{thm:average-degree-excluded-minor}.
Without loss of generality assume $a \geq 2$.

\begin{theorem}
 \label{thm:initial-color}
 Let $t \geq a^2h^3\log h$.
 There is a polynomial-time algorithm that, given a connected vertex-colored graph $G$, either correctly concludes that $G$ has a minor isomorphic to $K_h$
 or computes a pair-colored graph $(G',\chi')$ and a set $X \subseteq V(G')$ such that
 \begin{enumerate}
  \item $X = \{v \in V(G') \mid \chi'(v,v) = c\}$ for some color $c \in \{\chi'(v,v) \mid v \in V(G')\}$,
  \item\label{item:initial-color-1} $X \subseteq \cl_t^{(G',\chi')}(v)$ for every $v \in X$, and
  \item\label{item:initial-color-2} $X \subseteq V(G)$.
 \end{enumerate}
 Moreover, the output of the algorithm is isomorphism-invariant with respect to $G$.
\end{theorem}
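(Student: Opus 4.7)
My plan is to set $G' := G$ and $\chi' := \WL{2}{G}$, that is, to use the $2$-dimensional Weisfeiler-Leman stable coloring as the pair-coloring. The diagonal colors $\chi'(v,v)$ partition $V(G)$ into isomorphism-invariant classes $C_1,\dots,C_s$. Selecting $X$ to be any one of these classes immediately yields properties (1) and (3) of the theorem, so the task reduces to finding an isomorphism-invariant choice of $X$ for which property (2) also holds, or certifying that $G$ has $K_h$ as a minor.

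To decide when property (2) is in force for a candidate class $X$ and a vertex $v \in X$, I would exploit that $\chi'$ is built into the arc-coloring $\widetilde\chi$ of the complete graph $K_n$ on which $\cl_t^{(G',\chi')}$ operates. After individualizing $v$, a single Color Refinement round on $(K_n,\widetilde\chi)$ separates the $v$-fibers $F_c := \{w \in V(G) : \chi'(v,w) = c\}$, since the contribution of the uniquely colored $v$ records $\widetilde\chi(v,w)$ and $\widetilde\chi(w,v)$ at every $w$. Any resulting color class of size at most $t$ is then individualized in the following round. Consequently, $X \subseteq \cl_t^{(G',\chi')}(v)$ whenever every intersection $F_c \cap X$ has size at most $t$. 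By coherence of $\WL{2}{G}$, the sizes $|F_c \cap X|$ depend only on the pair of colors $(\chi'(v,v), c)$, so "fiber-smallness inside $X$" is a property of the class $X$ itself and can be tested on any single representative.

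The algorithm therefore enumerates the color classes in the isomorphism-invariant order induced by $\chi'$ and returns the first fiber-small class as $X$. The core of the argument is then to show that such a class always exists when $G$ excludes $K_h$ as a minor. Suppose not; by Theorem~\ref{thm:average-degree-excluded-minor}, some color class $C$ has all vertices of degree at most $ah\log h$ (since the average degree is bounded by $ah\log h$ and vertices in a 2-WL-stable class have a common degree). The failure of fiber-smallness on $C$ supplies a $v \in C$ and a color $c$ with $|F_c \cap C| > t = (ah\log h)^3$. Using coherence between the cells $C$ and $F_c \cap C$, the biregularity of the associated bipartite sub-structures, and the low degree of $v$, one extracts a large collection of internally vertex-disjoint trees from $v$ to many vertices of $F_c$, in the spirit of the vertex-disjoint-tree technique highlighted in the proof of Theorem~\ref{thm:small-separator-for-t-cr-bounded-closure}. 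Assembling $h$ such tree families, routed through suitably chosen low-degree cells, produces a topological $K_h$-subgraph of $G$, hence a $K_h$-minor, contradicting the hypothesis.

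The main obstacle will be precisely this last step of converting a "large-fiber obstruction" into a topological $K_h$-subgraph. The combinatorial extraction must carefully exploit coherence, biregularity between 2-WL cells, and low-degree hypotheses simultaneously, and is likely to split into several cases based on where fiber-badness manifests: on a low-degree class versus a dense one, on a pair color of high multiplicity versus low, and on whether the connecting paths can be kept internally disjoint within a single low-degree class or must traverse several. A secondary difficulty is that in some pathological inputs it may be necessary to replace $G'$ by an auxiliary pair-colored graph—obtained, for instance, by isomorphism-invariantly contracting or identifying vertices that $\WL{2}{G}$ cannot separate—in order to expose a fiber-small cell; such a construction has to be performed using only data from $\chi'$, which is where the explicit allowance for $G' \neq G$ in the statement comes into play.
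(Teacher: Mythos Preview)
Your sufficient criterion for property~(2)---that every $\chi'$-fiber of $v$ lying inside $X$ has size at most $t$---is fine, but the core step, showing that if no $\WL{2}{G}$ vertex-color class is fiber-small then $G$ contains a $K_h$-minor, is not established and does not follow from the tools you cite. A single low-degree class $C$ with a large fiber $F_c \subseteq C$ gives you one vertex $v$ and many vertices in the same $\chi'$-relation to it; a topological $K_h$ needs $h$ branch vertices together with $\binom{h}{2}$ internally disjoint paths, and nothing in your setup produces the branch vertices or a routing scheme. The disjoint-tree machinery behind Theorem~\ref{thm:small-separator-for-t-cr-bounded-closure} applies in a very different configuration---a set $V_1$ of $h$ already-individualized vertices adjacent to a region $V_2$ in which \emph{every} color class is large---which you do not have here.

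The paper's proof takes a different route. It is a recursive procedure with restarts: if some edge color $c^E$ has small component-side $s(c^E) \leq ah^3$, it recurses on the quotient $G/c^E$ and then explicitly rebuilds an auxiliary $G'$ by re-attaching original vertices of the contracted components---this is where $G' \neq G$ actually arises, and it is central rather than a secondary patch. Otherwise it selects the smallest vertex-color class $V_c$ and, via Lemma~\ref{la:closure-bipartite} (itself built on Lemma~\ref{la:hat-graph-excluded-minor} and the average-degree bound), shows $V_c \subseteq \cl_t(v)$ when the relevant $G[c^E]$ is connected; when $G[c^E]$ is disconnected, an involved analysis of the per-component closures $D_i$ together with a minor-model construction on a coarsened partition either certifies $V_c \subseteq D_i$ for all $i$ or exhibits an isomorphism-invariant strict refinement of the vertex coloring, triggering a restart. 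None of these mechanisms---quotienting, Lemma~\ref{la:closure-bipartite}, the component-closure analysis, or the restart loop---appear in your proposal.
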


Observe that Property \ref{item:initial-color-1} and \ref{item:initial-color-2} of the theorem imply that $(\Aut(G))_v[X] \in \mgamma_t$ for all $v \in X$ by Theorem \ref{thm:compute-isomorphisms-t-closure} (by setting $X_1 = X_2 = \{v\}$ and $\Gamma\theta$ as the singleton set containing the unique bijection from $X_1$ to $X_2$).

For technical reasons, the theorem actually provides a second graph $(G_i',\chi_i')$ for both input graphs $G_i$.
Intuitively speaking, one can think of $G_i'$ as an extension of $G_i$ which allows us to build additional structural information about $G_i$ into the graph structure of $G_i'$.
Also, the algorithm heavily exploits the $2$-dimensional Weisfeiler-Leman algorithm leading to pair-colored graphs.

The remainder of this section is devoted to proving both theorems above.
First, Theorem \ref{thm:small-separator-for-t-cr-bounded-closure} is proved in Subsection \ref{sec:small} and then Theorem \ref{thm:initial-color} is proved in Subsection \ref{sec:initial-color}.
Afterwards, the complete algorithm is assembled in Section \ref{sec:isomorphism}.

\subsection{Finding Separators of Small Size}\label{sec:small}

In this subsection, we give a proof of Theorem \ref{thm:small-separator-for-t-cr-bounded-closure}.
Let us start by giving some intuition for the proof.
Let $G$ be a graph and let $X \subseteq V(G)$.
Also define $D \coloneqq \cl_t^{G}(X)$ and let us suppose for simplicity that $G - D$ is connected with vertex set $Z \coloneqq V(G) \setminus D$.
Assume towards a contradiction that $|N_G(Z)| \geq h$ and let us fix $h$ distinct vertices $v_1,\dots,v_h \in N_G(Z)$.
We need to show that $G$ contains a topological subgraph isomorphic to $K_h$.
We use the vertices $v_1,\dots,v_h$ as the vertices of this topological subgraph, i.e., our task is to find internally vertex-disjoint paths $P_{ii'}$ connecting $v_i$ and $v_{i'}$ for all $ii' \in \binom{[h]}{2}$.
Actually, we shall achieve a stronger result by constructing vertex-disjoint, connected subgraphs $H_1,\dots,H_r$ of $G - D$, where $r \coloneqq \binom{h}{2}$, such that $v_i \in N_G(V(H_j))$ for all $i \in [h]$ and $j \in [r]$, i.e., each subgraph $H_j$ is adjacent to all the vertices $v_1,\dots,v_h$ (see Figure \ref{fig:topological-subgraph-from-large-separator}).
Note that each such subgraph $H_j$ can be used to construct one of the paths $P_{ii'}$ for $ii' \in \binom{[h]}{2}$. 

\begin{figure}
 \centering
 \begin{tikzpicture}
  \draw (0,0) ellipse (0.8cm and 1.8cm);
  \draw[rounded corners] (1.2,-1.8) rectangle (7.6,1.8);
  
  \node at (-1.2,0) {$D$};
  \node at (8.0,0) {$Z$};
  
  \node[ellipse,minimum height=30pt,minimum width=10pt,align=center,lightcolor1] (v1) at (2.5,0) {};
  \node[ellipse,minimum height=30pt,minimum width=10pt,align=center,lightcolor2] (v2) at (3.5,0) {};
  \node[ellipse,minimum height=30pt,minimum width=10pt,align=center,lightcolor3] (v3) at (4.5,0) {};
  \node[ellipse,minimum height=30pt,minimum width=10pt,align=center,lightcolor4] (v4) at (5.5,0) {};
  \node[ellipse,minimum height=30pt,minimum width=10pt,align=center,lightcolor5] (v5) at (6.5,0) {};
  \node[ellipse,minimum height=30pt,minimum width=10pt,align=center,lightcolor6] (v6) at (2,1.2) {};
  \node[ellipse,minimum height=30pt,minimum width=10pt,align=center,lightcolor7] (v7) at (3,1.2) {};
  \node[ellipse,minimum height=30pt,minimum width=10pt,align=center,lightcolor8] (v8) at (4,1.2) {};
  \node[ellipse,minimum height=30pt,minimum width=10pt,align=center,lightcolor9] (v9) at (5,1.2) {};
  \node[ellipse,minimum height=30pt,minimum width=10pt,align=center,lightcolor10] (v10) at (6,1.2) {};
  \node[ellipse,minimum height=30pt,minimum width=10pt,align=center,lightcolor11] (v11) at (2,-1.2) {};
  \node[ellipse,minimum height=30pt,minimum width=10pt,align=center,lightcolor12] (v12) at (3,-1.2) {};
  \node[ellipse,minimum height=30pt,minimum width=10pt,align=center,lightcolor13] (v13) at (4,-1.2) {};
  \node[ellipse,minimum height=30pt,minimum width=10pt,align=center,lightcolor14] (v14) at (5,-1.2) {};
  \node[ellipse,minimum height=30pt,minimum width=10pt,align=center,lightcolor15] (v15) at (6,-1.2) {};
  
  \scoped[on background layer]{
  \foreach \i/\j in {1/2,2/3,3/4,6/7,7/8,3/8,3/9,4/15,5/15,9/10,11/12,12/13,13/14,14/15,4/10,5/10,8/9}{
   \draw[thickedge] (v\i.center) edge (v\j.center);
  }
  }
    
  \node[emptyvertex,fill=dandelion!80,label = {left:$v_2$}] (w1) at (0.4,0) {};
  \node[emptyvertex,fill=purple!80,label = {left:$v_1$}] (w2) at (0.2,1.2) {};
  \node[emptyvertex,fill=darkspringgreen!80,label = {left:$v_3$}] (w3) at (0.2,-1.2) {};
  
  \foreach \s in {-1,0,1}{
   \node[tinyvertex,color1] (t1-\s) at ($(2.5,0)+(0,0.3*\s)$) {};
   \draw[thick] (w1) edge (t1-\s);
   \node[tinyvertex,color6] (t6-\s) at ($(2,1.2)+(0,0.3*\s)$) {};
   \draw[thick] (w2) edge (t6-\s);
   \node[tinyvertex,color4] (t11-\s) at ($(2,-1.2)+(0,0.3*\s)$) {};
   \draw[thick] (w3) edge (t11-\s);
  }
  
  \foreach \i in {2,3,4,7,8,12,13,14,15}{
   \foreach \s in {-1,0,1}{
    \node[tinyvertex,color\i] (t\i-\s) at ($(v\i.center)+(0,0.3*\s)$) {};
   }
  }
  
  \foreach \i/\j in {1/2,2/3,3/4,6/7,7/8,3/8,11/12,12/13,13/14,14/15,4/15}{
   \foreach \s in {-1,0,1}{
    \draw[thick] (t\i-\s) edge (t\j-\s);
   }
  }
 \end{tikzpicture}
 \caption{Visualization of the construction of a topological subgraph $K_h$ for $h = 3$.
  The figure shows $\binom{h}{2} = 3$ vertex-disjoint subgraphs $H_1,H_2,H_3$ of $G[Z]$ each of which is adjacent to $v_1,v_2,v_3$.
  Note that only the vertices and edges appearing in one the graphs $H_j$ are shown.
  Indeed, by assumption, all color classes in $Z$ contain at least $3h^{3} = 81$ vertices.
  The edges of the color class graph on $Z$ are visualized by thick, gray connections.}
 \label{fig:topological-subgraph-from-large-separator}
\end{figure}

To construct the subgraphs $H_j$, we build on the assumption that $D = \cl_t^{G}(X)$ which, by Definition \ref{def:t-cr-bounded}, means that there is a vertex-coloring $\chi$ of $G$ that is stable with respect to the Color Refinement algorithm, $|[v]_{\chi}| = 1$ for all $v \in D$, and $|[w]_{\chi}| > t \geq 3h^{3}$ for all $w \in Z$.
Towards this end, consider the \emph{color class graph of $(G,\chi)$ on $Z$}, denoted by $G[[\chi,Z]]$, which is the graph with vertex set $V(G[[\chi,Z]]) \coloneqq \chi(Z)$ and edge set $E(G[[\chi,Z]]) \coloneqq \{\chi(w_1)\chi(w_2) \mid w_1w_2 \in E(G[Z])\}$.
Clearly, $G[[\chi,Z]]$ is connected since $G[Z]$ is connected.
Since $v_1,\dots,v_h \in N_G(Z)$, there are colors $c_1,\dots,c_h$ such that $v_i \in N_G(\chi^{-1}(c_i))$ for every $i \in [h]$, i.e., $v_i$ is adjacent to some vertex contained in the color class corresponding to color $c_i$.
Since $\chi$ is stable with respect to the Color Refinement algorithm and $|[v_i]_{\chi}| = 1$ by assumption, it actually follows that $N_G(v_i) \subseteq \chi^{-1}(c_i)$, i.e, $v_i$ is adjacent to every vertex from the color class corresponding to $c_i$.

Now, let $T$ be a Steiner tree for $c_1,\dots,c_h$ in $G[[\chi,Z]]$, i.e., $T$ is a subtree of $G[[\chi,Z]]$ with leaves exactly $c_1,\dots,c_h$ (such a tree clearly always exists).
We construct the graphs $H_1,\dots,H_r$ in such a way that each $H_j$ ``mimics'' the structure of $T$, i.e., for every color $c$ that is a vertex of $T$, the graph $H_j$ contains exactly one vertex with color $c$, and two vertices with colors $c,c'$ are adjacent if and only if $cc'$ is an edge of $T$ (see Figure \ref{fig:topological-subgraph-from-large-separator}).
Here, we crucially use that all color classes are sufficiently large which gives us enough ``capacity'' to fit all the subgraphs $H_j$ into the desired color classes at the same time.

Before turning to the formal proof of Theorem \ref{thm:small-separator-for-t-cr-bounded-closure}, let us briefly comment on the assumption that $G - D$ is connected which we made for the above argument.
If $G - D$ is not connected, it may happen that color classes are not sufficiently large when restricted to a connected component $Z$, since color classes may span over multiple connected components of $G - D$.
However, this is not a problem since we can consider all these connected components together.
Indeed, for our argument, we do not require that $G[Z]$ is connected, but it suffices that $G[[\chi,Z]]$ is connected.

The following lemma reformulates the task of proving Theorem \ref{thm:small-separator-for-t-cr-bounded-closure} as indicated above.

Recall that for a vertex-colored graph $(G,\chi)$ and $W \subseteq V(G)$, we define the \emph{color class graph of $(G,\chi)$ on $W$} to be the graph $G[[\chi,W]]$ with vertex set $V(G[[\chi,W]]) \coloneqq \chi(W)$ and edge set $E(G[[\chi,W]]) \coloneqq \{\chi(w_1)\chi(w_2) \mid w_1w_2 \in E(G[W])\}$.

\begin{lemma}
 \label{la:small-color-classes-excluded-kh}
 Let $h \geq 1$.
 Let $G$ be a graph and $V(G) = V_1 \uplus V_2$ be a partition of the vertex set of $G$.
 Also let $\chi$ be a vertex-coloring of $G$ and suppose that
 \begin{enumerate}
  \item $G[[\chi,V_2]]$ is connected,
  \item $|V_1| \geq h$ and $N_G(V_2) = V_1$,
  \item $|[v]_\chi| = 1$ for all $v \in V_1$,
  \item $|[w]_\chi| \geq 3h^3$ for all $w \in V_2$, and
  \item $\chi$ is stable with respect to the Color Refinement algorithm.
 \end{enumerate}
 Then $G$ has a topological subgraph isomorphic to $K_h$.
\end{lemma}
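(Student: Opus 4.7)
My plan is to argue by contradiction: assuming every color class meeting $V_2$ has size at least $3h^3$, I will construct a subdivision of $K_h$ in $G$, contradicting hypothesis~(1). The $h$ branch vertices of this subdivision will come from $V_1$ (each a singleton color class by hypothesis~(4)), while the $\binom{h}{2}$ internally vertex-disjoint branch paths will run through $V_2$.

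First I would fix any $h$ distinct vertices $v_1,\ldots,v_h\in V_1$, which is possible because $|V_1|\geq h$. Each $v_i$ has at least one neighbor in $V_2$ since $V_1=N_G(V_2)$, and the quotient $G[[\chi,V_2]]$ is connected by hypothesis~(2). Hence the ``extended quotient'' obtained from $G[[\chi,V_2]]$ by adding $v_1,\ldots,v_h$ with edges to the color classes that contain a neighbor of them is also connected. For every pair $\{i,j\}$ I would then fix an \emph{induced} path $\pi_{ij}$ from $v_i$ to $v_j$ in this extended quotient; being induced, $\pi_{ij}$ visits each color of $V_2$ at most once, so any realization of $\pi_{ij}$ as an actual path in $G$ uses at most one vertex from each color class of $V_2$.

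The heart of the argument is to realize the $\binom{h}{2}$ color patterns $\pi_{ij}$ simultaneously as internally vertex-disjoint paths $P_{ij}$ in $G$. I would process the pairs in some order, maintaining a forbidden set $F\subseteq V_2$ of vertices already consumed by previously constructed paths. Because each $\pi_{ij}$ is induced in the quotient, it contributes at most one vertex to $F$ per color class of $V_2$, so at all times $|F\cap[w]_\chi|\leq\binom{h}{2}<h^2$ for every $w\in V_2$, which sits comfortably below $|[w]_\chi|\geq 3h^3$. Turning a pattern $\pi_{ij}$ into an actual path $P_{ij}$ disjoint from $F$ rests on the biregularity that follows from hypothesis~(5): for any two color classes $c,c'$ adjacent in the quotient, the bipartite graph $G[c,c']$ is biregular and therefore, by the Hall-style argument recalled in the preliminaries, admits matchings saturating any subset of the smaller side. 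Gluing such matchings together along $\pi_{ij}$, while respecting the boundary conditions at $v_i$ and $v_j$ and avoiding $F$, should produce the required path.

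I expect the main obstacle to be precisely this realization step, because a naive forward greedy extension along $\pi_{ij}$ can fail to terminate at the prescribed endpoint $v_j$. The fix I have in mind is to apply Hall's Marriage Theorem globally along the whole color pattern at once, viewing the sequence of biregular bipartite graphs indexed by consecutive colors of $\pi_{ij}$ as one flow problem between the admissible neighbors of $v_i$ in the first color class and the admissible neighbors of $v_j$ in the last one, and using the slack $3h^3-h^2$ inside each color class to verify the Hall condition even after removing $F$. Once all $\binom{h}{2}$ pairwise internally vertex-disjoint paths have been built, together with $v_1,\ldots,v_h$ they form a subdivision of $K_h$ inside $G$, contradicting hypothesis~(1).
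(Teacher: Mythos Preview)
Your overall plan---branch vertices in $V_1$, branch paths through $V_2$, using that every colour class meeting $N(v_i)$ lies entirely in $N(v_i)$ by stability---is sound and is also how the paper proceeds. The gap is in the realization step, and the Hall/flow fix you sketch does not close it.

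An induced path $\pi_{ij}$ in the quotient can pass through arbitrarily many colour classes $c_1,\dots,c_s$; nothing bounds $s$ in terms of $h$. The forward-expansion estimate underlying your ``global Hall condition'' says that the fraction of $c_{k+1}$ reachable from $c_1\setminus F$ through $c_2\setminus F,\dots,c_k\setminus F$ is at least the fraction of $c_k$ reachable, minus $|F\cap c_{k+1}|/|c_{k+1}|$. Iterating yields a reachable fraction in $c_s$ of at least $1-\sum_{k=1}^{s}|F\cap c_k|/|c_k|\ge 1-s h^2/(3h^3)=1-s/(3h)$, which is vacuous once $s\ge 3h$. The per-class slack $3h^3-h^2$ is local; these deficits accumulate along a long pattern. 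A Menger-type argument fares no better: the minimum $c_1$--$c_s$ vertex cut in the layered graph is at most $\min_k|c_k|$, whereas $|F\cap(c_1\cup\dots\cup c_s)|$ can be of order $sh^2$, which exceeds $3h^3$ once $s>3h$.

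The paper circumvents the length issue by a different construction. It fixes one neighbour $w_i\in V_2$ for each $v_i$, takes a single Steiner tree $T$ in $G[[\chi,V_2]]$ spanning the colours $\chi(w_1),\dots,\chi(w_h)$ (so $\ell\coloneqq 2|V_{\le1}(T)|+|V_{\ge3}(T)|\le 3h$), and then proves via an \emph{alternating-paths} argument (Lemmas~\ref{la:find-disjoint-paths} and~\ref{la:find-disjoint-trees}) that $G$ contains at least $\lfloor m/\ell\rfloor\ge h^2$ pairwise vertex-disjoint subgraphs agreeing with $T$. The point of alternating paths is that when building the $(k{+}1)$-st copy one is allowed to reroute the previous $k$ copies; the obstruction is then at most one blocking vertex per previous copy, giving a loss of $k/m$ rather than $sk/m$, independent of the size of $T$. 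Assigning one copy of $T$ to each unordered pair $\{i,j\}$ then yields the internally vertex-disjoint branch paths. This rerouting is precisely what your one-path-at-a-time scheme with a frozen forbidden set $F$ is missing.
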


Before showing the lemma, we give a proof for Theorem \ref{thm:small-separator-for-t-cr-bounded-closure} based on Lemma \ref{la:small-color-classes-excluded-kh}.

\begin{proof}[Proof of Theorem \ref{thm:small-separator-for-t-cr-bounded-closure}]
 Let $\chi$ be the final vertex-coloring that is stable under the $t$-CR-bounded algorithm with respect to the initial set $X$.
 Let $Z$ be a connected component of $G - D$ and assume for sake of contradiction that $|N_G(Z)| \geq h$.
 Let $V_2 \coloneqq \{v \in V(G) \mid \chi(v) \in \chi(Z)\}$ and $V_1 \coloneqq N_G(V_2)$ and define $H \coloneqq G[V_1 \cup V_2]$.
 We have $|V_1| \geq |N_G(Z)| \geq h$. 
 Also, $|[v]_\chi|=1$ for all $v \in V_1 \subseteq D = \cl_t^{G}(X)$.
 Moreover, $\chi|_H$ is stable under the Color Refinement algorithm for the graph $H$ and $H[[\chi|_H,V_2]]$ is connected since $G[Z]$ is connected.
 Finally, $|[w]_\chi| > t \geq 3h^3$ for all $w \in V_2$ since $\chi$ is stable under the $t$-CR-bounded algorithm and $V_2 \cap D = \emptyset$.
 So by Lemma \ref{la:small-color-classes-excluded-kh} the graph $H$ has a topological subgraph isomorphic to $K_h$.
 But $H$ is a subgraph of $G$ and hence, $G$ also has a topological subgraph isomorphic to $K_h$.
 This gives a contradiction.
\end{proof}

We now turn to proving Lemma \ref{la:small-color-classes-excluded-kh} where we aim to construct a topological subgraph isomorphic to $K_h$.
The vertices of the topological subgraph are located in the set $V_1$.
This leaves the task to construct disjoint paths between vertices from $V_1$ using the vertices from the set $V_2$.
Actually, as already described above, it turns out to be more convenient to construct a large number of disjoint trees each of which can be used to obtain a single path connecting two vertices in $V_1$.

Let $G$ be a graph, let $\chi\colon V(G)\to C$ be a vertex-coloring and let $T$ be a tree with vertex set $V(T)=C$.
A subgraph $H\subseteq G$ \emph{agrees} with $T$ if $\chi|_{V(H)} \colon H \cong T$, i.e., the coloring $\chi$ induces an isomorphism between $H$ and $T$.
Equivalently, $H$ agrees with $T$ if $|V(H)\cap\chi^{-1}(c)|=1$ for every $c \in C$ and $c_1c_2\in E(T)$ if and only if $H[\chi^{-1}(c_1),\chi^{-1}(c_2)]$ contains an edge for all $c_1,c_2 \in C$.
Observe that each $H\subseteq G$ that agrees with a tree $T$ is also a tree.
Let $H_1,\ldots,H_k \subseteq G$ be $k$ pairwise vertex-disjoint subgraphs that agree with the tree $T$.
For a color $c\in C$ we define $V_c(H_1,\dots,H_k)\coloneqq\bigcup_{i=1}^k V(H_i)\cap\chi^{-1}(c)$.
Since $H_1,\ldots,H_k$ are pairwise vertex-disjoint, it holds that $|V_c(H_1,\ldots,H_k)|=k$.
The \emph{extension set} for $H_1,\dots,H_k$ and a color $c \in C$ is defined as
\begin{align*}
 W_c \coloneqq W_c(H_1,\ldots H_k) \coloneqq \Big\{ v \in \chi^{-1}(c) \;\Big|\; &\text{there are $k+1$ pairwise vertex-disjoint}\\
                                                                                 &\text{connected graphs $H_1',\ldots,H_{k+1}'$}\\
                                                                                 &\text{that agree with $T$ such that}\\
                                                                                 &\text{$V_c(H_1',\ldots,H_{k+1}')=V_c(H_1,\ldots,H_k)\cup\{v\}$}\Big\}.
\end{align*}
A visualization is also given in Figure \ref{fig:tree-extension-set}.

\begin{figure}
 \centering
 \begin{tikzpicture}
  \foreach \x/\y in {0/0,6/0,3/2,3/4}{
   \draw (\x,\y) ellipse (2.4cm and 0.6cm);
   \node[midvertex] (\x\y1) at (\x - 1.8,\y) {};
   \node[midvertex] (\x\y2) at (\x - 0.9,\y) {};
   \node[midvertex] (\x\y3) at (\x + 0.0,\y) {};
   \node[midvertex] (\x\y4) at (\x + 0.9,\y) {};
   \node[midvertex] (\x\y5) at (\x + 1.8,\y) {};
  }
  \node at (1.2,4.2) {{\footnotesize $v_1$}};
  \node at (2.1,4.2) {{\footnotesize $v_2$}};
  \node at (3.0,4.2) {{\footnotesize $v_3$}};
  \node at (3.9,4.2) {{\footnotesize $v_4$}};
  \node at (4.8,4.2) {{\footnotesize $v_5$}};
  
  \node at (6.0,4.2) {{\footnotesize $\chi^{-1}(c_1)$}};
  \node at (6.0,2.2) {{\footnotesize $\chi^{-1}(c_2)$}};
  \node at (-2.0,-0.8) {{\footnotesize $\chi^{-1}(c_3)$}};
  \node at (8.0,-0.8) {{\footnotesize $\chi^{-1}(c_4)$}};
  
  \foreach \v/\w in {00/32,60/32,32/34}{
   \draw[thick] (\v1) edge (\w1);
   \draw[thick] (\v2) edge (\w2);
   \draw[thick] (\v3) edge (\w3);
   \draw[thick] (\v4) edge (\w4);
   \draw[thick] (\v5) edge (\w5);
  }
  \draw[thick] (322) edge (341);
  \draw[thick] (323) edge (342);
  \draw[thick] (324) edge (343);
  \draw[thick] (325) edge (344);
  \draw[thick] (321) edge (345);
  
  \scoped[on background layer]{
   \draw[line width = 6pt, blue!40] (001.center) edge (321.center);
   \draw[line width = 6pt, blue!40] (601.center) edge (321.center);
   \draw[line width = 6pt, blue!40] (321.center) edge (341.center);
   
   \draw[line width = 6pt, red!40] (002.center) edge (322.center);
   \draw[line width = 6pt, red!40] (602.center) edge (322.center);
   \draw[line width = 6pt, red!40] (322.center) edge (342.center);
   
   \draw[line width = 6pt, darkpastelgreen!40] (005.center) edge (325.center);
   \draw[line width = 6pt, darkpastelgreen!40] (605.center) edge (325.center);
   \draw[line width = 6pt, darkpastelgreen!40] (325.center) edge (344.center);
   
   \draw[line width = 6pt, darkpastelgreen!40] (005.center) edge (325.center);
   \draw[line width = 6pt, darkpastelgreen!40] (605.center) edge (325.center);
  }
 \end{tikzpicture}
 \caption{The figure shows three vertex-disjoint subgraphs $H_1,H_2,H_3$ (in red, blue and green) that agree with the tree $T$ with node set $V(T) = \{c_1,c_2,c_3,c_4\}$ and edge set $E(T) = \{c_1c_2,c_2c_3,c_2c_4\}$.
  The extension set for the color $c_1$ is $W_{c_1}(H_1,H_2,H_3) = \{v_3,v_5\}$.
  Note that there is a subgraph $H_4$ that agrees with $T$ containing $v_3$ and being vertex-disjoint from $H_1,H_2,H_3$.
  However, to argue that $v_5$ is contained in the extension set, one also has to modify at least one of the already constructed subgraphs $H_1,H_2,H_3$.}
 \label{fig:tree-extension-set}
\end{figure}

Intuitively speaking, the idea is to construct subgraphs $H_1,\dots,H_r$ that agree with $T$ one after the other.
Suppose we already constructed subgraphs $H_1,\dots,H_k$ for some $k < r$, and we aim to construct the next subgraph $H_{k+1}$.
To do so, we can fix an arbitrary root of $T$ and construct $H_{k+1}$ in a bottom-up fashion starting at the leaves of $T$.
Unfortunately, when constructing $H_{k+1}$ this way, we may get stuck (i.e., it is not possible to extend the partial subgraph $H_{k+1}$ any further without using vertices from $H_1,\dots,H_k$) which forces us to make changes to the already constructed $H_1,\dots,H_k$.
We formalize this idea by providing, for each node $c$ of $T$, a lower bound on the size of the extension set when restricting to the subtree of $T$ rooted at $c$.
In the end, we obtain that the extension set for the root contains at least one element and hence, there exist $k+1$ disjoint connected subgraphs $H_1',\ldots,H_{k+1}'$ (which may be completely different from $H_1,\dots,H_k$, but this is not a problem for our purposes).

The next lemma serves as an important intermediate step to deal with long induced paths of $T$.
Indeed, this case is critical since the number of nodes of $T$ of degree $2$ may be unbounded.
To be more precise, in our application, the number of leaves of $T$ is bounded by $h$, and thus, the number of internal nodes of degree at least $3$ is also bounded by $h$.
All color classes in $V_2$ have size at least $3h^{3}$ (which is much larger than $r = \binom{h}{2}$) which means that, when providing lower bounds on the size of the extension sets, we can afford some small loss at all internal nodes of degree at least $3$.

In contrast, the number of nodes of degree $2$ in $T$ is unbounded, and hence we cannot afford any loss at those nodes when lower-bounding the size of an extension set.
We use the following lemma to deal with long induced paths of $T$.

\begin{lemma}
\label{la:find-disjoint-paths}
 Let $G$ be a graph and let $\chi\colon V(G)\to C$ be a vertex-coloring and let $P$ be a path with vertex set
 $V(P)=C=\{c_1,\ldots,c_s\}$ and edge set $E(P)=\{c_ic_{i+1}\mid i\in[s-1]\}$.
 Also, suppose that $G[\chi^{-1}(c_i),\chi^{-1}(c_{i+1})]$ is a
 non-empty biregular graph for every $i\in[s-1]$.
 Let $m \coloneqq \min_{c\in C} |\chi^{-1}(c)|$.
 Let $H_1,\ldots,H_k$ be $k$ pairwise vertex-disjoint path graphs that agree with $P$.
 Let $X\subseteq\chi^{-1}(c_1)\setminus V_{c_1}(H_1,\ldots,H_k)$ and
 \begin{align*}
  W_{X,c_s} \coloneqq \Big\{ v \in \chi^{-1}(c_s) \;\Big|\; &\text{there are $k+1$ pairwise vertex-disjoint}\\
                                                            &\text{path graphs $H_1',\ldots,H_{k+1}'$ such that}\\
                                                            &\text{$V_{c_1}(H_1',\ldots,H_{k+1}')=V_{c_1}(H_1,\ldots,H_k)\cup\{x\}$ for some $x\in X$}\\
                                                            &\text{and $V_{c_s}(H_1',\ldots,H_{k+1}')=V_{c_s}(H_1,\ldots,H_k)\cup\{v\}$}\Big\}.
 \end{align*}
 Then
 \[\frac{|W_{X,c_s}|}{|\chi^{-1}(c_s)|} \geq \frac{|X|}{|\chi^{-1}(c_1)|} - \frac{k}{m}.\]
\end{lemma}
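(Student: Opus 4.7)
I would prove the lemma by induction on the path length $s$.

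\emph{Base case} ($s=1$). Then $c_1=c_s$ and, since $X\subseteq\chi^{-1}(c_1)\setminus V_{c_1}(H_1,\ldots,H_k)$, each $x\in X$ gives a valid singleton extension $H_{k+1}'=(x)$ that is vertex-disjoint from $H_1,\ldots,H_k$. Hence $W_{X,c_1}\supseteq X$, and the inequality reduces to $|X|/|\chi^{-1}(c_1)|\geq |X|/|\chi^{-1}(c_1)|-k/m$, which is immediate.

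\emph{Inductive step.} Apply the induction hypothesis to the prefix path $T'$ on colors $c_1,\ldots,c_{s-1}$ together with the truncated paths $H_i'$ obtained by discarding the last vertex of each $H_i$; this is again a biregular layered graph with $k$ disjoint reference paths. One obtains a set $W'\subseteq\chi^{-1}(c_{s-1})$ satisfying $|W'|/|\chi^{-1}(c_{s-1})|\geq|X|/|\chi^{-1}(c_1)|-k/m$. Each $v'\in W'$ carries a witness system of $k+1$ vertex-disjoint paths agreeing with $T'$ whose last-layer vertex set is $V_{c_{s-1}}(H_1,\ldots,H_k)\cup\{v'\}$. One lifts such a witness to a witness for $T$ by picking a neighbor $v\in\chi^{-1}(c_s)\setminus V_{c_s}(H_1,\ldots,H_k)$ of $v'$ (appending the edge $v'v$ to the new path) and appending, to the $k$ old witness paths, the original last edges of the $H_i$ taken in an appropriate permutation matching the last-layer endpoints. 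Biregularity of $G[\chi^{-1}(c_{s-1}),\chi^{-1}(c_s)]$ then gives $|N_G(W')\cap\chi^{-1}(c_s)|/|\chi^{-1}(c_s)|\geq |W'|/|\chi^{-1}(c_{s-1})|$, and excluding the at most $k$ vertices of $V_{c_s}(H_1,\ldots,H_k)$ costs at most $k/|\chi^{-1}(c_s)|\leq k/m$.

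\emph{Main obstacle.} The naive accounting above incurs an extra additive loss of $k/|\chi^{-1}(c_s)|$ at each of the $s-1$ inductive steps, which would accumulate to an error of order $(s-1)k/m$ rather than the single $k/m$ demanded by the lemma. To close this gap one must exploit the fact that the definition of $W_{X,c_s}$ places constraints only on the two endpoint colors $c_1$ and $c_s$: at any intermediate color $c_i$ the $(k+1)$-th witness path may freely traverse vertices of $V_{c_i}(H_1,\ldots,H_k)$, provided one simultaneously reroutes the conflicting truncated path $H_i'$ through a neighbour outside its original trajectory. I expect the core technical difficulty to be designing this swap/augmentation step carefully enough that the per-layer error is absorbed rather than accumulated, so that the only irreducible $k/m$ term in the final bound originates from the genuine hard constraint $X\subseteq\chi^{-1}(c_1)\setminus V_{c_1}(H_1,\ldots,H_k)$ at the boundary layer $c_1$.
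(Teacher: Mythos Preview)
Your plan correctly identifies the central obstacle, but it does not resolve it, and your diagnosis of where the single $k/m$ loss ``lives'' is slightly off. A direct induction on $s$ with the lemma statement as induction hypothesis cannot be closed: even granting that the $k$ old witness paths can be extended by reattaching the original last edges (which is fine, since the witness system at layer $c_{s-1}$ has the same $k$ endpoints as the truncated $H_i$, just permuted), the new path ending at $v'\in W'$ may have all its $c_s$-neighbours inside $V_{c_s}(H_1,\ldots,H_k)$. Rerouting one of the old paths to free up such a neighbour may in turn create a new conflict in layer $c_s$, and to control this you must follow an augmenting chain---which can also go \emph{backwards} through several layers, not just within layer $c_s$. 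This information is not carried by the set $W'$ alone, so the lemma statement is too weak to serve as the inductive invariant.

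The paper makes exactly the move you anticipate, but globally rather than one layer at a time: it sets up a directed alternating-path structure with forward edges (edges not on any $H_j$, oriented $c_i\to c_{i+1}$) and backward edges (edges on some $H_j$, oriented $c_{i+1}\to c_i$), and lets $A$ be the set of vertices reachable from $X$ by \emph{admissible} paths (forward into a vertex of some $H_j$ must be followed immediately by backward). A standard augmenting-path swap shows $A_{c_s}\subseteq W_{X,c_s}$. The layer-by-layer induction is then performed on the quantity $|A_{c_i}|/|\chi^{-1}(c_i)|$, not on $|W|$. The key accounting device is the notion of a \emph{blocking vertex}: for each $H_j$ there is at most one layer where the admissible region first fails to enter $H_j$, so the total number of blocking vertices across all layers is at most $k$. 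Biregularity gives $|A_{c_{i+1}}|/|\chi^{-1}(c_{i+1})|\geq |A_{c_i}|/|\chi^{-1}(c_i)| - k_{i+1}/m$ with $\sum_i k_i\le k$, and summing yields the single $k/m$. So the loss is not localised at $c_1$; it is distributed over all layers but globally bounded by $k$ because each $H_j$ contributes at most one blocking vertex. Strengthening your induction to track a reachability set of this kind (rather than the extension set $W$) is precisely what is needed to make your plan go through.
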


To grasp the meaning of this lemma, it is helpful to think of $P$ as an induced path of $T$ with $c_1$ being a descendant of $c_s$ (see also Figure \ref{fig:extension-set-induced-paths}).
The set $X$ is the extension set with respect to the subtree rooted at $c_1$, and we aim to provide a lower bound for the size of the extension set with respect to the subtree rooted at $c_s$.
Towards this end, the lemma provides a lower bound on the size of $W_{X,c_s}$ which is a subset of the desired extension set.

\begin{figure}
 \centering
 \begin{tikzpicture}
  \foreach[count=\a] \x/\y in {0/0,3/0,1.5/1,1.5/2.5,1.5/4}{
   \draw (\x,\y) ellipse (1.2cm and 0.3cm);
   \node[tinyvertex] (\a1) at (\x - 0.9,\y) {};
   \node[tinyvertex] (\a2) at (\x - 0.6,\y) {};
   \node[tinyvertex] (\a3) at (\x - 0.3,\y) {};
  }
  
  \node at (3.3,1.1) {{\footnotesize $\chi^{-1}(c_1)$}};
  \node at (3.3,2.6) {{\footnotesize $\chi^{-1}(c_i)$}};
  \node at (3.3,4.1) {{\footnotesize $\chi^{-1}(c_s)$}};
  
  \foreach \i in {1,2,3}{
   \draw[thick] (1\i) edge (3\i);
   \draw[thick] (2\i) edge (3\i);
   \draw[thick,decorate,decoration=snake,segment length = 0.5cm] (3\i) -- (4\i);
   \draw[thick,decorate,decoration=snake,segment length = 0.5cm] (4\i) -- (5\i);
  }
  
  \node at (1.5,1.8) {{\Large \vdots}};
  \node at (1.5,3.3) {{\Large \vdots}};
  
  \node at (0,-0.7) {{\Large \vdots}};
  \node at (3,-0.7) {{\Large \vdots}};
  
  \draw[gray!60, fill = gray!60] (2,1) ellipse (0.4cm and 0.18cm);

  \foreach[count=\a] \x/\y in {7/0,10/0,8.5/1,8.5/2.5,8.5/4}{
   \draw (\x,\y) ellipse (1.2cm and 0.3cm);
   \node[tinyvertex] (\a1) at (\x - 0.9,\y) {};
   \node[tinyvertex] (\a2) at (\x - 0.6,\y) {};
   \node[tinyvertex] (\a3) at (\x - 0.3,\y) {};
  }
  
  \node at (10.3,1.1) {{\footnotesize $\chi^{-1}(c_1)$}};
  \node at (10.3,2.6) {{\footnotesize $\chi^{-1}(c_i)$}};
  \node at (10.3,4.1) {{\footnotesize $\chi^{-1}(c_s)$}};
  
  \node at (9.5,1.8) {{\Large \vdots}};
  \node at (9.5,3.3) {{\Large \vdots}};
  
  \node at (7,-0.7) {{\Large \vdots}};
  \node at (10,-0.7) {{\Large \vdots}};
  
  \draw[gray!60, fill = gray!60] (9,1) ellipse (0.4cm and 0.18cm);
  
  \node[tinyvertex] (14) at (7.5,0) {};
  \node[tinyvertex] (24) at (10.5,0) {};
  \node[tinyvertex] (34) at (9,1) {};
  \node[tinyvertex] (44) at (8.8,2.5) {};
  \node[tinyvertex] (45) at (9.2,2.5) {};
  \node[tinyvertex] (54) at (9,4) {};
  
  \foreach \i in {1,2,3,4}{
   \draw[thick] (1\i) edge (3\i);
   \draw[thick] (2\i) edge (3\i);
  }
  
  \node at (9.2,4) {{\footnotesize $v$}};
  
  \draw[thick,decorate,decoration=snake,segment length = 0.5cm] (31) -- (42);
  \draw[thick,decorate,decoration=snake,segment length = 0.5cm] (32) -- (44);
  \draw[thick,decorate,decoration=snake,segment length = 0.5cm] (33) -- (45);
  \draw[thick,decorate,decoration=snake,segment length = 0.5cm] (34) -- (43);
  
  \draw[thick,decorate,decoration=snake,segment length = 0.5cm] (42) -- (51);
  \draw[thick,decorate,decoration=snake,segment length = 0.5cm] (43) -- (52);
  \draw[thick,decorate,decoration=snake,segment length = 0.5cm] (44) -- (53);
  \draw[thick,decorate,decoration=snake,segment length = 0.5cm] (45) -- (54);
 \end{tikzpicture}
 \caption{The left side shows part of a tree $T$ rooted at $c_s$ with a long induced path between $c_s$ and $c_1$.
  We already constructed three subgraphs $H_1,H_2,H_3$ that agree with $T$.
  The extension set $X$ for the subtree rooted at $c_1$ is shown in gray.
  We have $v \in W_{X,c_s}$ via witnessing paths shown on the right side.
  In particular, $v$ is contained in the extension set $W_{c_s}$ for the color $c_s$.}
 \label{fig:extension-set-induced-paths}
\end{figure}

\begin{proof}
 The proof uses an alternating-paths argument.
 We split the edges of $G$ into \emph{forward} and \emph{backward} edges and direct all edges accordingly. 
 Let
 \begin{align*}
  E_{\fw} \coloneqq \Big\{(v,w) \;\Big|\; &vw\in E(G)\setminus (\bigcup_{j\in[k]} E(H_j)),v\in\chi^{-1}(c_i)\text{ and }\\
                                          &w\in\chi^{-1}(c_{i+1})\text{ for some }i\in[s-1]\Big\}.
 \end{align*}
 and
 \begin{align*}
  E_{\bw} \coloneqq \Big\{(v,w) \;\Big|\; &vw\in \bigcup_{j\in[k]} E(H_j),v\in\chi^{-1}(c_{i+1})\text{ and }\\
                                          &w\in\chi^{-1}(c_i)\text{ for some }i\in[s-1]\Big\}.
 \end{align*}
 We consider directed paths that start in $X\subseteq \chi^{-1}(c_1)\setminus V_{c_1}(H_1,\ldots,H_k)$.
 A path $v_1,\ldots,v_t$ in $G$ is \emph{admissible} if
 \begin{enumerate}
  \item $v_1 \in X$,
  \item $(v_i,v_{i+1}) \in E_{\fw} \cup E_{\bw}$, and
  \item if $(v_i,v_{i+1})\in E_{\fw}$ and $v_{i+1} \in \bigcup_{j\in[k]} V(H_j)$ then $i\leq t-2$ and $(v_{i+1},v_{i+2}) \in E_{\bw}$
 \end{enumerate}
 for all $i \in [t-1]$. Let
 \[A\coloneqq\{v\in V(G) \mid \text{there is an admissible path $v_1,\ldots,v_t$ such that } v_t=v\}.\]
 For a color $c\in C$ let $A_c\coloneqq A \cap \chi^{-1}(c)$.
 
 \begin{claim}
  \label{cl:extension-set-contains-admissible}
  $A_{c_s} \subseteq W_{X,c_s}$.
 \end{claim}
 \begin{claimproof}
  Let $v \in A_{c_s}$ and let $v_1,\ldots,v_t$ be an admissible path of minimal length $t$ such that $v = v_t$.
  Let $H_{k+1}$ be the corresponding path graph with $V(H_{k+1}) \coloneqq \{v_1,\ldots,v_t\}$ and $E(H_{k+1}) \coloneqq \{v_iv_{i+1}\mid i\in[t-1]\}$.
  Consider the graph $H$ with vertex set $V(H) \coloneqq \bigcup_{i\in[k+1]} V(H_i)$ and edge set
  \begin{equation*}
   E(H) \coloneqq \left(\bigcup_{i\in[k]} E(H_i) \setminus E(H_{k+1})\right) \cup \left(E(H_{k+1})\setminus\bigcup_{i\in[k]} E(H_i)\right).
  \end{equation*}
  We claim that $H$ is the disjoint union of $(k+1)$ many graphs $H_1',\ldots,H_{k+1}'$ (and possibly isolated vertices) that agree with $P$.
  
  To see this, consider some vertex $u \in V(H)$ such that $u \in \chi^{-1}(c_i)$ for some $i \in \{2,\dots,s-1\}$.
  If $u$ is contained in exactly one of the sets $V(H_i)$, $i \in [k+1]$, then it is easy to see that $u$ has exactly two neighbors in $H$, one in $\chi^{-1}(c_{i-1})$ and another in $\chi^{-1}(c_{i+1})$.
  Otherwise, $u \in V(H_{k+1})$ and $u \in V(H_i)$ for some $i \in [k]$ (recall that $V(H_i) \cap V(H_j) = \emptyset$ for all distinct $i,j \in [k]$).
  Then, on the admissible path $v_1,\ldots,v_t$, the vertex $u$ is either incident to two backward edges, or incident to one forward edge and one backward edge.
  In the former case, $u$ is isolated in $H$ (see Figure \ref{fig:backward-edges-bw-bw}), and in the latter case $u$ has again exactly two neighbors in $H$, one in $\chi^{-1}(c_{i-1})$ and another in $\chi^{-1}(c_{i+1})$ (see Figure \ref{fig:backward-edges-fw-bw}).
  
  \begin{figure}
   \begin{subfigure}{.48\linewidth}
    \centering
    \begin{tikzpicture}
     \foreach \a in {0,1,2}{
      \draw (0,\a) ellipse (1.6cm and 0.3cm);
      \node[midvertex] (\a1) at (-1.0,\a) {};
      \node[midvertex] (\a2) at (-0.4,\a) {};
      \node[midvertex] (\a3) at (0.2,\a) {};
     }
     
     \node at (-0.02,1) {{\footnotesize $u$}};
     
     \node at (2.36,0.1) {{\footnotesize $\chi^{-1}(c_{i+1})$}};
     \node at (2.2,1.1) {{\footnotesize $\chi^{-1}(c_{i})$}};
     \node at (2.36,2.1) {{\footnotesize $\chi^{-1}(c_{i-1})$}};
     
     \foreach \i in {1,2,3}{
      \draw[thick] (0\i) edge (1\i);
      \draw[thick] (1\i) edge (2\i);
     }
     
     \scoped[on background layer]{
      \draw[line width = 6pt, blue!40] (01.center) edge (11.center);
      \draw[line width = 6pt, blue!40] (11.center) edge (21.center);
      
      \draw[line width = 6pt, red!40] (02.center) edge (12.center);
      \draw[line width = 6pt, red!40] (12.center) edge (22.center);
      
      \draw[line width = 6pt, darkpastelgreen!40] (03.center) edge (13.center);
      \draw[line width = 6pt, darkpastelgreen!40] (13.center) edge (23.center);
     }
     
     \draw[line width = 2pt, gray, ->, bend right = 40] (03) edge (13);
     \draw[line width = 2pt, gray, ->, bend right = 40] (13) edge (23);
    \end{tikzpicture}
    \caption{The vertex $u$ is incident to two backward edges. It is isolated in $H$ since both backward edges are deleted.}
    \label{fig:backward-edges-bw-bw}
   \end{subfigure}
   \hfill
   \begin{subfigure}{.48\linewidth}
    \centering
    \begin{tikzpicture}
     \foreach \a in {0,1,2}{
      \draw (0,\a) ellipse (1.6cm and 0.3cm);
      \node[midvertex] (\a1) at (-1.0,\a) {};
      \node[midvertex] (\a2) at (-0.4,\a) {};
      \node[midvertex] (\a3) at (0.2,\a) {};
     }
     
     \node at (-0.02,1) {{\footnotesize $u$}};
     \node at (0.78,2) {{\footnotesize $u'$}};
     \node at (0.48,0) {{\footnotesize $u''$}};
     
     \node at (2.36,0.1) {{\footnotesize $\chi^{-1}(c_{i+1})$}};
     \node at (2.2,1.1) {{\footnotesize $\chi^{-1}(c_{i})$}};
     \node at (2.36,2.1) {{\footnotesize $\chi^{-1}(c_{i-1})$}};
     
     \foreach \i in {1,2,3}{
      \draw[thick] (0\i) edge (1\i);
      \draw[thick] (1\i) edge (2\i);
     }
     
     \node[midvertex] (24) at (1.0,2) {};
     \draw[thick] (24) edge (13);
     
     \scoped[on background layer]{
      \draw[line width = 6pt, blue!40] (01.center) edge (11.center);
      \draw[line width = 6pt, blue!40] (11.center) edge (21.center);
      
      \draw[line width = 6pt, red!40] (02.center) edge (12.center);
      \draw[line width = 6pt, red!40] (12.center) edge (22.center);
      
      \draw[line width = 6pt, darkpastelgreen!40] (03.center) edge (13.center);
      \draw[line width = 6pt, darkpastelgreen!40] (13.center) edge (23.center);
     }
      
     \draw[line width = 2pt, gray, ->, bend left = 40] (24) edge (13);
     \draw[line width = 2pt, gray, ->, bend right = 40] (13) edge (23);
    \end{tikzpicture}
    \caption{The vertex $u$ is incident to one forward and one backward edge. It is adjacent to $u'$ and $u''$ in the graph $H$.}
    \label{fig:backward-edges-fw-bw}
   \end{subfigure}
   \caption{Visualization for the construction of the graph $H$ in Claim \ref{cl:extension-set-contains-admissible} for $k = 3$.
    The edges of the graphs $H_1,H_2,H_3$ are highlighted in blue, red and green.
    The directed edges of the admissible path are shown in gray.}
   \label{fig:backward-edges}
  \end{figure}
  
  Similarly, it follows that every $u \in V(H) \cap \chi^{-1}(c_1)$ has exactly one neighbor in $\chi^{-1}(c_2)$, and every $u \in V(H) \cap \chi^{-1}(c_s)$ has exactly one neighbor in $\chi^{-1}(c_{s-1})$.
  Overall, it follows that $H$ is the disjoint union of $(k+1)$ many graphs $H_1',\ldots,H_{k+1}'$ (and possibly isolated vertices) that agree with $P$.
  In particular, $v \in W_{X,c_s}$.
 \end{claimproof}
 
 By the claim, it suffices to provide a lower bound on the size of the set $A_{c_s}$.
 Towards this end, we analyze the structure of the set $A$.
 For $j \in [k]$ let $w_{i,j}$ be the unique vertex in the set $V(H_j) \cap \chi^{-1}(c_i)$, $i \in [s]$.
 First observe that, if $w_{i,j} \in A$ then also $w_{i',j} \in A$ for all $i' < i$ since
 all vertices $w_{i',j}$ are reachable with backward edges in $E_{\bw}$.
 
 We call a vertex $b \in \bigcup_{j \in [k]} V(H_j) \setminus A$ a \emph{blocking vertex} if
 there is a vertex $v \in \bigcup_{j \in [k]} V(H_j) \cap A$ such that $(b,v) \in E_{\bw}$.
 In other words, the vertex $w_{i,j}$ is a blocking vertex if $w_{i,j} \notin A$ and $w_{i-1,j} \in A$
 (and therefore $w_{i',j} \in A$ for all $i'<i$).
 Let $B$ be the set of blocking vertices.
 By the above observation, $|B \cap V(H_j)| \leq 1$ for all  $j \in [k]$ (and $|B \cap V(H_j)| = 0$ if $V(H_j) \cap A = \emptyset$).
 Hence, $|B| \leq k$.
 Let $k_i \coloneqq |B \cap \chi^{-1}(c_i)|$ be the number of blocking vertices of color $c_i$, $i \in [s]$.
 
 \begin{claim}
  \label{cl:blocking-vertices}
  $\displaystyle\frac{|A_{c_i}|}{|\chi^{-1}(c_i)|} \geq \frac{|A_{c_1}|}{|\chi^{-1}(c_1)|} - \frac{k_1+\ldots+k_i}{m}$ for all $i \in [s]$.
 \end{claim}
 \begin{claimproof}
  The claim is proved by induction on $i \in [s]$.
  In the base case $i=1$, it holds that $k_1 = 0$ which implies the statement.
  
  For the inductive step assume that $i \geq 1$.
  Since $G[\chi^{-1}(c_i),\chi^{-1}(c_{i+1})]$ is a non-empty biregular graph, for each subset $S \subseteq \chi^{-1}(c_i)$
  it holds that
  $\frac{|N(S)\cap\chi^{-1}(c_{i+1})|}{|\chi^{-1}(c_{i+1})|}\geq\frac{|S|}{|\chi^{-1}(c_i)|}$
  as argued in the preliminaries.
  We first argue that $(N(A_{c_i})\cap\chi^{-1}(c_{i+1}))\subseteq A_{c_{i+1}}\cup B$.
  Let $v\in A_{c_i}$ and $w\in N(A_{c_i})\cap\chi^{-1}(c_{i+1})$.
  If $w \in \bigcup_{j\in[k]}V(H_j)$ then $w \in B$ or $w \in A$.
  Otherwise $w \in V(G)\setminus\bigcup_{j\in[k]}V(H_j)$ and $(v,w) \in E_{\fw}$ which means $w \in A$.
  This shows the inclusion and therefore
  \[\frac{|A_{c_{i+1}}|}{|\chi^{-1}(c_{i+1})|}\geq
  \frac{|N(A_{c_i})\cap\chi^{-1}(c_{i+1})|-k_{i+1}}{|\chi^{-1}(c_{i+1})|}
  \geq \frac{|A_{c_i}|}{|\chi^{-1}(c_i)|}-\frac{k_{i+1}}{m}.\]
  By the induction hypothesis, $\frac{|A_{c_i}|}{|\chi^{-1}(c_i)|}\geq\frac{|A_{c_1}|}{|\chi^{-1}(c_1)|}-\frac{k_1+\ldots+k_i}{m}$.
  In combination this means
  \[\frac{|A_{c_{i+1}}|}{|\chi^{-1}(c_{i+1})|}\geq
  \frac{|A_{c_1}|}{|\chi^{-1}(c_1)|}-\frac{k_1+\ldots+k_i+k_{i+1}}{m}.\qedhere\]
 \end{claimproof}

 Now, we can prove the lemma.
 We already observed in Claim \ref{cl:extension-set-contains-admissible} that $A_{c_s}\subseteq W_{X,c_s}$.
 Moreover, it holds that $X \subseteq A_{c_1}$.
 Combining this with Claim \ref{cl:blocking-vertices}, we obtain
 \begin{align*}
  \frac{|W_{X,c_s}|}{|\chi^{-1}(c_s)|} &\geq \frac{|A_{c_s}|}{|\chi^{-1}(c_s)|}\\
                                       &\geq \frac{|A_{c_1}|}{|\chi^{-1}(c_1)|}-\frac{k}{m}\\
                                       &\geq \frac{|X|}{|\chi^{-1}(c_1)|}-\frac{k}{m}.\qedhere
 \end{align*}
\end{proof}

Let $T$ be a tree.
We define $V_{\leq i}(T) \coloneqq \{t \in V(T) \mid \deg(t) \leq i\}$ and $V_{\geq i}(T) \coloneqq \{t \in V(T) \mid \deg(t) \geq i\}$.
It is well known that $|V_{\geq 3}(T)|\leq|V_{\leq 1}(T)|$.

We are now ready to prove the main technical lemma of this section which provides the desired subgraphs $H_1,\dots,H_r$.

\begin{lemma}
 \label{la:find-disjoint-trees}
 Let $G$ be a graph, let $\chi\colon V(G)\to C$ be a vertex-coloring and let $T$ be a tree with vertex set $V(T) = C$.
 Assume that $G[\chi^{-1}(c_1),\chi^{-1}(c_2)]$ is a non-empty biregular graph for every $c_1c_2\in E(T)$.
 Let $m \coloneqq \min_{c\in C} |\chi^{-1}(c)|$ and let $\ell \coloneqq 2|V_{\leq 1}(T)|+|V_{\geq 3}(T)|$.
 
 Then there are (at least) $\lfloor\frac{m}{\ell}\rfloor$ pairwise
 vertex-disjoint trees in $G$ that agree with $T$.
\end{lemma}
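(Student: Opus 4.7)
The plan is to prove the lemma by induction on $|V(T)|$, reducing the tree case to a sequence of path extensions handled by Lemma~\ref{la:find-disjoint-paths}.

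\emph{Base case.} If $T$ is a single vertex, then $\ell = 2$ and the $|\chi^{-1}(c)| \geq m$ singleton vertices already provide $\lfloor m/2\rfloor$ pairwise vertex-disjoint trees. If $T$ is a path on at least two vertices, then $\ell = 4$ and we invoke Lemma~\ref{la:find-disjoint-paths} iteratively, taking $X \coloneqq \chi^{-1}(c_1) \setminus V_{c_1}(H_1,\ldots,H_k)$ at each step: the extension set is non-empty whenever $|X|/|\chi^{-1}(c_1)| > k/m$, which holds for every $k < \lfloor m/2 \rfloor$. Hence we obtain $\lfloor m/2\rfloor \geq \lfloor m/\ell\rfloor$ disjoint paths, as desired.

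\emph{Inductive step.} Suppose $T$ has a branching vertex. Fix a leaf $c^* \in V_{\leq 1}(T)$ and let $P = (c^* = c^{(0)}, c^{(1)}, \ldots, c^{(s)} = c')$ be the maximal pendant path of $T$ starting at $c^*$ whose interior vertices have degree $2$ in $T$, terminating at the nearest branching vertex $c'$. Put $T'' \coloneqq T - (V(P)\setminus\{c'\})$; one checks that
\[
\ell(T'') = \ell(T) - 2 - [\deg_T(c') = 3],
\]
where $[\cdot]$ is the Iverson bracket. The hypotheses of the lemma transfer to $T''$ with the induced coloring on $V(T'')$, so by induction we obtain $k'' \coloneqq \lfloor m/\ell(T'')\rfloor$ pairwise vertex-disjoint subgraphs of $G$ agreeing with $T''$. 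These designate a set $X \subseteq \chi^{-1}(c')$ of $k''$ distinct vertices. We then attach paths agreeing with $P$ by invoking Lemma~\ref{la:find-disjoint-paths} along the path $P$ (with initial $k = 0$, since the internal colors of $P$ are disjoint from $V(T'')$) and candidate set $X$, iterating this extension to promote $T''$-subgraphs into $T$-subgraphs.

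\emph{Main obstacle.} The crux of the argument lies in the quantitative bookkeeping at the extension step. A naive iterative application of Lemma~\ref{la:find-disjoint-paths} would lose up to a factor of $\tfrac{1}{2}$ per round, which is too expensive to match the bound $\lfloor m/\ell\rfloor$ since $\ell(T) - \ell(T'') \leq 3$. Resolving this requires exploiting that the rearrangement inside Lemma~\ref{la:find-disjoint-paths} only touches the \emph{interior} vertices of $P$, leaving the endpoints (and hence the tree structure of the surrounding $T''$-subgraphs) intact; so the extension may freely reshuffle $c'$-endpoints across the $T''$-copies. The coefficient $\ell = 2a + b$ should then emerge as the sum of contributions $2$ for each leaf (the tight factor from an iterated one-sided use of Lemma~\ref{la:find-disjoint-paths}) and $1$ for each branching vertex (a single extra color class consumed at each reduction in the induction).
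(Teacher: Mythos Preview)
Your inductive step has a genuine quantitative gap that the vague ``reshuffling'' remark does not close. Suppose you have $k''=\lfloor m/\ell(T'')\rfloor$ disjoint $T''$-copies, and let $Y\subseteq\chi^{-1}(c')$ be their $c'$-vertices, $|Y|=k''$. When you iterate Lemma~\ref{la:find-disjoint-paths} along $P$ with $j$ paths already built and $X=Y\setminus Y_j$, the bound reads
\[
\frac{|W_{X,c^*}|}{|\chi^{-1}(c^*)|}\;\ge\;\frac{k''-j}{|\chi^{-1}(c')|}-\frac{j}{m}.
\]
The first term is an \emph{absolute} count divided by $|\chi^{-1}(c')|$, not a density controlled by $m$. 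If the branching color class is large, say $|\chi^{-1}(c')|=M\gg m$, the extension set is guaranteed non-empty only while $j<\frac{k''m}{m+M}$, which can be far below $\lfloor m/\ell\rfloor$. Concretely, take $T$ a star with three leaves ($\ell=7$), $m=100$, and $|\chi^{-1}(c')|=1000$: then $k''=\lfloor 100/4\rfloor=25$, and the inequality $(25-j)/1000>j/100$ already fails at $j=3$, whereas you need $j=14$. The observation that alternating paths only touch interior colors of $P$ does not help: the obstruction is that $Y$ is a sparse subset of $\chi^{-1}(c')$, not that the $T''$-copies get damaged.

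The paper avoids this by reversing the order of the two inductions. The outer induction is on $k$ (the number of disjoint $T$-copies already built), and the inner induction on $|V(T)|$ proves, for the \emph{fixed} family $H_1,\dots,H_k$, a density bound on the extension set:
\[
\frac{|W_c(H_1,\dots,H_k)|}{|\chi^{-1}(c)|}\;\ge\;\frac{m-k(\ell+d(c))}{m},
\]
with a correction term $d(c)\in\{-2,0,-1\}$ depending on $\deg_T(c)$. In the pendant-path step one then applies Lemma~\ref{la:find-disjoint-paths} with $X=W'_{c'}$ (the extension set for the restricted trees $H_i'\subseteq T''$) and with the \emph{same} $k$ restricted paths already in place; the inductive density bound on $W'_{c'}$ is scale-free in $|\chi^{-1}(c')|$, so one loses exactly an additive $k/m$ regardless of how large that class is. There is also a second case, $\deg_T(c)\ge 2$, handled by splitting $T$ at $c$ into subtrees $T_1,\dots,T_s$ and intersecting the extension sets $W_{i,c}$; this is what makes the $d(c)$-bookkeeping come out right and is absent from your scheme. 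To repair your argument you would essentially need to strengthen the inductive hypothesis to a density statement on extension sets for a fixed $k$, which is precisely the paper's approach.
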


\begin{proof}
 We show by induction on $k$ that there are at least $k$ pairwise vertex-disjoint trees that agree with $T$ for all $k \leq \lfloor\frac{m}{\ell}\rfloor$.
 The base case $k = 0$ is trivial.
 For the inductive step assume there are $k < \lfloor\frac{m}{\ell}\rfloor$ pairwise vertex-disjoint trees $H_1,\ldots,H_k\subseteq G$ that agree with $T$.
 From $H_1,\dots,H_k$ we construct vertex-disjoint trees $H_1',\dots,H_k',H_{k+1}' \subseteq G$ that agree with $T$.
 
 Towards this end, it suffices to show that there is a color $c \in C$ such that the extension set $W_c = W_c(H_1,\dots,H_k)$ is non-empty.
 In fact, we show by induction on $n \coloneqq |V(T)|$ that for all colors $c \in C$ it holds that $\frac{|W_c|}{|\chi^{-1}(c)|} \geq \frac{m-k(\ell+d(c))}{m}$
 where $\ell\coloneqq\ell_T \coloneqq 2|V_{\leq 1}(T)| + |V_{\geq 3}(T)|$
 and
 \[d(c)\coloneqq d_T(c) \coloneqq \begin{cases}
                     -2 &\text{if }\deg_T(c)=1,\\
                     0  &\text{if }\deg_T(c)=2,\\
                     -1 &\text{otherwise}
                    \end{cases}.\]
 First observe that this proves the lemma since $k(\ell+d(c)) \leq k\ell < m$, and thus $|W_c| > 0$ for all $c \in C$.
 
 Consider the base case $|V(T)| = 1$ and assume $C= \{c\}$.
 In this case $\ell=2$, $\deg(c)=0$ and $d(c) = -1$.
 Also, for each vertex $v \in V(G)\setminus V_c(H_1,\ldots,H_k)$, the graph $H_{k+1}\coloneqq(\{v\},\emptyset)$ is vertex-disjoint to $H_1,\ldots,H_k$ and agrees with $T$.
 Hence, $\frac{|W_c|}{|\chi^{-1}(c)|} \geq \frac{m-k(2-1)}{m} =
 \frac{m-k(\ell+d(c))}{m}$.

 For the inductive step assume that $|V(T)| \geq 2$ and let $c \in C$. We distinguish two cases depending on the degree of $c$ in the tree $T$.
 \begin{cs}
  \case{$\deg_T(c)=1$}
   Let $c_1,\ldots,c_s=c$ be the unique path in $T$ such that $c_1 \neq c$ is a color with $\deg_T(c_1)\neq 2$ and $\deg_T(c_i) = 2$ for all $i \in \{2,\ldots,s-1\}$ (see also Figure \ref{fig:extension-set-induced-paths}).
   Consider the subtree $T' \coloneqq T - \{c_2,\ldots,c_s\}$ obtained from $T$ by removing all colors in the path excluding $c_1$.
   Define $m' \coloneqq \min_{c\in V(T')}|\chi^{-1}(c)|$, $\ell'\coloneqq \ell_{T'}$ and $d'(c_1) \coloneqq d_{T'}(c_1)$.
   Also define $G'\coloneqq G[\bigcup_{c \in V(T')}\chi^{-1}(c)]$ and $H_i'\coloneqq H_i[V(G')]$ for $i \in [k]$.
   
   If $\deg_T(c_1)=3$ (and thus $\deg_{T'}(c_1)=2$), then $\ell'=\ell-3$ and $d'(c_1)=0=d(c)+2$.
   Otherwise $\deg_T(c_1) \notin \{2,3\}$ (and thus $\deg_{T'}(c_1) \notin \{1,2\}$) and $\ell'=2=\ell-2$ and $d'(c_1)=-1=d(c)+1$.
   In total, $\ell'+d'(c_1)=\ell+d(c)-1$.
   
   We define $W_{c_1}' \coloneqq W_{c_1}(H_1',\dots,H_k')$ with respect to $G'$ and $T'$.
   By the induction hypothesis, $\frac{|W_{c_1}'|}{|\chi^{-1}(c_1)|}\geq\frac{m'-k(\ell'+d'(c_1))}{m'}\geq\frac{m-k(\ell+d(c)-1)}{m}$.
   Now, define $\widetilde T\coloneqq T[\{c_1,\ldots,c_s\}]$, $\widetilde G \coloneqq G[\bigcup_{c\in V(\widetilde T)}\chi^{-1}(c)]$ and $\widetilde H_i\coloneqq H_i[V(\widetilde G)]$ for $i \in [k]$.
   We apply Lemma \ref{la:find-disjoint-paths} to $\widetilde G$, $\widetilde T$, $\widetilde H_i$, $i \in [k]$, and $X \coloneqq W_{c_1}'$.
   Note that $W_{X,c_s}\subseteq W_{c_s}$.
   Thus,
   \begin{align*}
    \frac{|W_{c_s}|}{|\chi^{-1}(c_s)|} &\geq \frac{|W_{X,c_s}|}{|\chi^{-1}(c_s)|} \geq \frac{|W_{c_1}'|}{|\chi^{-1}(c_1)|} - \frac{k}{m}\\
                                       &\geq \frac{m-k(\ell+d(c)-1)}{m}-\frac{k}{m}\\
                                       &=\frac{m-k(\ell+d(c))}{m}
   \end{align*}
   which completes this case.
  \case{$\deg_T(c)\geq2$}
   Let $Z_1,\ldots,Z_s$ be the connected components of $T - \{c\}$.
   Note that $s = \deg_T(c)$.
   Let $T_i \coloneqq T[Z_i \cup \{c\}]$ for all $i \in [s]$.
   Observe that $V(T_i) \cap V(T_j) = \{c\}$ for all distinct $i,j \in [s]$.
   Let $G_i \coloneqq G[\bigcup_{c\in V(T_i)}\chi^{-1}(c)]$ for $i \in [s]$.
   Let $H_{i,j} \coloneqq H_j[V(G_i)]$ for $i \in [s]$ and $j \in [k]$.
   Note that $H_{i,j}$ agrees with $T_i$ for all $i \in [s]$ and $j \in [k]$.
   Let $W_{i,c} \coloneqq W_c(H_{i,1},\dots,H_{i,k})$, $i\in[s]$, be the extension set with respect to $G_i$ and $T_i$.
   Finally, define $m_i \coloneqq \min_{c\in V(T_i)}|\chi^{-1}(c)|$, $\ell_i \coloneqq \ell_{T_i}$ and $d_i(c) \coloneqq d_{T_i}(c)$ be the corresponding parameters for each $T_i$, $i \in [s]$.
   
   If $\deg_T(c) = s = 2$, then $\ell=\sum_{i=1}^s(\ell_i - 2)$ and $d(c) = 0 = \sum_{i=1}^s(d_i(c)+2)$.
   Otherwise, $\deg_T(c) = s \geq 3$ in which case $\ell= 1+\sum_{i=1}^s(\ell_i-2)$ and $d(c)=-1=-1+\sum_{i=1}^s(d_i(c)+2)$.
   In both cases, $\ell+d(c)=\sum_{i=1}^s(\ell_i+d_i(c))$.
   
   By the induction hypothesis, $\frac{|W_{i,c}|}{|\chi^{-1}(c)|} \geq \frac{m_i-k(\ell_i+d_i(c))}{m_i} \geq \frac{m-k(\ell_i+d_i(c))}{m}$ for all $i \in [s]$.
   Moreover, $\bigcap_{i \in [s]} W_{i,c} \subseteq W_c$.
   Together, this means
   \begin{align*}
    \frac{|W_c|}{|\chi^{-1}(c)|} &\geq \frac{\left|\bigcap_{i \in [s]} W_{i,c}\right|}{|\chi^{-1}(c)|} \geq 1 - \sum_{i \in [s]} \left(1 - \frac{|W_{i,c}|}{|\chi^{-1}(c)|}\right)\\
                                 &\geq 1 - \sum_{i \in [s]} \left(1 - \frac{m-k(\ell_i+d_i(c))}{m}\right)\\
                                 &= \frac{m-k(\ell_1+\ldots+\ell_s+d_1(c)+\ldots+d_s(c))}{m}\\
                                 &=\frac{m-k(\ell+d(c))}{m}.
   \end{align*}
 \end{cs}
\end{proof}

\begin{proof}[Proof of Lemma \ref{la:small-color-classes-excluded-kh}]
 Consider the graph $H \coloneqq G[[\chi,V_2]]$ which is connected.
 Let $v_1,\dots,v_h \in V_1$ be distinct vertices and let $w_1,\dots,w_h \in V_2$ such that $v_iw_i \in E(G)$.
 Note that $[w_i]_\chi \subseteq N(v_i)$ for all $i \in [h]$ since $\chi$ is stable with respect to the Color Refinement algorithm.
 Also define $c_i = \chi(w_i)$.
 Now let $T\subseteq H$ be a Steiner tree for $\{c_1,\ldots,c_h\}$,
 i.e., a tree that contains
 all the vertices $c_1,\dots,c_h$ and is minimal with respect to the subgraph relation.
 Hence, $T$ is a tree with $c_1,\dots,c_h \in V(T)$ and $|V_{\leq 1}(T)| \leq h$.
 This also implies that $|V_{\geq 3}(T)| \leq h$.
 
 Now let $\ell \coloneqq 2|V_{\leq 1}(T)|+|V_{\geq 3}(T)| \leq 3h$.
 We have that $m\coloneqq \min_{c \in V(T)} |\chi^{-1}(c)|\geq 3h^3$.
 Also note that $G[\chi^{-1}(t_1),\chi^{-1}(t_2)]$ is biregular and non-trivial for all $t_1t_2 \in E(T)$.
 By Lemma \ref{la:find-disjoint-trees}, there are $r\coloneqq\lfloor\frac{m}{\ell}\rfloor\geq h^2$ pairwise vertex-disjoint trees $H_1,\ldots,H_r$ that agree with $T$.
 But this gives a topological subgraph $K_h$ of the graph $G$.
 For each unordered pair $v_iv_j$, $i,j \in [h]$, and each $H_p$, $p \in [r]$, there is a path in the graph $H_p$ from a vertex $w_i'\in[w_i]_\chi\subseteq N(v_i)$ to a vertex $w_j'\in[w_j]_\chi\subseteq N(v_j)$.
 Therefore, for each unordered pair $v_iv_j$, $i,j \in [h]$, there is a path from $v_i$ to $v_j$ in $G$ and these paths are internally vertex disjoint (since $H_1,\ldots,H_r$ are pairwise vertex-disjoint trees).
\end{proof}

\subsection{Finding an Initial Color Class}
\label{sec:initial-color}

Next, we give a proof for Theorem \ref{thm:initial-color}.
The proof builds on the $2$-dimensional Weisfeiler-Leman algorithm.
Towards this end, we first introduce some additional notation.

Let $G$ be a graph and let $\chi \coloneqq \WL{2}{G}$ the coloring computed by the $2$-dimensional Weisfeiler-Leman algorithm.
We refer to $C_V \coloneqq C_V(G,\chi)\coloneqq\{\chi(v,v) \mid v \in V(G)\}$ as the set of \emph{vertex colors} and
$C_E\coloneqq C_E(G,\chi) \coloneqq \{\chi(v,w) \mid vw \in E(G)\}$ as the set of \emph{edge colors}.
For a vertex color $c\in C_V(G,\chi)$, we define $V_c\coloneqq V_c(G,\chi)\coloneqq\{v\in V(G)\mid \chi(v,v)=c\}$
as the set of all vertices with color $c$.
Similar, for an edge color $c\in C_E(G,\chi)$, we define $E_c\coloneqq E_c(G,\chi)\coloneqq\{v_1v_2\in E(G)\mid\chi(v_1,v_2)=c\}$.
Let $C \subseteq C_E$ be a set of edge colors.
We define the graph $G[C]$ with vertex set
\begin{equation}
 \label{eq:color-graph}
 V(G[C]) \coloneqq \bigcup_{c \in C}\bigcup_{e \in E_c} e \quad\quad\text{and}\quad\quad E(G[C]) \coloneqq \bigcup_{c \in C}E_c.
\end{equation}

Let $A_1,\dots,A_\ell$ be the vertex sets of the connected components of $G[C]$.
We also define the graph $G/C$ as the graph obtained from contracting every set $A_i$ to a single vertex.
Formally, we set
\begin{equation}
 \label{eq:color-factor}
 \begin{aligned}
  V(G/C) &\coloneqq \{\{v\} \mid v \in V(G) \setminus V(G[C])\} \cup \{A_1,\dots,A_\ell\} \quad\text{and}\\
  E(G/C) &\coloneqq \{X_1X_2 \mid \exists v_1 \in X_1,v_2 \in X_2\colon v_1v_2 \in E(G)\}.
 \end{aligned}
\end{equation}
Observe that $G/C$ is a minor of $G$ for every set of edge colors $C \subseteq C_E$.
We usually only consider the case $C = \{c\}$, and we write $G[c]$ and $G/c$ instead of $G[\{c\}]$ and $G/\{c\}$.

\begin{lemma}[{\cite[Section I]{Weisfeiler76}}]
 \label{la:factor-graph-2-wl}
 Let $G$ be a graph and $C \subseteq C_E$ be a set of edge colors.
 Define
 \[(\chi/C)(X_1,X_2) \coloneqq \{\!\!\{\chi(v_1,v_2) \mid v_1 \in X_1, v_2 \in X_2\}\!\!\}\]
 for all $X_1,X_2 \in V(G/C)$.
 Then $\chi/C$ is a stable coloring of the graph $G/C$ with respect to the $2$-dimensional Weisfeiler-Leman algorithm.
 
 Moreover, for all $X_1,X_2,X_1',X_2' \in V(G/C)$, either it holds $(\chi/C)(X_1,X_2) = (\chi/C)(X_1',X_2')$ or $(\chi/C)(X_1,X_2) \cap (\chi/C)(X_1',X_2') = \emptyset$.
\end{lemma}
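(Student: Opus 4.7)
The plan is to first prove the disjointness statement (the second assertion) and then derive the stability of $\chi/c$ from it. The main tool is the triangle-counting property of stable $2$-WL: for any colors $d, d_1, d_2$, the cardinality $|\{w \in V(G) : \chi(v_1, w) = d_1, \chi(w, v_2) = d_2\}|$ depends only on $\chi(v_1, v_2) = d$. This extends to showing that the multiset $\{\!\!\{\chi(u_1, u_2) : \chi(v_1, u_1) = c_1^*,\, \chi(v_2, u_2) = c_2^*\}\!\!\}$ is a function of $(c_1^*, c_2^*, \chi(v_1, v_2))$: one sums a triangle-count through the intermediate pair $(u_1, v_2)$, whose own distribution is in turn controlled by triangle-counting applied to $(v_1, v_2)$.

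For the disjointness, assume $d \in (\chi/c)(X_1, X_2) \cap (\chi/c)(X_1', X_2')$ with witnesses $v_1 \in X_1$, $v_2 \in X_2$, $v_1' \in X_1'$, $v_2' \in X_2'$ satisfying $\chi(v_1, v_2) = \chi(v_1', v_2') = d$. The key structural step is to show that, for each vertex $v$, the element of $V(G/c)$ containing $v$ admits the pair-color description $[v]_c = \{u : \chi(v, u) \in S_v\}$ for a set $S_v$ of pair-colors depending only on $\chi(v, v)$. I would prove this by induction on the $2$-WL refinement rounds: after round $k$, the existence of a walk of length at most $k$ in $G[c]$ from $v$ to $u$ is a bounded-size configuration and thus is detected in the pair color $\chi(v, u)$; at stability full $c$-reachability is therefore captured. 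Granting this characterization, one decomposes
\[(\chi/c)(X_1, X_2) = \biguplus_{c_1^* \in S_{v_1},\, c_2^* \in S_{v_2}} \{\!\!\{\chi(u_1, u_2) : \chi(v_1, u_1) = c_1^*,\; \chi(v_2, u_2) = c_2^*\}\!\!\},\]
and by the preceding observation each inner multiset is determined by $(c_1^*, c_2^*, d)$. The identical formula yields $(\chi/c)(X_1', X_2')$, so the two multisets coincide.

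For the stability of $\chi/c$ on $G/c$, the refined color of $(X_1, X_2)$ under one further $2$-WL round is the multiset $\{\!\!\{((\chi/c)(X_1, Y), (\chi/c)(Y, X_2)) : Y \in V(G/c)\}\!\!\}$. Using the disjointness, each $\chi/c$-color is uniquely identified with a canonical pattern of $\chi$-colors, so this refinement becomes a count of $\chi$-configurations extending a representative pair in $X_1 \times X_2$. Such counts depend only on $\chi(v_1, v_2)$ by iterated triangle counting across the component structure, and hence pairs $(X_1, X_2)$, $(X_1', X_2')$ with equal $\chi/c$-color receive the same refined color, confirming stability.

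The main obstacle is the pair-color characterization of $c$-component membership: this bridges the a priori \emph{global} reachability property of $G[c]$ with the \emph{local} triangle-counting information provided by $2$-WL, and requires a careful inductive argument on the rounds of refinement together with the observation that at each round the $k$-step $c$-reachability relation is itself expressible via a bounded configuration counted by the refinement.
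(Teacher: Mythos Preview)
The paper does not give its own proof of this lemma; it is stated with the citation ``see \cite[Theorem~3.1.11]{ChenP19}'' and used as a black box. So there is no paper proof to compare against.

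Your approach is the standard coherent-configuration argument and is essentially correct. The disjointness part is solid: the key step---that the block containing $v$ is $\{u:\chi(v,u)\in S\}$ for a set $S$ of pair-colors depending only on $\chi(v,v)$---follows exactly as you outline, by showing inductively that ``$c$-reachable in at most $k$ steps'' is a union of $\chi$-classes (the inductive step being a single triangle count). Your two-step triangle count then shows that the multiset $\{\!\!\{\chi(u_1,u_2):\chi(v_1,u_1)=c_1^{*},\chi(v_2,u_2)=c_2^{*}\}\!\!\}$ is determined by $\chi(v_1,v_2)$, which gives the disjointness.

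The stability paragraph is the weaker part of your sketch. What you need is that, for $(X_1,X_2)$ and $(X_1',X_2')$ with the same $\chi/c$-color, the multisets $\{\!\!\{((\chi/c)(X_1,Y),(\chi/c)(Y,X_2)):Y\in V(G/c)\}\!\!\}$ agree (and that $\chi/c$ refines the atomic type in $G/c$, which is easy). To make your sketch rigorous, note that by disjointness each $\chi/c$-color $D$ has a well-defined underlying set of $\chi$-colors, and all $Y$ with $(\chi/c)(X_1,Y)=D$ have the same size (since $|Y|$ is read off from $(\chi/c)(Y,Y)$, which is in turn determined by any color in $D$). Hence counting blocks $Y$ with prescribed $\chi/c$-colors against $X_1,X_2$ reduces to counting vertices $w$ with $\chi(v_1,w)$ and $\chi(w,v_2)$ in prescribed sets, which is a sum of ordinary triangle counts and depends only on $\chi(v_1,v_2)$. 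Spelling this out would complete the argument.
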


For a more recent reference we also point the reader to \cite[Theorem 3.1.11]{ChenP19}.
As before, if $C = \{c\}$, we write $\chi/c$ instead of $\chi/\{c\}$.

For every edge color $c$, the endvertices of all $c$-colored edges have the same vertex colors, that is,
for all edges $vw,v'w'\in E(G)$ with $\chi(v,w) = \chi(v',w') = c$ we have $\chi(v,v) = \chi(v',v')$ and $\chi(w,w) = \chi(w',w')$.
This implies $1 \leq |C_V(G[c],\chi)| \leq 2$.
We say that $G[c]$ is \emph{unicolored} if $|C_V(G[c],\chi)| = 1$.
Otherwise $G[c]$ is called \emph{bicolored}.

The basic strategy for the proof of Theorem \ref{thm:initial-color} is to color the input graph with the coloring $\chi$ computed by the $2$-dimensional Weisfeiler-Leman algorithm.
The goal is to find a color class $X = V_d$ (for some vertex color $d$) such that $X \subseteq \cl_t^{(G,\chi)}(v)$ for every $v \in X$.
We prove the existence of such a color class by a complicated case distinction depending on which types of graphs $G[c]$ occur within the graph $G$.
For example, a simple case is that there is a unicolored graph $G[c]$ that is connected.
Note that $G[c]$ is $r$-regular for some number $r \geq 1$, and $r \leq a h \sqrt{\log h}$ by Theorem \ref{thm:average-degree-excluded-minor}.
So we can choose $X = V(G[c])$ since the closure always contains bounded-degree components (of some edge color) assuming at least one vertex of the component is individualized.

For the other cases, the proof turns out to be significantly more complicated.
Here, we also need to rely on recursive approaches.
A first, simple idea is that in certain cases, we are able to distinguish between some vertices that receive the same color from the $2$-dimensional Weisfeiler-Leman algorithm.
In this case, we update the coloring $\chi$ accordingly (in an isomorphism-invariant way) and restart the entire algorithm using the updated, finer coloring.

A second idea that is used in our algorithm is to consider an edge color $c$, recursively obtain a set $X'$ for the graph $G/c$, and then construct the set $X$ from $X'$.
This approach is actually the reason why Theorem \ref{thm:initial-color} provides a pair-colored graph $(G',\chi')$: we need to ensure that $G/c$ is a subgraph of $G'$ so that properties for the closure can be lifted from $G/c$ to $G'$.

Let us now dive into the technical details of the proof.
The next two lemmas investigate properties of connected components of bicolored graphs $G[c]$ for an edge color $c$.
In particular, we show that if $G[c]$ is a bicolored connected graph with vertex color classes $V_{d_1}$ and $V_{d_2}$, then there is $i \in \{1,2\}$ such that $V_{d_i} \subseteq \cl_t^{(G,\chi)}(v)$ for every $v \in V_{d_i}$.
Again, recall the definition of the constant $a$ from Theorem \ref{thm:average-degree-excluded-minor}.

\begin{lemma}
 \label{la:hat-graph-excluded-minor}
 Let $G = (V_1,V_2,E)$ be a connected, bipartite graph that excludes $K_h$ as a minor and define $\chi \coloneqq \WL{2}{G}$.
 Suppose that $\chi(v_1,v_2) = \chi(v_1',v_2')$ for all $(v_1,v_2),(v_1',v_2') \in V_1 \times V_2$ with $v_1v_2,v_1'v_2'\in E$.
 Also assume that $|V_2| > (a h \sqrt{\log h})\cdot|V_1|$.
 Let 
 \[E^{*} \coloneqq \left\{v_1v_1' \in \binom{V_1}{2} \mathrel{}\middle|\mathrel{} \exists v_2 \in V_2 \colon v_1v_2,v_1'v_2 \in E(G)\right\}.\]
 Then there are colors $c_1,\dots,c_r \in \chi(V_1^{2})$ such that
 \begin{enumerate}
  \item $E^{*} = \bigcup_{i \in [r]} E_{c_i}$ where $E_{c_i}\coloneqq\{w_1w_2\in V(G)^2\mid\chi(w_1,w_2)=c_i\}$,
  \item $H \coloneqq (V_1,E^{*})$ is connected, and
  \item $H_i$ is a minor of $G$ for all $i \in [r]$ where $H_i = (V_1,E_{c_i})$.
 \end{enumerate}
\end{lemma}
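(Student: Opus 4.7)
The plan is to verify the three assertions by leveraging stability of the $2$-dimensional Weisfeiler-Leman coloring $\chi$ together with the structural hypotheses on $G$.

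For assertion~1, observe that $\chi(v_1,v_2)$ determines the multiset $\{\!\!\{(\chi(v_1,w),\chi(w,v_2))\mid w\in V(G)\}\!\!\}$. Since by hypothesis every bipartite edge of $G$ carries the same $\chi$-color, the existence of a common $V_2$-neighbor of $v_1,v_2\in V_1$ is decided by $\chi(v_1,v_2)$ alone, so $E^*$ is a union of $\chi$-color classes in $V_1^2$, giving the colors $c_1,\dots,c_r$. For assertion~2, any two $v,v'\in V_1$ are joined in $G$ by an even-length path that alternates between $V_1$ and $V_2$, and each consecutive pair of $V_1$-vertices along this path shares a $V_2$-neighbor and is therefore joined by an edge of $E^*$; hence $H=(V_1,E^*)$ is connected.

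For assertion~3, the plan is to exhibit a subdivision of $H_i$ inside $G$ using a system of distinct common $V_2$-neighbors. Stability of $2$-WL combined with the uniform bipartite edge color forces $G$ to be biregular with some part degrees $d_1,d_2$, and makes the bipartite incidence graph $B$ with parts $E_{c_i}$ and $V_2$ (where $\{v,v'\}\sim w$ iff $\{v,v'\}\subseteq N_G(w)$) $(k_i,\ell_i)$-biregular for some positive integers $k_i,\ell_i$. The biregular Hall property recalled in Section~\ref{sec:prelim} then yields a matching in $B$ of size $\min(|E_{c_i}|,|V_2|)$; whenever $|E_{c_i}|\leq|V_2|$, this matching saturates $E_{c_i}$, and the associated distinct common neighbors subdivide $H_i$ inside $G$, so that $H_i$ is a minor of $G$.

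The main obstacle is therefore to establish $|E_{c_i}|\leq|V_2|$. I would combine the hypothesis $|V_2|>ah\log h\cdot|V_1|$ with Theorem~\ref{thm:average-degree-excluded-minor} applied to both $G$ and to $H_i$: the bound on $G$ already yields $d_2\leq 2ah\log h$, while $|E_{c_i}|>|V_2|$ would force the average degree of $H_i=(V_1,E_{c_i})$ to exceed $2ah\log h$ and thereby force $H_i$ to contain $K_h$ as a minor. Combining this with an independent branch-set realization of $H_i$ inside $G$---for instance by assigning each $w\in V_2$ to a single $V_1$-neighbor $v(w)$ chosen using the biregular symmetry of the incidence structure---would embed $K_h$ as a minor of $G$, contradicting the hypothesis. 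The most delicate point is making this dense-regime minor construction unconditional, and I expect the actual proof to merge the two regimes into a single structural argument rather than proceed by dichotomy.
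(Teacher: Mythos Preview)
Your treatment of assertions~1 and~2, and your setup for assertion~3 (the biregular incidence graph $B$ between $E_{c_i}$ and $V_2$, the Hall matching of size $\min(|E_{c_i}|,|V_2|)$, and the subdivision argument when $|E_{c_i}|\le |V_2|$) all match the paper's proof.

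The gap is in the case $|E_{c_i}|>|V_2|$. You propose to argue that $H_i$ then has average degree exceeding $ah\log h$ and hence contains $K_h$ as a minor, and then to push this $K_h$ into $G$ via ``an independent branch-set realization of $H_i$ inside $G$.'' But this last step is precisely what you are trying to prove: you have not yet shown that $H_i$ is a minor of $G$, and your suggested construction (assign each $w\in V_2$ to one $V_1$-neighbour) does not produce a minor model of $H_i$ when $|E_{c_i}|>|V_2|$, since there are simply not enough $V_2$-vertices to realize all $E_{c_i}$-edges by distinct common neighbours. So the argument is circular as stated.

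The paper resolves this without any new construction. In the case $|V_2|<|E_{c_i}|$ the Hall matching $M$ you already built has size $|V_2|$; let $F_i\subseteq E_{c_i}$ be the set of edges matched by $M$. Then $H_i'\coloneqq (V_1,F_i)$ \emph{is} a minor of $G$ (by exactly the subdivision argument you gave for the other case, applied only to the matched edges), and $|F_i|=|V_2|>ah\log h\cdot|V_1|$, so Theorem~\ref{thm:average-degree-excluded-minor} forces $K_h$ to be a minor of $H_i'$ and hence of $G$, a contradiction. In other words, you should apply the density bound not to $H_i$ itself but to the matched subgraph $H_i'$, which you already know sits inside $G$ as a minor. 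The paper \emph{does} proceed by dichotomy; the missing idea is simply to reuse the partial matching in the dense regime.
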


\begin{proof}
 Clearly, $H$ is connected because $G$ is connected.
 Since $\chi(v_1,v_2) = \chi(v_1',v_2')$ for all $(v_1,v_2),(v_1',v_2') \in V_1 \times V_2$ with $v_1v_2,v_1'v_2'\in E$, it follows that $\chi(v_1,v_1) = \chi(v_1',v_1')$ for all $v_1,v_1' \in V_1$, and $\chi(v_2,v_2) = \chi(v_2',v_2')$ for all $v_2,v_2' \in V_2$.
 So $G$ is biregular which implies that $\deg(v_1)\cdot|V_1| = |E| = \deg(v_2)\cdot|V_2|$ for all $v_1 \in V_1$ and $v_2 \in V_2$.
 Since $|V_2| > |V_1|$ we conclude that $\deg(v_1) \neq \deg(v_2)$, and hence $\chi(v_1,v_1) \neq \chi(v_2,v_2)$ for all $v_1 \in V_1$ and $v_2 \in V_2$.
 All together, this means that there is some vertex color $d \in C_V$ such that $V_d(G,\chi) = V_1$.
 Since the $2$-dimensional Weisfeiler-Leman algorithm distinguishes pairs of distinct vertices with a common neighbor from other pairs of vertices, we conclude that there are colors $c_1,\dots,c_r \in \chi(V_1^{2})$ such that $E^{*} = \bigcup_{i \in [r]} E_{c_i}$.
 
 So fix some $i \in [r]$ and consider the bipartite graph $B = (V_2,E_{c_i},E(B))$ where $E(B) \coloneqq \{(v_2,v_1v_1') \mid v_2 \in N_G(v_1) \cap N_G(v_1')\}$.
 By the properties of the $2$-dimensional Weisfeiler-Leman algorithm the graph $B$ is biregular.
 So it follows from Hall's Marriage Theorem that $B$ contains a matching $M$ of size $\min(|V_2|,|E_{c_i}|)$ as explained in the preliminaries.
 
 If $|V_2| \geq |E_{c_i}|$ then each pair $v_1v_1' \in E_{c_i}$ is matched to a vertex $v_2 \in V_2$ (i.e., $(v_2,v_1v_1') \in M$) such that $v_1v_2,v_1'v_2 \in E(G)$.
 It follows that $H_i$ is a minor of $G$.
 
 Otherwise $|V_2| < |E_{c_i}|$.
 Let $F_i \subseteq E_{c_i}$ be those vertices that are matched by the matching $M$ in the graph $B$.
 Then $H_i' = (V_1,F_i)$ is a minor of $G$, and thus it excludes $K_h$ as a minor.
 However,
 \[\frac{1}{|V(H_i')|} \sum_{v_1 \in V(H_i')} \deg_{H_i'}(v_1) = \frac{2|F_i|}{|V_1|} = \frac{2|V_2|}{|V_1|} > 2a h \sqrt{\log h}\]
 which contradicts Theorem \ref{thm:average-degree-excluded-minor}.
\end{proof}

\begin{lemma}
 \label{la:closure-bipartite}
 Let $t \geq (a h \sqrt{\log h})^{2}$.
 Let $G = (V_1,V_2,E)$ be a connected bipartite graph that excludes $K_h$ as a minor and define $\chi \coloneqq \WL{2}{G}$.
 Suppose that $\chi(v_1,v_2) = \chi(v_1',v_2')$ for all $(v_1,v_2),(v_1',v_2') \in V_1 \times V_2$ with $v_1v_2,v_1'v_2'\in E$.
 Also assume that $|V_1| \leq |V_2|$.
 Then $V_1 \subseteq \cl_t^{(G,\chi)}(v)$ for all $v \in V_1 \cup V_2$.
\end{lemma}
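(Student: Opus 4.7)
Under the hypotheses, 2-WL stability together with the facts that all edges of $G$ share a pair color and $G$ is connected forces both $V_1$ and $V_2$ to be single vertex-color classes of $\chi$, and $G$ to be biregular with degrees $d_1$ (on $V_1$) and $d_2$ (on $V_2$) satisfying $d_1|V_1| = d_2|V_2|$. The average-degree bound of Theorem~\ref{thm:average-degree-excluded-minor} together with $|V_1| \leq |V_2|$ immediately gives $d_2 \leq a h \log h \leq \sqrt{t}$. Moreover, because $\cl_t^{(G,\chi)}$ operates on $(K_n,\tilde\chi)$ with $\tilde\chi$ recording all 2-WL pair colors, individualizing any vertex $x$ refines the color class of every other vertex $w$ to be contained in $\{w' : \chi(x,w) = \chi(x,w')\}$. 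I split the argument on whether Lemma~\ref{la:hat-graph-excluded-minor} applies.

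\textbf{Case A: $|V_2| > (a h \log h)|V_1|$.} Lemma~\ref{la:hat-graph-excluded-minor} delivers pair colors $c_1,\ldots,c_r \in \chi(V_1^2)$ such that $H = (V_1, \bigcup_i E_{c_i})$ is connected and each $H_i = (V_1, E_{c_i})$ is a minor of $G$. By 2-WL stability each $H_i$ is $d_i$-regular for some $d_i$, and because $H_i$ excludes $K_h$ as a minor Theorem~\ref{thm:average-degree-excluded-minor} forces $d_i \leq a h \log h \leq t$. Starting from $v \in V_1$ individualized, one round of color refinement partitions $V_1 \setminus \{v\}$ according to $\chi(v,\cdot)$; for every edge color $c_i$ the corresponding class is exactly $N_{H_i}(v)$, of size $d_i \leq t$, and so it gets individualized. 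Hence the full $H$-neighborhood of $v$ becomes individualized. Iterating this argument along a BFS in the connected graph $H$ yields $V_1 \subseteq \cl_t^{(G,\chi)}(v)$.

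\textbf{Case B: $|V_2| \leq (a h \log h)|V_1|$.} Here $2|E| \leq a h \log h\,(|V_1|+|V_2|) \leq 2(a h \log h)^2 |V_1|$ yields $d_1 \leq (a h \log h)^2 \leq t$. Individualizing $v \in V_1$ makes $N_G(v) \subseteq V_2$ a single color class of size $d_1 \leq t$, so it is individualized. Then, for every newly individualized $u \in V_2$, the color class of any $w \in N_G(u) \cap V_1$ is contained in $N_G(u)$ (since refinement registers $\chi(u,w) = c_E$) and hence of size at most $d_2 \leq t$, so $w$ gets individualized as well. Alternating these two steps along a BFS in the connected graph $G$ individualizes every vertex, in particular all of $V_1$. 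The case $v \in V_2$ reduces to Case A or Case B: individualizing $v$ produces the color class $N_G(v) \cap V_1$ of size $d_2 \leq t$, which is individualized and provides a starting $V_1$-vertex for the subsequent propagation. The main technical care lies in checking that, after refinement, every relevant color class really sits inside one of the bounded neighborhoods $N_{H_i}(\cdot)$ or $N_G(\cdot)$ claimed; this follows by unfolding the multiset refinement step on $(K_n,\tilde\chi)$ and using 2-WL stability, but it is the only place where the interplay between the pair coloring and the complete-graph refinement has to be handled carefully.
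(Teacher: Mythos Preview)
Your proposal is correct and follows essentially the same approach as the paper's proof: both establish biregularity with $d_2 \leq a h \log h$, then split on whether $|V_2| > (a h \log h)\,|V_1|$ (invoking Lemma~\ref{la:hat-graph-excluded-minor} and the regularity of the $H_i$ to propagate through $V_1$) or $|V_2| \leq (a h \log h)\,|V_1|$ (giving $d_1 \leq (a h \log h)^2 \leq t$ directly and propagating through $G$), and both reduce the case $v \in V_2$ to the case $v \in V_1$ via $|N_G(v)| = d_2 \leq t$. Your write-up is simply more explicit about the Color-Refinement mechanics on $(K_n,\tilde\chi)$; the paper states the conclusions of those steps without spelling them out.
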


\begin{proof}
 The graph $G$ is biregular and it holds that
 $\deg(v_1)\cdot|V_1| = \deg(v_2)\cdot|V_2|$ for all $v_1 \in V_1$
 and $v_2 \in V_2$.  Hence, $\deg(v_2) \leq a h \sqrt{\log h}$ for all
 $v_2 \in V_2$ by Theorem \ref{thm:average-degree-excluded-minor}.
 This means $\cl_t^{(G,\chi)}(v) \cap V_1 \neq \emptyset$, because
 either $v\in\cl_t^{(G,\chi)}(v)\cap V_1$ or $v\in V_2$ and
 $N_G(v)\subseteq \cl_t^{(G,\chi)}(v) \cap V_1$.

 First suppose that $|V_2| \leq (a h \sqrt{\log h})\cdot |V_1|$.
 Then $\deg(v_1) = \deg(v_2)\frac{|V_2|}{|V_1|} \leq t$ and $\deg(v_2)\leq t$ for all $v_1 \in V_1,v_2\in V_2$.
 It follows that $\cl_t^{(G,\chi)}(v)=V(G)$.
 
 So assume that $|V_2| > a h \sqrt{\log h} |V_1|$.
 By Lemma \ref{la:hat-graph-excluded-minor}, there are colors $c_1,\dots,c_r \in \chi(V_1^{2})$ such that
 \begin{enumerate}
  \item $H_i$ excludes $K_h$ as a minor for all $i \in [r]$ where $H_i = (V_1,E_{c_i})$ and $E_{c_i}\coloneqq\{v_1v_2\in V(G)^2\mid\chi(v_1,v_2)=c_i\}$, and
  \item $H = (V_1,\bigcup_{i \in [r]} E_{c_i})$ is connected.
 \end{enumerate}
 Note for all $i\in[r]$, the graph $H_i$ is $d$-regular for some $d$,
 and by Theorem \ref{thm:average-degree-excluded-minor} we have $d \leq a h \sqrt{\log h} \leq t$. This implies that $V_1 \subseteq
 \cl_t^{(G,\chi)}(v_1)$ for all $v_1\in V_1$, and since $V_1\cap
 \cl_t^{(G,\chi)}(v)\neq\emptyset$  for all $v\in V_1\cup V_2$, it follows that $V_1 \subseteq
 \cl_t^{(G,\chi)}(v)$.
\end{proof}

Let $A$ be the vertex set of a connected component of $G[c]$.
We define a size parameter for the graph $G[c]$ as
\[s(c) \coloneqq \min_{d \in C_V(G[c],\chi)} |A \cap V_d|.\]
Note that this is well-defined since every two connected components of $G[c]$ are equivalent with respect to the $2$-dimensional Weisfeiler-Leman algorithm (since the $2$-dimensional Weisfeiler-Leman algorithm ``detects'' components of graphs).

\begin{proof}[Proof of Theorem \ref{thm:initial-color}]
 First observe that $t \geq \max\{(ah\sqrt{\log h})^2,ah^3,3h^3\}$ (recall that we assumed $a \geq 2$).
 Let $\chi \coloneqq \WL{2}{G}$ be the coloring computed by the $2$-di\-men\-sion\-al Weis\-fei\-ler-Leman algorithm for the graph $G$.
 The algorithm works recursively and essentially distinguishes between two cases (but for both cases there are several subcases).
 Also, in some situations, the algorithm may find an isomorphism-invariant vertex-coloring $\chi_V$ that is strictly finer than the one induced by $\chi$.
 In this case, the algorithm is always restarted on the current graph and the coloring $\chi$ is updated accordingly (by running the $2$-dimensional Weisfeiler-Leman algorithm on $G$ with vertex-coloring $\chi_V$).
 Note that such a restart can only occur at most $n$ times where $n$ denotes the number of vertices of $G$.
 
 \medskip
 The algorithm distinguishes between two cases depending on whether there is an edge color $c^{E} \in C_E\coloneqq C_E(G,\chi)$ such that $s(c^{E}) \leq ah^3$ where $a$ is the constant from Theorem \ref{thm:average-degree-excluded-minor}.
 
 In the first case, we assume that such a color exists.
 We choose an edge color $c^{E} \in C_E$ such that $s(c^{E}) \leq ah^3$ and define $F \coloneqq G/c^{E}$
 (recall Equation \eqref{eq:color-factor} for the definition; to ensure that the color in $C_E$ is chosen in an isomorphism-invariant way, the algorithm chooses the smallest color in $C_E \subseteq \NN$ according to the ordering of natural numbers).
 The algorithm recursively computes an isomorphism-invariant graph $(F',\chi_F')$ and a vertex color $c^F \in C_V(F',\chi_F')$ that satisfies Properties \ref{item:initial-color-1} and \ref{item:initial-color-2} for the input graph $F$.
 (If $F$ contains a minor $K_h$, then $G$ also contains a minor $K_h$ since $F$ is a minor of $G$.)
 If $V_{c^F} \subseteq V(G)$, then the algorithm simply returns $(F',\chi_F')$ and the color $c^F$.
 
 Otherwise, $V_{c^F} \cap \{A_1,\dots,A_\ell\} \neq \emptyset$ where $A_1,\dots,A_\ell$ are the vertex sets of the connected components of $G[c^{E}]$.
 Let $d\coloneqq\argmin_{d\in C_V(G,\chi),A_i\cap V_d\neq\emptyset}|A_i\cap V_d|$ for some $i \in [\ell]$.
 This means $s(c^{E}) = |A_i \cap V_d|$.
 The algorithm constructs $G'$ where
 \[V(G') \coloneqq V(F') \uplus \bigcup_{A \in V_{c^F} \cap \{A_1,\dots,A_\ell\}} (A \cap V_d)\]
 and
 \[E(G') \coloneqq E(F') \cup \{Av \mid A \in V_{c^F} \cap \{A_1,\dots,A_\ell\}, v \in A \cap V_d\}.\]
 Also, $\chi'(v,w) \coloneqq (\chi_F'(v,w),0)$ for every $v,w \in V(F')$, $\chi'(w,v) = \chi'(v,w) \coloneqq (1,1)$ for all distinct $v \in V(G')$, $w \in V(G') \setminus V(F')$, and $\chi'(v,v) \coloneqq c \coloneqq (0,1)$ for every $v \in V(G') \setminus V(F')$.
 Clearly, $(G',\chi')$ is constructed in an isomorphism-invariant manner.
 The algorithm returns $(G',\chi')$ together with the vertex color $c$.
 By definition, $V_{c} \subseteq V_d \subseteq V(G)$.
 Moreover, $V_{c} \subseteq \cl_t^{(G',\chi')}(v)$ for all $v \in V_{c}$ 
 since  $|N_{G'}(A_i) \cap V_{c}| \leq |A_i \cap V_d| = s(c^{E}) \leq ah^3$ for all $A_i \in V_{c^{F}}$.
 This completes the first case.
 
 \medskip
 
 In the second case, $s(c^{E}) > ah^3$ for every edge color $c^{E} \in C_E$.
 If $|V(G)| = 1$, the problem is trivial.
 Otherwise, let $c \coloneqq \argmin_{c \in C_V}|V_c|$ be the color of the smallest color class (if this color is not unique, then the algorithm chooses the smallest color in $C_V \subseteq \NN$ with minimal color class size).
 
 Now let $c^{E} \in C_E$ be an edge color defined in such a way that
 either $V_c = V(G[c^{E}])$, or $V_c \subseteq V(G[c^{E}])$ and there
 is no edge color $c_2^{E}$ such that $V_c = V(G[c_2^{E}])$.
 
 First suppose that $G[c^{E}]$ is connected.
 If $G[c^{E}]$ is unicolored, then $V_c \subseteq \cl_t^{(G,\chi)}(v)$ for every $v \in V_c$ by Theorem \ref{thm:average-degree-excluded-minor}.
 So suppose that $G[c^{E}]$ is bicolored.
 Let $c' \in C_V$ be the second vertex color that appears in $G[c^{E}]$, i.e., $V_c\cup V_{c'}=V(G[c^E])$.
 Note that $G[c^{E}]$ is a bipartite graph with bipartition $(V_c,V_{c'})$ and $|V_c| \leq |V_{c'}|$.
 So $V_c \subseteq \cl_t^{(G,\chi)}(v)$ for every $v \in V_c$ by Lemma \ref{la:closure-bipartite}.
 In this case, we are done and return $(G,\chi)$ together with the vertex color $c$.
  
 So assume that $G[c^{E}]$ is not connected and let $A_1,\dots,A_\ell$ be the vertex sets of the connected components of $G[c^{E}]$.
 Note that $|A_i \cap V_c| = s(c^{E}) > ah^3$.
 Now let $i \in [\ell]$ and $v \in A_i$.
 Then $A_i \cap V_c \subseteq \cl_t^{(G,\chi)}(v)$ by Lemma \ref{la:closure-bipartite}.
 In particular, $\cl_t^{(G,\chi)}(v_1) = \cl_t^{(G,\chi)}(v_2)$ for all $v_1,v_2 \in A_i \cap V_c$.
 Let $D_i \coloneqq \cl_t^{(G,\chi)}(v)$ for some $v \in A_i \cap V_c$, $i \in [\ell]$.
 
 Without loss of generality assume that $|D_i \cap V_{c'}| = |D_j \cap V_{c'}|$ for all $i,j \in [\ell]$ and $c' \in C_V$.
 (Otherwise define $\chi_V(v) \coloneqq (\chi(v,v),0)$ for all $v \in V(G) \setminus V_c$ and $\chi_V(v) \coloneqq (\chi(v,v),(|D_i \cap V_{c'}|)_{c' \in C_V})$ for all $v \in A_i \cap V_c$ and $i \in [\ell]$.
 Then $\chi_V$ is isomorphism-invariant and strictly refines the vertex-coloring induced by $\chi$, and the algorithm is restarted as discussed above.)
 Note that for every $i,i' \in [\ell]$ it holds that
 \begin{equation}
  \label{eq:cover-component}
  A_{i'} \cap D_i \neq \emptyset \;\;\;\implies\;\;\; A_{i'} \cap V_c \subseteq D_i
 \end{equation}
 by Lemma \ref{la:closure-bipartite}.
 
 Let $R \coloneqq \{(i,i') \in [\ell] \mid A_{i'} \cap V_c \subseteq D_i\}$.
 Clearly, $R$ is reflexive and transitive.
 
 \begin{claim}
  If $R$ is not symmetric, one can compute in polynomial time an
  isomorphism-invariant vertex-coloring $\chi_V$ that is strictly finer than the one induced by $\chi$.
 \end{claim}
 \begin{claimproof}
  Since $R$ is not symmetric, there are distinct $i,i' \in [\ell]$ such that $(i,i') \in R$ and $(i',i) \notin R$.
  Now, consider the directed graph $([\ell],R)$ (ignoring self-loops).
  Since $R$ is transitive it follows that $([\ell],R)$ is not strongly connected.
  Let $M \subseteq [\ell]$ denote the set of those vertices that appear in a maximal strongly connected component of $([\ell],R)$ (i.e., a strongly connected component without outgoing edges).
  Also, let $M_c \coloneqq \bigcup_{i \in M} A_i \cap V_c$.
  Clearly, $M_c$ is defined in an isomorphism-invariant manner and $\emptyset \neq M_c \subsetneq V_c$.
  We define $\chi_V(v) \coloneqq (\chi(v,v),0)$ for all $v \in V(G) \setminus M_c$ and $\chi_V(v) \coloneqq (\chi(v,v),1)$ for all $v \in M_c$.
  It is easy to see that all objects can be computed in polynomial time.
 \end{claimproof}
 
 With the last claim in mind, we may assume that $R$ is symmetric (otherwise the algorithm is restarted).
 So overall, $R$ is an equivalence relation.
 Let $\CA_1,\dots,\CA_r$ be the equivalence classes of $R$.
 If $r = 1$, then $V_c \subseteq D_i$ for all $i \in [\ell]$ and the algorithm returns $(G,\chi)$ together with the vertex color $c$.
 
 \begin{figure}
  \centering
  \begin{tikzpicture}
   \draw[violet,thick,dashed,rounded corners,fill=violet,fill opacity=0.3] (0.4, -0.2) rectangle (9.2, 0.2);
   \node at (10,0) {{\color{violet} $V_c$}};
   
   \foreach \i in {1,2,3,4,5,6}{
    \node[tinyvertex] (v1\i) at (1.6*\i - 1.0, 0) {};
    \node[tinyvertex] (v2\i) at (1.6*\i - 0.6, 0) {};
    
    \node[tinyvertex] (v3\i) at (1.6*\i - 1.2, 2.4) {};
    \node[tinyvertex] (v4\i) at (1.6*\i - 0.8, 2.4) {};
    \node[tinyvertex] (v5\i) at (1.6*\i - 0.4, 2.4) {};
    
    \draw[thick] (v1\i) edge (v3\i);
    \draw[thick] (v1\i) edge (v4\i);
    \draw[thick] (v1\i) edge (v5\i);
    \draw[thick] (v2\i) edge (v3\i);
    \draw[thick] (v2\i) edge (v4\i);
    \draw[thick] (v2\i) edge (v5\i);
    
    \draw[blue,thick,dashed,rounded corners] (1.6*\i - 0.3, -0.4) rectangle (1.6*\i - 1.3, 2.6);
    \node at (1.6*\i - 0.8, -0.8) {{\color{blue} $A_\i$}};
   }
   
   \draw[red,thick,dashed,rounded corners] (0.1, -1.2) rectangle (3.1, 1.2);
   \node at (1.6,-1.6) {{\color{red} $D_1 = D_2$}};
   
   \draw[red,thick,dashed,rounded corners] (3.3, -1.2) rectangle (6.3, 1.2);
   \node at (4.8,-1.6) {{\color{red} $D_3 = D_4$}};
   
   \draw[red,thick,dashed,rounded corners] (6.5, -1.2) rectangle (9.5, 1.2);
   \node at (8.0,-1.6) {{\color{red} $D_5 = D_6$}};
   
   \node at (-0.4,1.2) {$G[c^{E}]$};
   
   \node at (-0.4,3.2) {$G - V_c$};   
   \draw[dotted,thick] (4.8,1.6) arc (270:287:20);
   \draw[dotted,thick] (4.8,1.6) arc (270:253:20);
  \end{tikzpicture}
  \caption{Visualization of the proof in the case $s(c^{E}) \geq ah^{3}$ (this condition is omitted for visualization purposes).
  The figure shows the part $G[c^{E}]$ with six connected components $A_1,\dots,A_6$.
  Moreover, $D_1 = D_2$, $D_3 = D_4$ and $D_5 = D_6$, and $\CA_1 = \{1,2\}$, $\CA_2 = \{3,4\}$ and $\CA_3 = \{5,6\}$.}
  \label{fig:initial-color}
 \end{figure}
 
 Hence, assume that $r \geq 2$.
 This is visualized in Figure \ref{fig:initial-color}.
 Note that $D_i = D_{i'}$ for all $(i,i') \in R$.
 The rest of the proof is devoted to computing a vertex-coloring that is strictly finer than the one induced by $\chi$ (which results again in a restart).
 
 A partition $\CP = \{P_1,\dots,P_q\}$ of the set $[\ell]$ \emph{refines} the partition $\{\CA_1,\dots,\CA_r\}$, denoted $\CP \preceq \{\CA_1,\dots,\CA_r\}$, if for every $j \in [q]$ there is some $i \in [r]$ such that $P_j \subseteq \CA_i$.
 
 \begin{claim}
  \label{cl:factor-for-separators}
  There is a partition $\CP = \{P_1,\dots,P_q\} \preceq \{\CA_1,\dots,\CA_r\}$
  and a graph $G_\CP = (\CP,E_\CP)$ such that $G_\CP$
  is a minor of $G$ and there are distinct $i,i' \in [r]$ and
  $P,P' \in \CP$ such that $P \subseteq \CA_i$,
  $P' \subseteq \CA_{i'}$ and $PP' \in E_{\CP}$.
  Moreover, the partition $\CP$ and the graph $G_\CP$ are isomorphism-invariant and can be computed in polynomial time.
 \end{claim}
 \begin{claimproof}
  We first construct an inclusionwise maximal set of edge colors $C^* \subseteq C_E(G,\chi)$ such that
  \begin{enumerate}[label = (\alph*)]
   \item $c^E \in C^*$, and
   \item\label{item:factor-for-separators-2} for every connected component $B$ of $G[C^*]$ there is some $i \in [\ell]$ such that $A_i \subseteq B$ and $B \cap V_c \subseteq D_i$.
  \end{enumerate}
  Observe that such a set $C^*$ can easily be constructed by a greedy algorithm that initially sets $C^* \coloneqq \{c^E\}$ and keeps adding colors as long as Condition \ref{item:factor-for-separators-2} is satisfied.
  To ensure isomorphism-invariance, the greedy algorithm always adds the smallest color in $C_E(G,\chi) \setminus C^* \subseteq\NN$ that does not violate Condition \ref{item:factor-for-separators-2}.
  
  Now let $B_1,\dots,B_q$ denote the connected components of $G[C^*]$.
  We define $P_j \coloneqq \{i \in [\ell] \mid A_i \subseteq B_j\}$.
  Let $\CP \coloneqq \{P_1,\dots,P_q\}$.
  Clearly, $\CP$ is a partition of $[\ell]$.
  Also $\CP \preceq \{\CA_1,\dots,\CA_r\}$ by Condition \ref{item:factor-for-separators-2}.
  
  To define the graph $G_\CP$ we associate the elements of $\CP$ with the sets $B_1,\dots,B_q$.
  Consider the graph $F \coloneqq G/C^*$ and let $\chi_F \coloneqq \chi/C^*$ as defined in Lemma \ref{la:factor-graph-2-wl}.
  Note that $\chi_F$ is stable with respect to the $2$-dimensional Weisfeiler-Leman algorithm for the graph $F$ by Lemma \ref{la:factor-graph-2-wl}.
  
  Let $c^F \in C_E(F,\chi_F)$ be an edge color such that $\{B_1,\dots,B_q\} \subseteq V(F[c^F])$.
  Note that such an edge color exists since $F$ is connected and $q \geq r > 1$.
  
  \begin{cs}
   \case{$F[c^F]$ is unicolored}
    We set
    \[E_\CP \coloneqq \left\{P_jP_{j'} \in\binom{\CP}{2} \mathrel{}\middle|\mathrel{} \chi_F(B_j,B_j') = c^F\right\}.\]
    Clearly, $G_\CP$ is isomorphic to $F[c^F]$ and hence, it is a minor of $G$.
    Also, there are distinct $i,i' \in [r]$ and $P,P' \in \CP$ such that $P \subseteq \CA_i$, $P' \subseteq \CA_{i'}$ and $PP' \in E_{\CP}$ by the maximality of the set $C^*$.
   \case{$F[c^F]$ is bicolored}
    Let $U \coloneqq V(F[c^F]) \setminus \{B_1,\dots,B_q\}$.
    Note that $|U| \geq |V_c| \geq \ell a h^{3} \geq qah^3$, because $V_c$ is a color class of $G$ of minimum size.
    We define
    \[E_\CP^+ \coloneqq \left\{P_jP_{j'}\in\binom{\CP}{2} \mathrel{}\middle|\mathrel{} \exists u \in U\colon \chi_F(B_j,u)=\chi_F(B_{j'},u)=c_F\right\}.\]
    Again, there are distinct $i,i' \in [r]$ and $P_j,P_{j'} \in \CP$ such that $P_j \subseteq \CA_i$, $P_{j'} \subseteq \CA_{i'}$ and $P_jP_{j'} \in E_{\CP}^+$ by the maximality of the set $C^*$.
    
    Finally, to obtain a minor of $G$, we further thin out the set $E_\CP^+$ by keeping only those pairs that have the same color as $(P_j,P_{j'})$.
    Formally, we define
    \[E_\CP \coloneqq \left\{P_{j''}P_{j'''}\in\binom{\CP}{2} \mathrel{}\middle|\mathrel{} \chi_F(B_{j''},B_{j'''}) = \chi_F(B_{j},B_{j'})\right\}.\]
    Then $G_\CP$ is a minor of $G$ by Lemma \ref{la:hat-graph-excluded-minor} which is applicable since $|U| \geq qah^3$.
  \end{cs}
 \end{claimproof}
 
 Now let $\CP = \{P_1,\dots,P_q\}$ and $G_{\CP} = (\CP,E_\CP)$ be the objects computed in Claim \ref{cl:factor-for-separators}.
 If $G_{\CP}$ is not regular, then the algorithm computes an isomorphism-invariant refinement of the coloring $\chi$.
 (Actually, this case does not occur by the properties of the $2$-dimensional Weisfeiler-Leman algorithm).
 Hence, $\deg_{G_\CP}(P_j) \leq a h \sqrt{\log h}$ for all $j \in [q]$ by Theorem \ref{thm:average-degree-excluded-minor}.
 
 Let $Z_1^{i},\dots,Z_{k_i}^{i}$ be the connected components of $G - D_i$.
 If there are $i,i' \in [\ell]$ such that $k_i \neq k_{i'}$, then the algorithm computes the vertex-coloring $\chi_V$ defined by $\chi_V(v) \coloneqq (\chi(v,v),k_i)$ for all $v \in A_i \cap V_c$ and $i \in [\ell$] and $\chi_V(v) \coloneqq (\chi(v,v),0)$ for all $v \in V(G) \setminus V_c$.
 Then $\chi_V$ strictly refines the vertex-coloring induced by $\chi$ and the algorithm is restarted.
 
 So suppose $k \coloneqq k_i = k_{i'}$ for all $i,i' \in [\ell]$.
 Then $|N_G(Z_j^{i})| < h$ by Theorem \ref{thm:small-separator-for-t-cr-bounded-closure} for all $j \in [k]$.
 Moreover, it holds that
 \begin{equation}
  \label{eq:component-contains-ai}
  A_{i'} \cap V_c \cap Z_j^{i} \neq \emptyset \;\;\;\iff\;\;\; A_{i'} \subseteq Z_j^{i}
 \end{equation}
 for all $i' \in [\ell]$ and $j \in [k]$ by Equation \eqref{eq:cover-component}.
 Let
 \[\widetilde{E}_{\CP} \coloneqq \{(D_i,A_{i'}) \mid i' \in P_{j'}, i \in P_j, P_jP_{j'} \in E_\CP\}.\]
 For each $i' \in [\ell]$ there are at most $(a h \sqrt{\log h})$ many distinct sets $D \in \{D_1,\dots,D_\ell\}$ such that $(D,A_{i'}) \in \widetilde{E}_{\CP}$.
 This implies $|\widetilde{E}_{\CP}| \leq \ell a h \sqrt{\log h}$.
 We define
 \[Q \coloneqq \left\{(D,j) \in \{D_1,\dots,D_\ell\} \times [k] \mathrel{}\middle|\mathrel{} \exists i' \in [\ell] \colon (D,A_{i'}) \in \widetilde{E}_{\CP} \wedge A_{i'} \subseteq Z_j^i\right\}.\]
 Note that $Q \neq \emptyset$ by Claim \ref{cl:factor-for-separators}
 and that $|Q| \leq |\widetilde{E}_{\CP}| \leq \ell a h \sqrt{\log h}$ by Equation \eqref{eq:component-contains-ai}.
 Now pick $d \in C_V\subseteq\NN$ to be the smallest (according to the ordering of the natural numbers) vertex color such that 
 \[X \coloneqq \bigcup_{(D_i,j) \in Q} N_G(Z_j^{i}) \cap V_{d}(G,\chi)\]
 is not the empty set.
 Then
 \[0 < |X| < \ell a h^2 \sqrt{\log h} \leq |V_c|.\]
 Since $X$ is defined in an isomorphism-invariant manner and $V_c$ forms the smallest vertex color class, this allows us to refine the coloring $\chi$ by taking membership in $X$ into account.
 
 \medskip
 
 This completes the description of the algorithm.
 Clearly, the running time is polynomially bounded in the input size.
 Also, the correctness follows from the arguments provided throughout the description of the algorithm.
\end{proof}

\section{Isomorphism Test for Graphs Excluding a Minor}
\label{sec:isomorphism}

Having presented the key technical tools in the previous section, we are now ready to describe our isomorphism test for graph classes that exclude $K_h$ as a minor.

\begin{theorem}\label{theo:iso}
 Let $h\in\NN$.
 There is an algorithm that, given two connected vertex-colored graphs $G_1,G_2$ with $n$ vertices, either correctly concludes that $G_1$ has a minor isomorphic to $K_h$
 or decides whether $G_1$ is isomorphic to $G_2$ in time $n^{\CO((\log h)^{c})}$ for some absolute constant $c$.
\end{theorem}

\begin{proof}
 We present a recursive algorithm that,
 given two vertex-colored graphs $(G_1,\chi_1)$ and $(G_2,\chi_2)$ and a color $c_0$ such that for $S_i \coloneqq \chi_i^{-1}(c_0)$ it holds that $|S_i| < h$ and $G_i - S_i$ is connected for $i=1,2$,
 either correctly concludes that $G_1$ has a minor isomorphic to $K_h$ or computes a representation for $\Iso((G_1,\chi_1),(G_2,\chi_2))[S_1]$.
 The color $c_0$ does not have to be in the range of the $\chi_i$ (we set $\chi_i^{-1}(c_0) = \emptyset$ in this case).
 Thus initializing it with a color $c_0$ not in the range, we have $|S_i| = 0 < h$,
 in which case the algorithm simply decides whether $\Iso((G_1,\chi_1),(G_2,\chi_2)) \neq \emptyset$,
 that is, decides whether $(G_1,\chi_1)$ and $(G_2,\chi_2)$ are isomorphic.
 (For $S_1 = S_2 = \emptyset$, we define $\Iso((G_1,\chi_1),(G_2,\chi_2))[S_1]$ to contain the empty mapping if $(G_1,\chi_1)$ and $(G_2,\chi_2)$ are isomorphic, in the other case $\Iso((G_1,\chi_1),(G_2,\chi_2))[S_1]$ is empty.)
 
 \begin{figure}
  \centering
  \begin{tikzpicture}[scale=0.6,use Hobby shortcut]
   \draw[thick,fill=yellow, fill opacity=0.5] (0,1.8) ellipse (5cm and 4cm);
   
   \begin{scope}
   \clip (0,1.8) ellipse (5cm and 4cm);
   \draw[thick, fill=blue, fill opacity=0.3] (0,5.6) ellipse (2.5cm and 2cm);
   \end{scope}
   \node at (0,4.6) {$S_i$};
   
   \draw[thick, fill=red, fill opacity=0.3] (1.7,3) ellipse (1.8cm and 1.2cm);
   \node at (1.7,3) {$X_i$};
   
   \node at (-1.5,2) {$D_i$};

   \begin{scope}[xshift=-3.5cm,yshift=0.5cm]
   \begin{scope}[rotate=-90]
   \gurke{}
   \node at (0,-4) {$Z_{i,1}$};
   \end{scope}
   \begin{scope}[rotate=-50]
   \gurke{}
   \node at (0,-4) {$Z_{i,2}$};
   \randA{}
   \end{scope}
   \begin{scope}[rotate=-10]
   \gurke{}
   \node at (0,-4) {$Z_{i,3}$};
   \end{scope}
   \node at (0,0) {$S_{i,1}$};
   \end{scope}
   
   \begin{scope}[rotate=0,yshift=-1cm]
   \gurke{}
   \node at (0,-4) {$Z_{i,4}$};
   \node at (0,0) {$S_{i,4}$};
   \randB{}
   \end{scope}

   \begin{scope}[xshift=3.5cm,yshift=0.5cm]
   \begin{scope}[rotate=90]
   \gurke{}
   \node at (0,-4) {$Z_{i,7}$};
   \end{scope}
   \begin{scope}[rotate=50]
   \gurke{}
   \node at (0,-4) {$Z_{i,6}$};
   \randA{}
   \end{scope}
   \begin{scope}[rotate=10]
   \gurke{}
   \node at (0,-4) {$Z_{i,5}$};
   \end{scope}
   \node at (0,0) {$S_{i,5}$};
   \end{scope}
   
   \node at (2.5,-6) {$H_{i,P_2}$};
   \node at (-7,2.7) {$H_{i,P_1}$};
   \node at (7,2.7) {$H_{i,P_3}$};
   
  \end{tikzpicture}
  \caption{Visualization of the graph decomposition.}
  \label{fig:dec}
 \end{figure}
 
 So let $(G_1,\chi_1)$ and $(G_2,\chi_2)$ be the vertex-colored input graphs, and let $c_0$ be a color such that $|S_i| < h$.
 Let $t \coloneqq \lceil a^2h^3\log h\rceil \in \CO(h^3 \log h)$ (as before, $a$ denotes the constant from Theorem \ref{thm:average-degree-excluded-minor}).
 The algorithm first applies Theorem \ref{thm:initial-color} to the graph $G_i$ and the parameter $t$.
 This results in a pair-colored graph $(G_i',\chi_i')$, a color $c_i \in \{\chi_i'(v,v) \mid v \in V(G_i')\}$ and a subset $X_i = \{v \in V(G_i') \mid \chi_i'(v,v) = c_i\} \subseteq V(G_i)$ for both $i \in \{1,2\}$,
 or the algorithm correctly concludes that one of the graphs has a minor isomorphic to $K_h$.
 If $c_1 \neq c_2$ or a minor is detected in only one graph, then the input graphs are non-isomorphic.
 So suppose that $c \coloneqq c_1 = c_2$.
 Then $X_1^{\varphi} = X_2$ for every $\varphi \in \Iso(G_1,G_2)$ by Theorem \ref{thm:initial-color}.
 
 Now let $D_i \coloneqq \cl_t^{G_i}(X_i)$.
 Note that $D_1^{\varphi} = D_2$ for every $\varphi \in \Iso(G_1,G_2)$.
 Also observe that $S_i \subseteq D_i$ since $|S_i| = |\chi_i^{-1}(c_0)| < h \leq t$.
 Let $Z_{i,1},\dots,Z_{i,k}$ be the vertex sets of the connected components of $G_i - D_i$ and define $\CZ_i \coloneqq \{Z_{i,1},\dots,Z_{i,k}\}$
 (if the number of connected components differs in the two graphs then they are non-isomorphic).
 See Figure \ref{fig:dec} for a visualization.
  
 If $k = 1$ and $|D_i| < h$, then the algorithm proceeds as follows.
 First, the coloring $\chi_i$, $i \in \{1,2\}$, is updated to take membership in the set $D_i$ into account, i.e.,
 $\chi_i(v)$ is replaced by $\chi_i(v) \coloneqq (\chi_i(v),1)$ if $v \in D_i$ and $\chi_i(v) \coloneqq (\chi_i(v),0)$ if $v \in V(G_i) \setminus D_i$. 
 Afterwards, the algorithm computes a set $X_i^1$ according to the above procedure with respect to the input graphs $G_i^1\coloneqq G_i - D_i$.
 Let $D_i^1\coloneqq \cl_t^{(G_i,\chi_i)}(X_i^1)$ be the closure of $X_i^1$ in the graph $G_i$ (rather than $G_i^1$).
 Then $D_i^1 \supseteq D_i$ since $|D_i| < h \leq t$.
 Moreover, $D_i^1 \supsetneq D_i$ since $X_i^1 \subseteq D_i^1$ and $\emptyset \neq X_i^1\subseteq V(G_i^1)$.
 This procedure is repeated until $|D_i^{j^*}| \geq h$, or $k \geq 2$, or $k=0$ for some $j^*\geq 1$.
 
 So without loss of generality suppose that $|D_i| \geq h$, $k \geq 2$ or $ k = 0$.
 If $k = 0$ and $|D_i| < h$, then $V(G_i) = D_i$ and therefore $|V(G_i)|<h$, and the statement of
 the theorem can directly be obtained from Babai's quasipolynomial
 time isomorphism test \cite{Babai16} since both graphs have size at
 most $h-1$.
 
 In the following, suppose that $k \geq 2$ or $|D_i| \geq h$.
 Let $S_{i,j} \coloneqq N_{G_i}(Z_{i,j})$ for all $j \in [k]$ and $i \in \{1,2\}$.
 Note that $|S_{i,j}| < h$ by Theorem \ref{thm:small-separator-for-t-cr-bounded-closure}.
 Finally, define $H_{i,j} \coloneqq G[Z_{i,j} \cup S_{i,j}]$ and $\chi_{i,j}^{H} \colon V(H_{i,j}) \rightarrow C \times \{0,1\}$ to be the vertex-coloring defined by
 \[\chi_{i,j}^{H}(v) \coloneqq \begin{cases}
                                (\chi_i(v),1) &\text{if } v \in Z_{i,j},\\
                                (\chi_i(v),0) &\text{otherwise}
                               \end{cases}\]
 for all $j \in [k]$ and both $i \in \{1,2\}$.
 For each pair $j_1,j_2 \in [k]$ and $i_1,i_2 \in \{1,2\}$ we compute the set of isomorphisms
 \[\Phi_{j_1,j_2}^{i_1,i_2} \coloneqq \Iso((H_{i_1,j_1},\chi_{i_1,j_1}),(H_{i_2,j_2},\chi_{i_2,j_2}))[S_{i_1,j_1}]\]
 recursively.
 
 \medskip
 
 For both $i \in \{1,2\}$ we define an equivalence relation $\sim_i$ on $[k]$ via $j_1 \sim_i j_2$ if and only if $S_{i,j_1} = S_{i,j_2}$ for $j_1,j_2 \in [k]$.
 Let $\CP_i\coloneqq\{P_{i,1},\dots,P_{i,p}\}$ be the corresponding partition into equivalence classes
 (as visualized in Figure \ref{fig:dec}).
 For each $P_i \in \CP_i$ let $S_{i,P_i} \coloneqq S_{i,j}$ for some $j \in P_i$.
 (By definition, $S_{i,P_i}$ does not depend on the choice of $j \in P_i$.)
 Also, define $H_{i,P_i} \coloneqq G_i[(\bigcup_{j \in P_i} Z_{i,j}) \cup S_{i,P_i}]$ and let $\chi_{i,P_i}^{H}$ be the coloring defined by
 \[\chi_{i,P_i}^{H}(v) \coloneqq \begin{cases}
                                  (\chi_i(v),0) &\text{if } v \in S_{i,P_i},\\
                                  (\chi_i(v),1) &\text{otherwise}
                                 \end{cases}\]
 for all $v \in V(H_{i,P_i})$.
 For each $i_1,i_2 \in \{1,2\}$ and $P_1 \in \CP_{i_1}$ and $P_2 \in \CP_{i_2}$ the algorithm computes
 \[\Phi_{P_1,P_2}^{i_1,i_2} \coloneqq \Iso((H_{i_1,P_1},\chi_{i_1,P_1}^{H}),(H_{i_2,P_2},\chi_{i_2,P_2}^{H}))[S_{i_1,P_1}]\]
 as follows.
 Without loss of generality assume that $P_1 \in \CP_1$ and $P_2 \in \CP_2$.
 We formulate the isomorphism problem between $(H_{1,P_1},\chi_{1,P_1}^H)$ and $(H_{2,P_2},\chi_{2,P_2}^H)$ as an instance of multiple-labeling-coset isomorphism.
 We define another equivalence relation $\simeq$ on $P_1 \uplus P_2$  via
 \[j_1\simeq j_2 \;\;\;\Leftrightarrow\;\;\; \Phi_{j_1,j_2}^{i_1,i_2} \neq \emptyset\]
 where $j_1 \in P_{i_1}$ and $j_2 \in P_{i_2}$.

 Again, we partition $P_1 \uplus P_2 = Q_1\cup\ldots\cup Q_q$ into the equivalence classes of the relation $\simeq$.
 For each equivalence class $Q_j$ we fix one representative $j^* \in Q_j$ and pick $i^{*} \in \{1,2\}$ such that $j^{*} \in P_{i^{*}}$.
 Let $\lambda_{j^*} \colon S_{i^{*},P_{i^{*}}}\to[|S_{P_{i^{*}}}|]$ be an arbitrary bijection.
 
 Let $i \in \{1,2\},j_i\in P_i\cap Q_j$ and define $\rho_{j_i}\Gamma_{j_i} \coloneqq\Phi_{j_i,j^*}^{i,i^{*}}\lambda_{j^*}$.
 Moreover, we define $\CX_{i,P_i} \coloneqq (S_{i,P_i},L_{i,P_i},\Fp_{i,P_i})$ where
 \[L_{i,P_i} \coloneqq \{\rho_{j_i}\Gamma_{j_i}\mid j_i \in P_i\}\] and
 \[\Fp_{i,P_i}(\rho_{j_i}\Gamma_{j_i})\coloneqq\{\!\!\{j\mid j_i'\in P_i\cap Q_j\text{ and }\rho_{j_i}\Gamma_{j_i} = \rho_{j_i'}\Gamma_{j_i'}\}\!\!\}\]
 (for each $j_i'$ such that $\rho_{j_i}\Gamma_{j_i} = \rho_{j_i'}\Gamma_{j_i'}$
 the element $j$ is added to the multiset where $j_i' \in P_i \cap Q_j$).
 \begin{claim}
  $\Phi_{P_1,P_2}^{1,2} = \Iso(\CX_{1,P_1},\CX_{2,P_2})$.
 \end{claim}
 \begin{claimproof}
  Let $\varphi \in \Iso((H_{1,P_1},\chi_{1,P_1}^{H}),(H_{2,P_2},\chi_{2,P_2}^{H}))$ and let $\sigma \colon P_1 \rightarrow P_2$ be the unique bijection such that $Z_{1,j}^{\varphi} = Z_{2,\sigma(j)}$ for all $j \in P_1$.
  Let $j_1 \in P_1$ and consider the labeling coset $\rho_{j_1}\Gamma_{j_1} \in L_{1,P_1}$.
  Let $j_2 \coloneqq \sigma(j_1)$.
  Then $j_1 \simeq j_2$ since $\varphi[Z_{1,j_1}] \in \Iso((H_{1,j_1},\chi_{1,j_1}),(H_{2,j_2},\chi_{2,j_2}))$.
  Let $j^{*}=j_1^*=j_2^*$ be the representative from the equivalence class containing $j_1$ and $j_2$ and pick $i^{*} \in \{1,2\}$ such that $j^{*} \in P_{i^{*}}$.
  Then
  \[\varphi[S_{1,j_1}] \Phi_{j_2,j^{*}}^{2,i^{*}} = \Phi_{j_1,j^{*}}^{1,i^{*}}.\]
  Since $\lambda_{j_1*}=\lambda_{j_2^*}$, this implies that $(\varphi[S_{1,j_1}])^{-1}\rho_{j_1}\Gamma_{j_1} = \rho_{j_2}\Gamma_{j_2}$ and, since the above statement holds for all $j_1 \in P_1$,
  it also means that $\Fp_{1,P_1}(\rho_{j_1}\Gamma_{j_1}) = \Fp_{2,P_2}(\rho_{j_2}\Gamma_{j_2})$ (i.e., equality between labeling cosets is preserved by the mapping $\sigma$).
  
  For the backward direction let $\varphi \in \Iso(\CX_{1,P_1},\CX_{2,P_2})$.
  This means, there is a bijection $\sigma\colon P_1 \rightarrow P_2$ such that
  \begin{enumerate}[label = (\alph*)]
   \item $j_1 \simeq \sigma(j_1)$, and
   \item $\varphi^{-1}\rho_{j_1}\Gamma_{j_1} = \rho_{\sigma(j_1)}\Gamma_{\sigma(j_1)}$
  \end{enumerate}
  for all $j_1 \in P_1$.
  This means that, for every $j_1 \in P_1$, it holds that $\varphi \in \Phi_{j_1,j_2}^{1,2}$.
  But this implies that $\varphi \in \Phi_{P_1,P_2}^{1,2}$.
 \end{claimproof}
 
 Hence, $\Phi_{P_1,P_2}^{1,2}$ can be computed using Theorem \ref{thm:multiple-labeling-cosets-isomorphism}. 
 Next, the algorithm turns to computing $\Iso((G_1,\chi_1,v_1),(G_2,\chi_2,v_2))[D_1]$ from the sets $\Phi_{P_1,P_2}^{i_1,i_2}$, $i_1,i_2 \in \{1,2\}$ and $P_1 \in \CP_{i_1}$, $P_2 \in \CP_{i_2}$.
 
 \medskip
 
 Let $v_1 \in X_1$ be an arbitrary vertex.
 For all $v_2 \in X_2$ the algorithm computes a representation of all
 isomorphisms $\varphi \in \Iso((G_1,\chi_1),(G_2,\chi_2))[D_1]$ such
 that $\varphi(v_1) = \varphi(v_2)$ as described below.
 The output of the algorithm is the union of all these isomorphisms iterating over all $v_2 \in X_2$.
 Additionally, all mappings are restricted to $S_1$ (recall that $S_1 \subseteq D_1$).
 
 Let $D_i' \coloneqq \cl_t^{(G_i',\chi_i')}(v_i)$ for $v_i\in X_i$ (and recall that $D_i = \cl_t^{G}(X_i)$).
 The algorithm first computes a $\mgamma_t$-group $\Gamma \leq \Sym(D_1')$ and a bijection $\gamma\colon D_1' \rightarrow D_2'$ such that
 \[\Iso((G_1',\chi_1',v_1),(G_2',\chi_2',v_2))[D_1'] \subseteq \Gamma\gamma\]
 using Theorem \ref{thm:compute-isomorphisms-t-closure}.
 Note that $X_i \subseteq D_i'$ by Theorem \ref{thm:initial-color}.
 For ease of notation define $\Lambda \coloneqq \Gamma\gamma[X_1]$ (observe that $X_1^{\gamma'} = X_2$ for all $\gamma' \in \Gamma\gamma$).
 In a second step, the algorithm computes another $\mgamma_t$-group $\Delta \leq \Sym(D_1)$ and a bijection $\delta\colon D_1 \rightarrow D_2$ such that
 \begin{align*}
                &\Iso((G_1,\chi_1,v_1),(G_2,\chi_2,v_2))[D_1]\\
  =         \;\;&\{\varphi \in \Iso((G_1,\chi_1),(G_2,\chi_2)) \mid \varphi[X_1] \in \Lambda\}[D_1]\\
  \subseteq \;\;&\Delta\delta
 \end{align*}
 again using Theorem \ref{thm:compute-isomorphisms-t-closure}.
 
 To compute the set of isomorphisms, we now formulate the isomorphism problem between $(G_1,\chi_1,v_1)$ and $(G_2,\chi_2,v_2)$ as an instance of coset-labeled hypergraph isomorphism.
 Let $\CH_i \coloneqq (D_i,\CE_i,\Fp_i)$ where
 \[\CE_i \coloneqq E(G_i[D_i]) \cup \{S_{i,P_i} \mid P_i \in \CP_i\} \cup \{\{v\} \mid v \in D_i\}.\]
 The function $\Fp_i$ is defined separately for all three parts of the set $\CE_i$ (if an element occurs in more than one set of the union, the colors defined with respect to the single sets are combined by concatenating them in a tuple).

 For an edge $vw \in E(G_i[D_i])$ we define $\Fp_i(vw) \coloneqq (\rho_{v,w}\Sym([2]),0)$ with the bijection $\rho_{v,w}\colon \{v,w\} \rightarrow \{1,2\}$ where $\rho_{v,w}(v) = 1$ and $\rho_{v,w}(w) = 2$.
 
 In order to define $\Fp_i$ for sets $S_{i,P_i}$, $P_i \in \CP_i$, we first define an equivalence relation $\approx$ on the disjoint union $\CP_1 \uplus \CP_2$ where $P \approx Q$ if $\Iso((H_{i_1,P},\chi_{i_1,P}^{H}),(H_{i_2,Q},\chi_{i_2,Q}^{H})) \neq \emptyset$ for $P \in \CP_{i_1}$ and $Q \in \CP_{i_2}$.
 Let $\CQ_1,\dots,\CQ_r$ be the equivalence classes.
 For each equivalence class $\CQ_j$ we fix one representative $Q_j^* \in \CQ_j$ and
 pick $i^* \in \{1,2\}$ such that $Q_j^* \in \CP_{i^{*}}$.
 Let $\rho_{Q_j^*} \colon S_{Q_j^*} \rightarrow [| S_{Q_j^*}|]$ be an arbitrary bijection.
 Let $i \in \{1,2\}$, $P_i \in \CP_i \cap \CQ_j$ and define
 \[\rho_{i,P_i}\Gamma_{i,P_i}\coloneqq \Iso((H_{i,P_i},\chi_{i,P_i}^{H}),(H_{i^{*},Q_j^*},\chi_{i^{*},Q_j^*}^{H}))[S_{i,P_i}]\rho_{Q_j^*}.\]
 Now, for $P_i \in \CP_i \cap \CQ_j$, we define
 \[\Fp_i(S_{i,P_i}) \coloneqq (\rho_{i,P_i}\Gamma_{i,P_i},j).\]
 (Intuitively speaking, each separator $S_{i,P_i}$ is associated with a color $j$ and a labeling coset $\rho_{i,P_i}\Gamma_{i,P_i}$.
 The color $j$ encodes the isomorphism type of the graph $H_{i,P_i}$ whereas the labeling coset determines which mappings between separators extend to isomorphisms between the corresponding graphs below the separators.)
 
 Finally, for $v \in D_i$, we define $\Fp_i(v) \coloneqq (v \mapsto 1,\chi_i(v) + r)$ (recall that $r$ denotes the number of equivalence classes $\CQ_1,\dots,\CQ_r$).
 Then
 \[\Iso((G_1,\chi_1,v_1),(G_2,\chi_2,v_2))[D_1] = \Iso_{\Delta\delta}(\CH_1,\CH_2)\]
 which can be computed in the desired time by Theorem \ref{thm:coset-labeled-hypergraphs-gamma-d}.
 
 This completes the description of the algorithm.
 The correctness follows from the statements made throughout the description of the algorithm.
 So it remains to analyze the running time.
 
 First observe that the number of recursive calls the algorithm performs is at most quadratic in the number of vertices of the input graphs.
 One way to see this is to associate vertices from $D_i \setminus S_i$ with the graph $G_i$, and observe that every vertex can be associated with at most one subgraph of the original input graph $G_i$ considered in some recursive call.
 So the number of subgraphs of $G_i$ considered in recursive calls is at most the number of vertices, which means that the total number of recursive calls is at most quadratic in the number of vertices of the input graphs.
 
 Also, $|\CP_i|\leq n$ and $|S_{i,j}|<h$ for both $i \in \{1,2\}$ and all $j\in[k]$.
 Hence, the computation of all sets $\Phi_{P_1,P_2}^{i_1,i_2}$, $P_1 \in \CP_{i_1}$ and $P_2 \in \CP_{i_2}$, requires time $n^{\CO((\log h)^{c})}$ by Theorem \ref{thm:multiple-labeling-cosets-isomorphism}.
 Next, the algorithm iterates over all vertices $v_2 \in X_2$ and computes isomorphisms between coset-labeled hypergraphs using Theorem \ref{thm:coset-labeled-hypergraphs-gamma-d}.
 In total, the algorithm from Theorem \ref{thm:coset-labeled-hypergraphs-gamma-d} is applied $|X_2| \leq n$ times and a single execution requires time $n^{\CO((\log h)^{c})}$.
 Overall, this gives the desired bound on the running time.
\end{proof}

We remark that, by standard reduction techniques, there is also an algorithm computing a representation for the set $\Iso(G_1,G_2)$ in time $n^{\CO((\log h)^{c})}$ assuming $G_1$ excludes $K_h$ as a minor.

We also remark that the proof of the last theorem reveals some insight into the structure of the automorphism group of a graph that excludes $K_h$ as a minor.

Let $G$ be a graph.
A \emph{tree decomposition} for $G$ is a pair $(T,\beta)$ where $T$ is a rooted tree and $\beta\colon V(T) \rightarrow 2^{V(G)}$ such that
\begin{enumerate}
 \item[(T.1)] for every $e \in E(G)$ there is some $t \in V(T)$ such that $e \subseteq \beta(t)$, and
 \item[(T.2)] for every $v \in V(G)$ the graph $T[\{t \in V(T) \mid v \in \beta(t)\}]$ is non-empty and connected.
\end{enumerate}
The \emph{adhesion-width} of $(T,\beta)$ is $\max_{t_1t_2 \in E(T)} |\beta(t_1) \cap \beta(t_2)|$.

Let $v \in V(G)$.
Also, recall that $(\Aut(G))_v = \{\varphi \in \Aut(G) \mid v^{\varphi} = v\}$ denotes the subgroup of the automorphism group of $G$ that stabilizes the vertex $v$.

\begin{theorem}
 \label{thm:structure-aut}
 Let $G$ be a graph that excludes $K_h$ as a minor.
 Then there is an isomorphism-invariant tree decomposition $(T,\beta)$ of $G$ such that
 \begin{enumerate}
  \item the adhesion-width of $(T,\beta)$ is at most $h-1$, and
  \item for every $t \in V(T)$ there is some $v \in \beta(t)$ such that $(\Aut(G))_v[\beta(t)] \in \mgamma_d$ for $d \coloneqq \lceil a^2h^3\log h\rceil$.
 \end{enumerate}
\end{theorem}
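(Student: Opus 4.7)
The plan is to build $(T,\beta)$ by a recursive procedure that essentially distills the decomposition underlying the algorithm of Theorem~\ref{thm:isomorphism-test-excluded-minors}. The procedure takes as input a vertex-colored graph $H$ and an ``interface'' set $S\subseteq V(H)$ with $|S|<h$ (initially $H=G$ and $S=\emptyset$), and returns a rooted tree decomposition of $H$ whose root bag contains $S$. At each call, setting $t\coloneqq \lceil(ah\log h)^3\rceil$, I would apply Theorem~\ref{thm:initial-color} to $H$ (equipped with a coloring that distinguishes $S$ from $V(H)\setminus S$) to obtain an isomorphism-invariant set $X\subseteq V(H)$, and define the root bag as $D\coloneqq\cl_t^H(X)\cup S$; the inclusion $S\subseteq D$ uses $|S|<h\leq t$. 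If $D$ is degenerate (too small, or $H-D$ still connected), the construction is iteratively enlarged exactly as in the algorithm of Theorem~\ref{thm:isomorphism-test-excluded-minors}, with the base case $|V(H)|<h$ collapsing to a single bag. Theorem~\ref{thm:small-separator-for-t-cr-bounded-closure} then guarantees $|N_H(Z)|<h$ for every connected component $Z$ of $H-D$; the procedure recurses on each $H[Z\cup N_H(Z)]$ with interface $N_H(Z)$ and attaches the returned subtrees as children of the current root.

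The bookkeeping for the structural conditions is routine. Conditions (T.1) and (T.2) follow because every edge of $H$ lies either inside $D$ or inside some $H[Z\cup N_H(Z)]$, and because $N_H(Z)$ appears simultaneously in the parent bag and in the root bag of the corresponding child subtree. The adhesion between a bag and any of its children equals $N_H(Z)$ and so has size at most $h-1$. Isomorphism-invariance is inherited from Theorem~\ref{thm:initial-color}, from the closure operator, and from the canonical enumeration of connected components. For the $\mgamma_d$ property of a bag $D$, I would pick $v\in X$; Properties~\ref{item:initial-color-1} and~\ref{item:initial-color-2} of Theorem~\ref{thm:initial-color} combined with Theorem~\ref{thm:compute-isomorphisms-t-closure} (applied first to lift from $\{v\}$ to $X$, then to lift from $X$ to $\cl_t^H(X)$) give $(\Aut(H))_v[D]\in\mgamma_t\subseteq\mgamma_d$, and Lemma~\ref{la:gamma-d-closure} absorbs the addition of the (at most $h-1$) vertices of $S$. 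To transfer this to $(\Aut(G))_v[D]$, it suffices that $v$ lies in the ``interior'' of the local subgraph $H$ at every level of the recursion: then every $\varphi\in(\Aut(G))_v$ preserves each ancestor bag (which is isomorphism-invariant) and hence the unique component of $G$ minus the ancestor bag that contains $v$, yielding a well-defined restriction $(\Aut(G))_v\to(\Aut(H))_v$; Lemma~\ref{la:gamma-d-closure} closes the argument.

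The main obstacle is guaranteeing that the chosen anchor vertex $v$ lies in the interior of the local subgraph at every level of the recursion. If an invocation of Theorem~\ref{thm:initial-color} happens to return an $X$ contained entirely in the inherited interface $S$, then $v\in S$ could be moved between distinct components of an ancestor level by a global automorphism, breaking the restriction argument above. I would resolve this either by individualizing all of $S$ before invoking Theorem~\ref{thm:initial-color} inside the recursion (so that the smallest nonsingleton color class inspected by the proof of Theorem~\ref{thm:initial-color} is automatically in $V(H)\setminus S$), or by a minor adaptation of that proof restricting the candidate colors to those disjoint from $S$. Both modifications are benign because $|S|<h\leq t$, so any color class contained in $S$ is individualized by the closure anyway and only enlarges the bag by at most $h-1$ vertices. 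Once the anchor is guaranteed to sit in the interior, the remainder of the argument is a mechanical combination of the tools already developed in Sections~\ref{sec:group-machinery} and~\ref{sec:exploit-minor}.
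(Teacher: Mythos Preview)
Your proposal is correct and takes essentially the same approach as the paper. The paper's (terse) sketch likewise takes the recursion tree of the algorithm in Theorem~\ref{thm:isomorphism-test-excluded-minors} as $T$, sets $\beta(t)\coloneqq D_1$, and chooses the anchor $v\in X_1\setminus S_1$; the ``anchor in the interior'' obstacle you single out is exactly the point the paper addresses by observing that the algorithm's iterative recomputation---which you already invoke---produces the next $X_i^{1}$ on $G_i-D_i$, hence disjoint from $S_i$, so no further modification of Theorem~\ref{thm:initial-color} is needed.
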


The theorem readily follows from the same arguments used to prove Theorem \ref{theo:iso}.
Indeed, consider the recursion tree $T$ of the algorithm from Theorem \ref{theo:iso} on input $(G,G)$ where each node $t \in V(T)$ is associated with the corresponding set $\beta(t) \coloneqq D_1$ (see also Figure \ref{fig:dec}).
For $t \in V(T)$ let $v \in X_1 \setminus S_1 \subseteq D_1$ (recall that $S_1 = \beta(s) \cap \beta(t)$ where $s$ is the unique parent node of $t$).
Then $D_1^{\gamma} = D_1$ for all $\gamma \in (\Aut(G))_v$ and $(\Aut(G))_v[\beta(t)] \in \mgamma_d$.
Finally, observe that $X_1 \setminus S_1 \neq \emptyset$ (in a situation where $X_1 \subseteq S_1$, it also holds that $D_1 \subseteq S_1$ and the algorithm from Theorem \ref{theo:iso} would recompute a set $X_i^{1}$).

\section{Conclusion}

We presented an isomorphism test for graph classes that exclude $K_h$ as a minor running in time $n^{\polylog(h)}$.
The algorithm builds on group-theoretic methods from \cite{Neuen22b,Wiebking20} as well as novel insights on the isomorphism-invariant structure of graphs excluding the minor $K_h$.

In a follow-up work \cite{Neuen22}, the second author could show that
most of the results presented in this work can be extended to graph
classes that only exclude $K_h$ as a topological subgraph.  Actually,
most of the techniques developed here already extend to classes that
only exclude $K_h$ as a topological subgraph rather than as a minor.
In particular, this includes Theorem \ref{thm:small-separator-for-t-cr-bounded-closure}.
The only part of our algorithm that exploits closure under taking minors is the
subroutine from Theorem \ref{thm:initial-color} which provides the
initial set $X$ together with sufficient structural information on
this set. It turns out that this theorem can also be extended to graphs
only excluding $K_h$ as a topological subgraph, but this comes at the
price of a much more complicated analysis which also builds on
different tools.

In another related work \cite{GroheSW20}, Schweitzer together with the first and third author of this paper investigates the structure of automorphism groups of graphs excluding $K_h$ as a minor in more detail.
This work confirms a conjecture of Babai \cite{Babai81} stating that all composition factors of such groups are cyclic groups, alternating groups, or their size is bounded by $f(h)$ for some function $f$.
Observe that our structural insights summarized in Theorem \ref{thm:structure-aut} do not imply such a statement, since the restrictions only take effect after individualizing some vertex.
On the other hand, Theorem \ref{thm:structure-aut} provides polynomial bounds on the complexity of the composition factors (after individualizing some vertex) whereas the bounds from \cite{GroheSW20} may depend arbitrarily on $h$. 

Yet another recent result related to this work by Lokshtanov et al.\ \cite{LokshtanovPPS22} shows that the graph isomorphism problem is also fixed-parameter tractable (i.e., it can be solved in time $f(h) \cdot n^c$) when parameterized by the Hadwiger number (the maximum $h$ such that $K_h$ is a minor).
Note that our result is independent of this fpt result, because our algorithm is obviously not fpt, but it also has no exponential dependence on $h$ (in fact, the function $f$ obtained in \cite{LokshtanovPPS22} may not even be computable).
Running times of the form $n^{\polylog(k)}$ for parameterized problems with input size $n$ and parameter $k$ so far seem to be quite specific to the isomorphism problem.
It may be worthwhile to study them more systematically in a broader context.
More specifically, looking at the isomorphism problem, a natural question concerns the existence of isomorphism algorithms running in time $n^{\polylog(k)}$ for other graph parameters.
As a concrete example, can isomorphism of graphs of rank-width $k$ be tested in time $n^{\polylog(k)}$?

\bibliographystyle{plainurl}
\small
\bibliography{literature}

\end{document}